\documentclass[aps,prx,secnumarabic,nofootinbib]{revtex4-2}

\usepackage{paper-style}

\newcommand{\TEpsilon}{\tilde {\mathcal E}}
\newcommand{\nbs}{\nobreakspace{}}
\newcommand{\TEpsilonQI}{\tilde {\mathcal E}_{\rm QI}}

\newcommand{\hilbert}{\mathcal{H}}

\newcommand{\Tr}{\operatorname{Tr}}
\newcommand{\Trivmap}{\operatorname{T}}

\newcommand{\divG}{\frac{1}{\text{Vol}(G)}}

\newcommand{\volg}{\text{Vol}(G)}

\usepackage{bbm} 

\def\qed{$\,\blacksquare$\par}

\def\Tr{\operatorname{Tr}}

\def\T+{\mathsf{T}_+}

\usepackage{stackengine}
\usepackage[normalem]{ulem}
\DeclareSymbolFont{sfletters}{OML}{cmbrm}{m}{it}
\DeclareMathSymbol{\srho}{\mathord}{sfletters}{"1A}
\DeclareMathSymbol{\salpha}{\mathord}{sfletters}{"0B}
\DeclareMathSymbol{\sbeta}{\mathord}{sfletters}{"0C}
\DeclareMathSymbol{\sgamma}{\mathord}{sfletters}{"0D}
\DeclareMathSymbol{\sdelta}{\mathord}{sfletters}{"0E}
\DeclareMathSymbol{\sepsilon}{\mathord}{sfletters}{"0F}
\DeclareMathSymbol{\szeta}{\mathord}{sfletters}{"10}
\DeclareMathSymbol{\seta}{\mathord}{sfletters}{"11}
\DeclareMathSymbol{\stheta}{\mathord}{sfletters}{"12}
\DeclareMathSymbol{\siota}{\mathord}{sfletters}{"13}
\DeclareMathSymbol{\skappa}{\mathord}{sfletters}{"14}
\DeclareMathSymbol{\slambda}{\mathord}{sfletters}{"15}
\DeclareMathSymbol{\smu}{\mathord}{sfletters}{"16}
\DeclareMathSymbol{\snu}{\mathord}{sfletters}{"17}
\DeclareMathSymbol{\sxi}{\mathord}{sfletters}{"18}
\DeclareMathSymbol{\spi}{\mathord}{sfletters}{"19}
\DeclareMathSymbol{\srho}{\mathord}{sfletters}{"1A}
\DeclareMathSymbol{\ssigma}{\mathord}{sfletters}{"1B}
\DeclareMathSymbol{\stau}{\mathord}{sfletters}{"1C}
\DeclareMathSymbol{\supsilon}{\mathord}{sfletters}{"1D}
\DeclareMathSymbol{\sphi}{\mathord}{sfletters}{"1E}
\DeclareMathSymbol{\schi}{\mathord}{sfletters}{"1F}
\DeclareMathSymbol{\spsi}{\mathord}{sfletters}{"20}
\DeclareMathSymbol{\somega}{\mathord}{sfletters}{"21}
\DeclareMathSymbol{\svarepsilon}{\mathord}{sfletters}{"22}
\DeclareMathSymbol{\svartheta}{\mathord}{sfletters}{"23}
\DeclareMathSymbol{\svarpi}{\mathord}{sfletters}{"24}
\DeclareMathSymbol{\svarrho}{\mathord}{sfletters}{"25}
\DeclareMathSymbol{\svarsigma}{\mathord}{sfletters}{"26}
\DeclareMathSymbol{\svarphi}{\mathord}{sfletters}{"27}

\newcommand\Item[1][]{%
  \ifx\relax#1\relax  \item \else \item[#1] \fi
  \abovedisplayskip=0pt\abovedisplayshortskip=0pt~\vspace*{-\baselineskip}}

\begin{document}
\title{Frame-Dependent Traces and the Third-Particle Paradox}
\author{Alessandro Palumbo}
\email{apalumbo@student.ethz.ch}
\affiliation{Institute for Theoretical Physics, ETH Zürich, 8093 Zürich, Switzerland}
\author{Luca Apadula}

\email{luca.apadula@telecom-paris.fr}
\affiliation{LTCI, Inria, Télécom Paris, Institut Polytechnique de Paris, 91120 Palaiseau, France}

\begin{abstract}

The Paradox of the Third Particle arises when comparing subsystem descriptions across Quantum Reference Frame (QRF) perspectives. We isolate two distinct origins of the Paradox: the QRF covariance of the partial trace and the failure of the physical Hilbert space to inherit the kinematical tensor-product structure. We give an explicit counterexample to the Relational Trace (RT) resolution: an uncorrelated product state for which the RT statistical condition trivialises. We then introduce a new statistical consistency condition comparing subsystem discarding between external and internal QRFs, together with an associated frame-dependent map, the Perspective Relational Trace (PRT). We argue that our condition captures the operational content of the Paradox: rather than imposing consistency on the whole state space, we characterise exactly the states on which it holds in the Perspective-Neutral (PN) and Quantum-Information (QI) approaches. This separates three levels of description: a PN subsystem of a PN whole, where consistency fails on a characterised set that includes product states; a QI subsystem of a QI whole, where it holds for all states; and a QI subsystem obtained from a PN whole by kinematical partial trace, where the full weakly invariant algebra is recovered, yet consistency holds only on a proper subset. These results show that the PN approach can consistently describe only a closed, isolated system, while the QI approach can accommodate arbitrary subsystems. Tracing out a subsystem from a globally PN state yields a charge-superselected algebra, reproducing in a minimal QRF model the boundary-charge structure of edge modes. We understand the Paradox not as a genuine contradiction, but as the consequence of comparing inequivalent physical layers without tracking which information is externally and which internally accessible.

\end{abstract}

\maketitle

\section{Introduction}

The formalism of Quantum Reference Frames (QRFs) has received increasing attention in recent years from the quantum foundations\nbs\cite{Giacomini2019,Angelo_2011,paradoxmuller,Vanrietvelde2020ChangePerspective,vantrietvelde_switching_2018,Giacomini2019RelativisticQRF,CastroRuiz2020QuantumClocks,deLaHamette2020QRF,CastroRuiz2025RelativeSubsystems,delahamette2021perspectiveneutral,delahamette2021entanglementasymmetry,delahamette2022quantum,Apadula2024QuantumReferenceFramesLorentz,Doat2025SymmetryConstrained,garmier2025perspectivesnonidealquantumreference,Apadula2026FramePerspectives,delahamette2025quantumreferenceframesarbitrary,ballesteros_2020,cepollaro2024}, quantum information\nbs\cite{spekkens,loveridge_symmetry_2018,Loveridge2017RelativityObservables,Carette2025operationalquantum,deLaHamette2026ObserverDependentEntropy,Gour2008ResourceTheory,Palmer2014ChangingQRF,Smith2016NoncompactGroups,poulin_dynamics_2007,Miyadera2016AccuracyTradeoff,glowacki,Giacomini2021SpacetimeQRF,paololuppi}, and quantum gravity\nbs\cite{Hoehn2021TrinityRelational,hoehn2021quantum,mikusch2021transformation,hoehn2021internal,Cepollaro2021QuantumEquivalence,delaHamette2021falling,Kabel2022ConformalSymmetries,kabel2023quantum,Hoehn_2023_subsystems,Kabel2025QuantumCoordinates,Chen2026QuantumReferenceFields,hoehn_switching_2019,Hoehn2021EquivalenceRelational,Streiter_2020,devuyst2024gravitationalentropyobserverdependent,fewster2024,DeVuyst2025CrossedProductsQRF,Rothlin2026ErrorCorrectionLatticeQED,Lacambra2026GaussLawCodes,AraujoRegado:2025relational} communities. The main idea of QRFs is to extend the notion of reference frames to the quantum domain\nbs\cite{Giacomini2019}, allowing for transformations between different QRF perspectives. The framework of QRFs has allowed us to rethink observables, subsystems, and entropies in gravity and gauge theories in a relational and gauge-invariant fashion\nbs\cite{Hoehn_2023_subsystems,hoehn2021quantum,AraujoRegado:2025relational,Hoehn2021TrinityRelational,devuyst2024gravitationalentropyobserverdependent,DeVuyst2025CrossedProductsQRF,fewster2024,deLaHamette2026ObserverDependentEntropy}. Different approaches to QRFs have been developed, such as the Perspective-Neutral (PN)\nbs\cite{vantrietvelde_switching_2018,Vanrietvelde2020ChangePerspective,delahamette2021perspectiveneutral,Hoehn2021TrinityRelational,hoehn2021internal}, Extra-Particle\nbs\cite{CastroRuiz2025RelativeSubsystems,garmier2025perspectivesnonidealquantumreference}, Operational\nbs\cite{Miyadera2016AccuracyTradeoff,glowacki2024quantumreferenceframeshomogeneous,Loveridge2017RelativityObservables,loveridge_symmetry_2018}, Algebraic\nbs\cite{Bojowald2021QuantizationDynamicalSymplecticReduction,Bojowald2023AlgebraicPropertiesQRF,Bojowald2023FrozenFormalismProblemOfTime}, and Effective\nbs\cite{Bojowald2009EffectiveConstraintsQuantumSystems,Bojowald2009EffectiveConstraintsRelativisticQuantumSystems,Bojowald2011EffectiveApproachProblemTime,Bojowald2011EffectiveApproachProblemTimeGeneralFeatures,Hoehn2012EffectiveRelationalDynamicsNonintegrableCosmologicalModel} approaches. These frameworks rest on different underlying assumptions and address different physical questions, and therefore yield different physical results\nbs\cite{castroruiz2025interpretingquantumreferenceframe,DeVuyst2025PerspectiveNeutralAlgebraicEffectiveQRF}.
In this work we discuss a problem known as the \emph{Third-Particle Paradox}, first introduced in Ref.\nbs\cite{Angelo_2011}. The authors of Ref.\nbs\cite{Angelo_2011} construct a scenario in which the following natural physical intuition appears to fail: a system that is uncorrelated with the degrees of freedom one cares about should have no bearing on the information accessible about those degrees of freedom. One starts with two particles and encodes some information in their relational degrees of freedom, that is, in the configuration of one particle relative to the other. From the perspective of one of the two particles, which we take to be particle~1, this information is accessible. One then enlarges the description by introducing a third particle, prepared in a state uncorrelated with the first two and possibly far away from them. We would expect this addition to be innocuous. Yet, when the description of the relative configuration of particles~1 and~2 is reconstructed in the presence of the third particle, the information that was previously accessible to particle~1 is found to be lost. The inclusion of an uncorrelated particle thus appears to affect what an observer can learn about a subsystem with which that particle has nothing to do. This is troubling because it strikes at the consistency of the QRFs formalism itself: if whether or not one accounts for an irrelevant particle changes the physics accessible from a given QRF, then the very notion of a frame-relative description of a subsystem is called into question.

The paradoxical situation introduced in Ref.\nbs\cite{Angelo_2011} can be naturally framed, in a conceptually equivalent scenario, within the PN approach to QRFs. The Paradox was subsequently analysed by the authors of Ref.\nbs\cite{paradoxmuller}, who propose an alternative resolution via the so-called Relational Trace (RT), a notion of subsystem-discarding defined directly at the level of relational degrees of freedom, and that seems to dissolve the source underlying the Paradox. In the present work, we exhibit an explicit counterexample to this resolution, where the RT condition becomes trivial, showing that satisfying the RT condition does not by itself certify an operational resolution of the Paradox. This leaves open a set of questions that motivate our analysis: what is the true origin of the Paradox, whether it signals a genuine defect of the QRF formalism or instead a mismatch in how subsystems are compared across perspectives, and which notion of subsystem discarding---if any---resolves it.

We start our contributions by analysing precisely why, and how, the Paradox emerges. We identify two conceptually distinct mechanisms behind it. The first is that, when we discard the third uncorrelated particle, we are not accounting for the QRF covariance of quantum channels---and hence the partial trace. We resolve this first problem straightforwardly by introducing the \emph{Perspective Trace}, a discarding operation that transforms covariantly under QRF transformations. The second, more subtle facet is specific to the PN framework: the projection onto the physical Hilbert space does not inherit the kinematical TPS, so that tracing out a subsystem before and after physicalisation are, in general, inequivalent operations, and no standard notion of partial trace exists on the physical subspace. At first sight, this second problem can be resolved via the RT\nbs\cite{paradoxmuller}. However, as discussed above, this discarding operation does not seem to capture the operational content of the Paradox, and fails on a class of states.

We develop a new framework by reformulating the operational content of the Paradox via a novel statistical consistency condition, together with a corresponding notion of subsystem discarding, the \emph{Perspective Relational Trace}. A distinctive feature of our approach is that we do not require the condition to hold on the whole state space; instead, we characterise exactly the states on which it can be satisfied. We argue that the fact that our condition is not satisfied for all states in the PN approach is not a shortcoming of our construction but its \emph{key} feature: it singles out precisely those configurations for which the subsystem relational information accessible to an \emph{external} frame---one defined prior to projection onto the physical Hilbert space---cannot be recovered \emph{internally}, i.e.\ from a QRF with access only to relational degrees of freedom, once the third particle is discarded internally.

Our analysis separates three levels at which subsystem~12 can be described once particle~3 is discarded. \emph{(i)} When both the total system and the subsystem are described in the PN approach, the resolution fails on a characterised set of states, including product states for which particle~3 is uncorrelated with subsystem~12; the PN approach thus consistently describes a closed system---the \enquote{whole universe}---but cannot, in general, reduce to a consistent subsystem description. \emph{(ii)} When the QI (Extra-Particle) description is adopted from the outset, for both the total system and the subsystem, the condition holds for \emph{all} states: the relational information about~12 accessible externally, to Eve, coincides with that recovered internally after discarding particle~3. \emph{(iii)} In the intermediate case, the total system is described in the PN approach and the subsystem~12 is reached through the kinematical partial trace; here we make precise the sense in which a QI description \emph{emerges} from a PN one, as at the level of operator algebras the partial trace maps the physical ideal onto the entire weakly invariant ideal (the trace-class operator space of the QI approach). This emergence, however, is not faithful state by state: there remain states whose externally accessible subsystem relational information is not recovered internally after tracing, so that an instance of the Paradox persists. Our findings thus highlight that the weakly invariant description is the appropriate one for proper subsystems, whereas the PN approach is not stable under subsystem discarding, and can only describe the \enquote{whole universe}, in line with the discussion of Ref.\nbs\cite{CastroRuiz2025RelativeSubsystems}. 

Finally, we read the Third-Particle Paradox in the PN approach as a toy model providing a concrete QRF realisation of the argument of Ref.\nbs\cite{Rovelli2014WhyGauge}. Therefore, we argue that this mechanism underlying the Paradox also persists at the classical level, complementing the classical perspectival formulation recently established in Ref.~\cite{rennerhausmann}. Furthermore, we identify the QI emergence from the kinematical partial trace of a PN state as the analogue, in a minimal QRF setting, of the boundary-charge superselection structure familiar from the edge-mode program\nbs\cite{Donnelly:2011hn,Donnelly2014nonabelian,Donnelly_2016,CasiniHuertaRosabal2014,Soni:2015yga,VanAcoleyen:2015ccp,Delcamp:2016eya,Carrozza2024EdgeModesDynamicalFrames,AraujoRegado2025SoftEdges,AraujoRegado:2025relational}. 

We conclude by arguing that the Paradox is not a genuine contradiction, but the consequence of comparing distinct, inequivalent layers of physical description. Our analysis also clarifies in which physical scenarios each of the two QRF frameworks is more appropriate.

\par\medskip
\noindent\textbf{Note.}\quad
The present paper is mostly based on A.P.'s Semester Project at ETH Zürich  (October 2025)~\cite{Palumbo2026FrameDependentTraces}.

\section{Quantum Reference Frames and the Covariance of Quantum Channels}\label{sec:qrf_frameworks}
In this section, we provide an overview of the main approaches to QRFs and introduce the notation used in the following sections. We consider a physical scenario with an associated symmetry\footnote{The group $G$ can be associated with either a gauge or a physical symmetry, as discussed below.} group $G$. We assume $G$ to be a compact Lie group. We describe $n$ particles, whose corresponding kinematical Hilbert space reads
\begin{equation}
    \hilbert \;=\; \bigotimes_{i=1}^{n} \hilbert_i,
    \qquad \text{with } \ \hilbert_i = L^2(G).
\end{equation}
Each of the particles is therefore a complete and ideal QRF. This means that for each frame there exists a coherent state system $\{\ket{g}_i\}_{g\in G}$ on which $G$ acts both transitively and freely\nbs\cite{delahamette2021perspectiveneutral}. Furthermore, we assume that the reference frame associated with particle $i$ transforms under the left-regular representation of $G$, denoted by $U_i(g)$. Since the QRFs are ideal, the coherent states are perfectly distinguishable $
    \braket{g|g'}_i \;=\; \delta(gg'^{-1})$. The left action of $G$ on the coherent state system is given by $  U_i(g)\,\ket{g'}_i \;=\; \ket{gg'}_i$. 
The left-regular representation acts on wavefunctions $\psi \in L^2(G)$ as
$(U_i(g)\psi)(g') \;=\; \psi(g^{-1}g')$. 
We note that for continuous Lie groups $G$, the kets $\ket{g}$ are not elements of $L^2(G)$;\footnote{For continuous $G$, the kets $\{\ket{g}_i\}_{g\in G}$ are distributions in the sense of rigged Hilbert spaces\nbs\cite{delahamette2021perspectiveneutral,schwartz1950theorie,gelfand2016generalized}.} one typically works in the wavefunction representation. In what follows, however, we will nevertheless act directly on $\ket{g}$ to treat continuous and discrete cases simultaneously. The Peter--Weyl theorem allows us to decompose $\hilbert_i$ as 
\begin{equation} \label{eq:HS_decomposition}
    \hilbert_i = \bigoplus_{q\in \hat G} \hilbert_{i_L}^{(q)} \otimes \hilbert_{i_R}^{(q)}
\end{equation}
where $\hat G$ denotes the set of irreps of the group, labelled by $q$. The kets $\left\{\ket{g}_i\right\}_{g \in G}$ admit a decomposition in an orthonormal basis of irreducible representations of $G$~\cite{PhysRevA.69.052326} $
\ket{g}_i = \sum_{q,m,n} \sqrt{\frac{d_q}{\volg}} D^{(q)}_{mn} (g) \ket{q,m,n}_i$ 
where $\left\{\ket{q,m,n}_i\right\}$ is an orthonormal basis for $L^2(G)$, $d_q$ is the dimension of the irrep $q$, $\text{Vol}(G) = \int_G dg$ and $D^{(q)}_{mn}(g)$ are the matrix elements of $U_i^{(q)}(g)$, corresponding to the irreducible sectors in the decomposition of $U_i(g)$. The left-regular representation acts on the gauge index $m$, whereas the right-regular representation acts on the multiplicity index $n$. 
Furthermore, the coherent state system $\left\{\ket{g}_i\right\}_{g \in G}$ provides a resolution of the identity \cite{delahamette2021perspectiveneutral}. Hence, any bounded operator $T_i \in\mathcal{B}(\hilbert_i)$ can be represented in the group basis by its kernel $\bra{g}T_i\ket{g'}$, understood in the distributional sense for continuous $G$. States are positive semidefinite, unit-trace, trace-class operators on $\hilbert_i$, and therefore form a convex subset of $\mathcal{B}_1(\hilbert_i)$.\footnote{Since observables are bounded and states are trace-class, products of the form $A\rho$ are trace-class and expectation values $\Tr(A\rho)$ are finite.}

In this paper, we adopt the terminology introduced in Ref.~\cite{Doat2025SymmetryConstrained}: the coherent group averaging is referred to as \emph{strong twirling}, while the incoherent group averaging ($G$-twirl) is referred to as \emph{weak twirling}.

\subsection{Perspective neutral approach: strong twirling}\label{sec:framework_PN}

We begin by reviewing the Perspective-Neutral (PN) approach to quantum frame covariance
\cite{Vanrietvelde2020ChangePerspective, vantrietvelde_switching_2018, delahamette2021perspectiveneutral, Hoehn2021TrinityRelational,  hoehn2021internal}. In this framework, the starting point is the perspective of an external, possibly fictitious, reference frame, denoted here by Eve. This frame is assumed to encode a global external description of the $n$ particles, whereas each particle only has access to the degrees of freedom of the other particles relative to itself. Crucially, the particles do not have access to Eve and cannot reconstruct Eve's external view, but only relational degrees of freedom between the particles. The particles are therefore referred to as \emph{internal frames}, whereas Eve plays the role of an \emph{external frame}. Within the PN approach, Eve is typically regarded as an auxiliary, unphysical perspective. Accordingly, this external reference frame is interpreted as pure \emph{gauge}, and only relational degrees of freedom in the trivial-charge sector of the gauge group are taken to be physical. Formally, this corresponds to projecting the kinematical state space onto the zero-charge sector of the gauge group.

The perspective of Eve defines the kinematical Hilbert space, while the relational degrees of freedom between the particles are encoded in the physical Hilbert space. A kinematical vector $\ket{\psi}\in\hilbert$ is mapped to its physical description via the coherent group averaging, or strong $G$-twirling, through the map $\Pi:\hilbert\mapsto\hilbert$ defined as
\begin{equation}
    \Pi = \divG\int_G dg \, U(g).
\end{equation}
For compact Lie groups, this defines an orthogonal projector onto the strongly invariant (physical) subspace. We define  $\hilbert^{\rm phys}:=\Pi\hilbert$. Indeed, the finiteness of the Haar volume, together with the left- and right-invariance of the Haar measure, implies $\Pi^2=\Pi$ and $\Pi^\dagger=\Pi$.\footnote{Thus, for compact $G$, the physical Hilbert space is a closed subspace of the kinematical Hilbert space. For non-compact groups, this does not hold: coherent group averaging generally fails to define a bounded projector on $\hilbert$, and the physical Hilbert space must instead be constructed through procedures such as RAQ or co-invariants\nbs\cite{delahamette2021perspectiveneutral,DeVuyst2025CrossedProductsQRF}.}

At the operator level, the algebra of trivial-charge operators is represented on the kinematical Hilbert space as\footnote{Since $\Pi\in\mathcal{B}(\hilbert)$, the compression $\Pi\mathcal{B}(\hilbert)\Pi$ is an algebra with unit $\Pi$ isomorphic to $\mathcal{B}(\hilbert^{\rm phys})$.}
\begin{equation}
    \mathcal{B}(\hilbert^{\rm phys}) 
    \;\simeq\; 
    \Pi\mathcal{B}(\hilbert)\Pi
    \;=\; 
    \bigl\{\,T \in \mathcal{B}(\hilbert) \;\big\vert\; U(g)T=T U(g)=T \quad\forall\,g \in G \bigr\},
\end{equation}
and we denote operators in this algebra as \emph{strongly invariant}.\footnote{One could also work with strongly invariant states and weakly invariant observables (defined in the next subsection). This does not change expectation values \cite{delahamette2021perspectiveneutral}.} It is then convenient to introduce the compression map $\Phi:\mathcal{B}(\hilbert)\to \Pi\mathcal{B}(\hilbert)\Pi$
\begin{equation}
    \Phi[T] := \Pi T \Pi.
\end{equation}
This map projects kinematical operators onto the physical sector.\footnote{Since $\Pi\mathcal{B}(\hilbert)\Pi$ is isomorphic to $\mathcal{B}(\hilbert^{\rm phys})$, the compressed operator may equivalently be regarded as an operator on $\hilbert^{\rm phys}=\Pi\hilbert$.} On trace-class operators, the same compression defines a CP trace-non-increasing map $\mathcal{B}_1(\hilbert)\to\mathcal{B}_1(\hilbert^{\rm phys})$. We also introduce the normalised projection
\begin{equation}
    \widehat{\Phi}[\rho]
    :=
    \frac{\Pi\rho\Pi}{\Tr[\rho\Pi]}
\end{equation}
whenever $\Tr[\rho\Pi]\neq 0$. For later convenience, we extend this definition to the zero-weight case by setting $\widehat{\Phi}[\rho]:=0$ whenever $\Tr[\rho\Pi]=0$. Note that $\widehat{\Phi}$ is trace-preserving whenever its denominator is nonzero but is nonlinear and therefore is not a CP map.

We shall use the following notation to distinguish global projections from projections acting on individual kinematical tensor factors. Given a kinematical tensor factor $i$, we denote by $S|i$ the collection of all remaining kinematical tensor factors. We then define the projector onto the trivial-charge sector of subsystem $i$:
\begin{equation}
    \Pi_{i} := \divG\int_G dg \ \mathbb{I}_{S|i}\otimes U_i(g),
\end{equation}
together with the associated compression map
\begin{equation}
    \Phi^{(i)}[T] := \Pi_{i}T \Pi_{i}.
\end{equation}
Here $\Phi^{(i)}$ maps kinematical operators to operators supported on the trivial-charge sector of subsystem $i$. For states with nonzero weight in this sector, i.e. whenever $\Tr[\rho\Pi_i]\neq 0$, we also define the normalised projection
\begin{equation}
    \widehat{\Phi}^{(i)}[\rho]
    :=
    \frac{\Pi_i\rho\Pi_i}{\Tr[\rho\Pi_i]},
\end{equation}
and extend it to the zero-weight case by setting $\widehat{\Phi}^{(i)}[\rho]:=0$ whenever $\Tr[\rho\Pi_i]=0$. 

On the physical Hilbert space, one can map to the perspective of the frame associated with particle $i$ via the Schr\"odinger reduction map $\mathcal{R}_{S|i}^{(i)}(g):\hilbert^{\rm phys}\to \hilbert_{S|i}$, defined by \cite{delahamette2021perspectiveneutral}
\begin{equation}
    \mathcal{R}_{S|i}^{(i)}(g) := \sqrt{\text{Vol}(G)} \,\bra{g}_i \otimes \mathbb{I}_{S|i}.
\end{equation}
For complete ideal frames, this map defines a unitary isomorphism between $\hilbert^{\rm phys}$ and $\hilbert_{S|i}$\nbs\cite[Corollary 2]{delahamette2021perspectiveneutral}.\footnote{Generally, the reduction map is a unitary from the physical space to its image. When the frame is complete and ideal, the image of the reduction map is the full system space $\hilbert_{S|i}$.} In App.\nbs\ref{app:pn_details}, we provide further mathematical details of the PN approach. For ideal frames, the internal QRF transformations of the PN framework coincide with those of the perspectival approach\nbs\cite{Giacomini2019,deLaHamette2020QRF}, implemented by the unitary $U_{i\rightarrow j}:\hilbert_{S|i}\rightarrow\hilbert_{S|j}$ (see App.\nbs\ref{app:pn_details}).

\subsection{Quantum information approach: weak twirling}\label{sec:QI-approach}

We now discuss what is commonly referred to as the extra-particle approach to quantum reference frames \cite{CastroRuiz2025RelativeSubsystems}. In this work, we refer to it as the Quantum Information (QI) approach to QRFs and to its associated symmetry-averaging operation as weak twirling. In analogy with the PN framework, we start from a kinematical Hilbert space describing the perspective of an external reference frame Eve. Crucially, in the QI approach Eve is not interpreted as an unphysical gauge frame. Rather, she represents an external frame resource that is inaccessible to the internal reference frames associated with the $n$ particles.
As a consequence, the information accessible to the internal frames is encoded in the \emph{invariant algebra}
\begin{equation}
    \mathcal{B}(\hilbert)^G
    \;:=\; 
    \bigl\{\,T \in \mathcal{B}(\hilbert) \;\big\vert\; [\,U(g),\,T\,] = 0 \quad\forall\,g \in G \bigr\}.
\end{equation}
We denote operators in this algebra as \emph{weakly invariant}. Equivalently, $\mathcal{B}(\hilbert)^G$ is the image of the weak $G$-twirling map\footnote{One can easily prove that $\mathcal{B}(\hilbert)^G = \text{Im}(\mathcal{G})$.} $\mathcal{G}:\mathcal{B}(\hilbert)\rightarrow\mathcal{B}(\hilbert)^G$, defined by
\begin{equation}
    \mathcal{G}[T] \;=\; \divG\int_G \!dg\;U(g)\,T\,U(g)^\dagger.
\end{equation}
For compact $G$, this map is a unital completely positive idempotent map. On trace-class operators, the same formula defines a CPTP map on states.
This admits a clear information-theoretic interpretation: since the orientation of the external reference frame is inaccessible to the internal frames, one averages over all possible orientations. The weak $G$-twirl implements precisely this averaging procedure, and its image corresponds to the information accessible to the internal frames in the QI approach. We define the weak $G$-twirl acting on subsystem $i$ by
\begin{equation}
    \mathcal{G}^{(i)}[T]
    := \divG
    \int_G dg \,
    \left(\mathbb{I}_{S|i} \otimes U_i(g)\right)
    T
    \left(\mathbb{I}_{S|i} \otimes U_i(g)^\dagger\right).
\end{equation}

To describe weakly invariant operators from the perspective naturally associated with particle $i$, one can pass to the corresponding refactorisation by means of the trivialisation map. More precisely, let
\begin{equation}
\operatorname{T}_i^{(S|i)}
:=
\int_G dg \;
\ket{g}\bra{g}_{i} \otimes U_{S|i}^\dagger(g),
\end{equation}
and let $F_{C,i}$ denote the Hilbert-space isomorphism that relabels the reference-frame factor $i$ as $C$. We define the corresponding superoperator by
\begin{equation}
    \mathcal{V}_i^{(S|i)}[T]
    :=
    V_i^{(S|i)} T V_i^{(S|i)\dagger},
    \qquad
    V_i^{(S|i)} := F_{C,i}\Trivmap_i^{(S|i)} .
\end{equation}
A key feature of the QI approach emerges when taking the perspective of an internal frame. The refactorised Hilbert space associated with particle $i$ admits the decomposition
\begin{equation}
     \hilbert_{C,S|i} := V_i^{(S|i)}\hilbert  \cong 
    \left(\bigoplus_{q\in \hat G} \hilbert_{C_L}^{(q)} \otimes \hilbert_{C_R}^{(q)}\right) \otimes \hilbert_{S|i}.
\end{equation}
Since $V_i^{(S|i)}$ is unitary, conjugation by $V_i^{(S|i)}$ realises a $*$-isomorphism $
    \mathcal{V}_i^{(S|i)}:\mathcal{B}(\hilbert) \rightarrow \mathcal{B}(\hilbert_{C,S|i})$. 
In this representation, weak invariance is invariance under the residual left action of $G$, which in the refactorised Hilbert space acts only on the $C_L$ factor. Accordingly, any weakly invariant operator $T_{C,S|i}^{\rm inv} \in \mathcal{B}(\hilbert_{C,S|i})^G$ takes the form
\begin{equation} \label{eq:invariantoperatorAlice}
    T_{C,S|i}^{\rm inv}
    \;=\;
    \bigoplus_{q\in \hat G} \mathbb{I}_{C_L}^{(q)} \otimes T^{(q)}_{C_R,S|i}.
\end{equation}
The factor $C_L$ contains the gauge degrees of freedom, whereas the right factor $C_R$ is associated with the so-called extra-particle degree of freedom\nbs\cite{CastroRuiz2025RelativeSubsystems}. In short, the extra-particle is what is missing from relational observables\nbs\cite{CastroRuiz2025RelativeSubsystems,delahamette2021perspectiveneutral} to obtain the full G-invariant algebra.\footnote{A word of caution on terminology. In the QRF literature, \enquote{relational observables} usually denote operators of the form\nbs\eqref{eq:real_rel_obs}. Throughout this work, however, we use \enquote{relational} more broadly, for the full algebra accessible from the perspective of the internal frames: $\mathcal{B}(\hilbert^{\rm phys})$ in the PN approach and $\mathcal{B}(\hilbert)^G$ in the QI approach. In the PN approach the algebra accessible from the internal frames is weakly-isomorphic to relational observables in the strict sense\nbs\cite[Theorems~1 and~2]{delahamette2021perspectiveneutral}, whereas in the QI approach it is strictly larger---the difference being precisely the extra-particle.} As we shall see in Sec.\nbs\ref{sec:weak-twirling-PRT}, the extra-particle plays a crucial role in the discussion of the Paradox of the Third Particle. In App.\nbs\ref{app:qi_details}, we provide more details on the extra-particle and the QI framework.

\subsection{Comparison between the approaches}\label{sec:framework_comparison}

In this subsection, we highlight the main differences between the two approaches introduced above. The first concerns the information accessible from the perspective of the internal frames. In the PN approach, this information is selected by the strong twirling, whereas in the QI approach it is selected by the weak twirling. Clearly, the strongly invariant algebra is contained in the weakly invariant algebra. Fig.~\ref{fig:symmetry} illustrates this difference at the level of the information \emph{internally} accessible in the two frameworks by displaying the density matrices obtained through the PN and QI constructions.\footnote{Strictly speaking, by Schur's lemma the weak $G$-twirl also applies a fully depolarising channel within each irrep. For Abelian groups, the irreducible representations are one-dimensional, and therefore this depolarising channel is trivial. Hence, for Abelian groups, the figure faithfully represents the action of the weak $G$-twirl.}

The comparison can be sharpened by viewing the PN construction inside the QI refactorised representation. More precisely, we show that fixing the internal frame to the identity in the PN approach corresponds, inside the QI approach, to restricting the invariant structure to the trivial-charge block. We thus proceed by applying the QI refactorisation map to strongly invariant operators. For any kinematical operator $T$, projecting onto the physical sector and then applying the QI refactorisation associated with particle $i$ yields
\begin{equation}\label{eq:pn_extra_particle}
    T_{C,S|i}^{\rm PN} := \mathcal{V}_i^{(S|i)}\!\left[\Phi[T]\right]
    =
    P_0^C \otimes T_{S|i}^{(e)},
\end{equation}
where
\begin{equation}
    T_{S|i}^{(e)} := \mathcal{R}_{S|i}^{(i)}(e)\,\Phi[T]\,\mathcal{R}_{S|i}^{(i)\dagger}(e),
    \qquad P_0^C := \ket{q=0}\bra{q=0}_C, \qquad
    \ket{q=0}_C := \frac{1}{\sqrt{\text{Vol}(G)}}\int_G dg\,\ket{g}_C .
\end{equation}
Thus, the PN sector appears inside the QI refactorised description as the trivial-charge block. The $C$ factor is fixed to the trivial-charge rank-one projector $\ket{q=0}\bra{q=0}_C$ and remains uncorrelated with the description of $S$ from the perspective of frame $i$, as in Ref.~\cite{Vanrietvelde2020ChangePerspective}. In this restricted sector, the refactorisation map acts as a disentangler, in agreement with the terminology of Ref.~\cite{delahamette2021perspectiveneutral}.

The PN reduction can then be obtained by applying the Schr\"odinger reduction map to the refactorised operator in Eq.~\eqref{eq:pn_extra_particle}. Conditioning $C$ on any orientation $g\in G$ gives the same reduced relational operator as the one obtained via the PN prescription:
\begin{equation}\label{eq:from_EP_to_PN}
    T_{S|i}^{(e)}
    =
    \mathcal{R}^{(C)}_{S|i}(g)\,T_{C,S|i}^{\rm PN}\,\mathcal{R}^{(C)\dagger}_{S|i}(g)
    =
    \mathcal{R}^{(i)}_{S|i}(e)\,\Phi\left[T\right]\,\mathcal{R}^{(i)\dagger}_{S|i}(e).
\end{equation}
This independence of $g$ is the analogue, in the QI refactorised representation, of the corresponding property underlying the Heisenberg reduction map of Ref.~\cite{delahamette2021perspectiveneutral}. This construction shows how the strongly invariant sector appears when viewed through the weakly invariant, refactorised description. More explicitly, in the refactorised representation, the strong projection can be viewed as the weak \(G\)-twirl followed by restriction to the trivial charge block of \(C\). Indeed, after refactorisation one has $V_i^{(S|i)}\Pi V_i^{(S|i)\dagger}=P_0^C\otimes\mathbb{I}_{S|i}$, and therefore
\begin{equation}
    V_i^{(S|i)}\Phi[T]V_i^{(S|i)\dagger} = (P_0^C\otimes\mathbb{I}_{S|i})\,\mathcal{G}_C\!\left[V_i^{(S|i)}T V_i^{(S|i)\dagger}\right]\,(P_0^C\otimes\mathbb{I}_{S|i}),
\end{equation}
where $\mathcal{G}_C
    :=
    \mathcal{V}_i^{(S|i)}\circ \mathcal{G}\circ
    \left(\mathcal{V}_i^{(S|i)}\right)^{-1}$ denotes the weak $G$-twirl in the refactorised representation.

\begin{figure}[ht]
\centering
\begin{tikzpicture}[
    font=\normalsize,
    charge/.style={
    text=black!90
    },
    hblock/.style={
    draw=red!55!black,
    fill=red!5,
    line width=0.5pt,
    rounded corners=3pt
    },
    bracket/.style={
    line width=0.65pt,
    draw=black!75
    }
]

\def\W{7}
\def\H{5.5}
\def\gap{2.5}

\def\bTop{4.95}
\def\bBot{0.45}

\def\xA{1}
\def\xB{2.25}
\def\xC{3.50}
\def\xD{4.75}
\def\xE{6}

\def\yA{4.8}
\def\yB{3.775}
\def\yC{2.75}
\def\yD{1.725}
\def\yE{0.7}

\newcommand{\drawmatrixbrackets}[2]{%
    \draw[bracket] ($(#1,#2)+(0,\H)$) -- ($(#1,#2)+(0,0)$);
    \draw[bracket] ($(#1,#2)+(0,\H)$) -- ($(#1,#2)+(0.28,\H)$);
    \draw[bracket] ($(#1,#2)+(0,0)$) -- ($(#1,#2)+(0.28,0)$);
    \draw[bracket] ($(#1,#2)+(\W,\H)$) -- ($(#1,#2)+(\W,0)$);
    \draw[bracket] ($(#1,#2)+(\W-0.28,\H)$) -- ($(#1,#2)+(\W,\H)$);
    \draw[bracket] ($(#1,#2)+(\W-0.28,0)$) -- ($(#1,#2)+(\W,0)$);
}

\begin{scope}[shift={(0,0)}]

    \drawmatrixbrackets{0}{0}

    \node[hblock, minimum width=1.15cm, minimum height=0.85cm] at (\xA,\yA) {};
    \node[charge] at (\xA,\yA) {$Q=0$};
    \node[charge] at (\xB,\yA) {$\cdots$};
    \node[charge] at (\xC,\yA) {$\cdots$};
    \node[charge] at (\xD,\yA) {$\cdots$};
    \node[charge] at (\xE,\yA) {$Q=(0,n)$};

    \node[charge] at (\xA,\yB) {$\vdots$};
    \node[charge] at (\xB,\yB) {$\ddots$};
    \node[charge] at (\xC,\yB) {$\cdots$};
    \node[charge] at (\xD,\yB) {$\cdots$};
    \node[charge] at (\xE,\yB) {$\vdots$};

    \node[charge] at (\xA,\yC) {$\vdots$};
    \node[charge] at (\xB,\yC) {$\cdots$};
    \node[charge] at (\xC,\yC) {$Q=i$};
    \node[charge] at (\xD,\yC) {$\cdots$};
    \node[charge] at (\xE,\yC) {$\vdots$};

    \node[charge] at (\xA,\yD) {$\vdots$};
    \node[charge] at (\xB,\yD) {$\cdots$};
    \node[charge] at (\xC,\yD) {$\cdots$};
    \node[charge] at (\xD,\yD) {$\ddots$};
    \node[charge] at (\xE,\yD) {$\vdots$};
    
    \node[charge] at (\xA,\yE) {$Q=(n,0)$};    
    \node[charge] at (\xB,\yE) {$\cdots$};
    \node[charge] at (\xC,\yE) {$\cdots$};
    \node[charge] at (\xD,\yE) {$\cdots$};
    \node[charge] at (\xE,\yE) {$Q=n$};

\end{scope}

\begin{scope}[shift={(\W+\gap,0)}]

    \drawmatrixbrackets{0}{0}

    \node[hblock, minimum width=1.15cm, minimum height=0.85cm] at (\xA,\yA) {};
    \node[charge] at (\xA,\yA) {$Q=0$};

    \node[hblock, minimum width=1.15cm, minimum height=0.85cm] at (\xB,\yB) {};

    \node[hblock, minimum width=1.15cm, minimum height=0.85cm] at (\xC,\yC) {};

    \node[hblock, minimum width=1.15cm, minimum height=0.85cm] at (\xD,\yD) {};
    \node[hblock, minimum width=1.15cm, minimum height=0.85cm] at (\xE,\yE) {};

    \node[charge] at (\xA,\yA) {$Q=0$};
    \node[charge] at (\xB,\yA) {$\cdots$};
    \node[charge] at (\xC,\yA) {$\cdots$};
    \node[charge] at (\xD,\yA) {$\cdots$};
    \node[charge] at (\xE,\yA) {$Q=(0,n)$};

    \node[charge] at (\xA,\yB) {$\vdots$};
    \node[charge] at (\xB,\yB+0.1) {$\ddots$};
    \node[charge] at (\xC,\yB) {$\cdots$};
    \node[charge] at (\xD,\yB) {$\cdots$};
    \node[charge] at (\xE,\yB) {$\vdots$};

    \node[charge] at (\xA,\yC) {$\vdots$};
    \node[charge] at (\xB,\yC) {$\cdots$};
    \node[charge] at (\xC,\yC) {$Q=i$};
    \node[charge] at (\xD,\yC) {$\cdots$};
    \node[charge] at (\xE,\yC) {$\vdots$};

    \node[charge] at (\xA,\yD) {$\vdots$};
    \node[charge] at (\xB,\yD) {$\cdots$};
    \node[charge] at (\xC,\yD) {$\cdots$};
    \node[charge] at (\xD,\yD+0.1) {$\ddots$};
    \node[charge] at (\xE,\yD) {$\vdots$};
    
    \node[charge] at (\xA,\yE) {$Q=(n,0)$};    
    \node[charge] at (\xB,\yE) {$\cdots$};
    \node[charge] at (\xC,\yE) {$\cdots$};
    \node[charge] at (\xD,\yE) {$\cdots$};
    \node[charge] at (\xE,\yE) {$Q=n$};
\end{scope}

\end{tikzpicture}
\caption{Schematic representation of the strong twirling (left) and weak twirling (right), in the charge basis associated with Eve's description. The black matrix elements denote the original kinematical density operator. The red blocks indicate the components retained by the projection onto the trivial-charge sector in the strong case, and onto the weakly invariant algebra in the weak case.}
\revtexfloatlabel{fig:symmetry}
\end{figure}

\subsection{Perspectives and frame-dependence of quantum channels}
\label{sec:perspectives-quantum-channels}
A final point we want to emphasise in this section is the role of perspective in the QRFs framework. A choice of perspective, or equivalently a choice of quantum reference frame, induces a preferred tensor-product structure of the relevant Hilbert space. In particular, given a QRF $i$ and the system $S\mid i$, the perspective of $i$ singles out a factorisation of the Hilbert space. For the PN approach\footnote{In the QI approach, the notion of perspective is not, in general, associated with a factorisation of the form \eqref{eq:factorisation}. Rather, it is implemented by the refactorisation map $\mathcal V_i^{(S|i)}$, which induces a preferred decomposition of the Hilbert space and of the weakly invariant algebra. Hence, the discussion of this section applies to the QI framework after replacing the simple PN factorisation by the corresponding refactorised algebraic structure.} the desired factorisation is of the form
\begin{equation} \label{eq:factorisation}
    \hilbert^{\mathrm{phys}} \cong \hilbert_{S\mid i} = \bigotimes_{j\neq i} \hilbert_j.
\end{equation}
Different choices of $i$ therefore identify $\hilbert^{\mathrm{phys}}$ with different tensor-product structures, i.e. subsystem relativity\nbs\cite{hoehn2021quantum,Hoehn_2023_subsystems,delahamette2021perspectiveneutral}. Since quantum channels are defined relative to a specified input-output tensor-product structure, changing perspective requires a consistent transformation of the channel. Therefore, quantum channels are frame-dependent objects. This was already apparent in Sec.\nbs\ref{sec:QI-approach}: after the refactorisation induced by $\mathcal{V}_i^{(S|i)}$, the form of the twirling changes because the gauge action is entirely absorbed by the left factor $C_L$.

We now introduce a framework describing how quantum channels transform under a change of QRF. This will be used in the subsequent analysis of the Paradox of the Third Particle. Let $i$ and $j$ be two QRFs whose change of perspective before the operation is implemented by the unitary operator $T_{i\rightarrow j}:\hilbert_{S|i}\rightarrow\hilbert_{S|j}$.\footnote{Here we assume that the two frames are related by a unitary transformation. Therefore, this discussion applies to all the approaches considered above when restricting to internal frames.} Let\footnote{For infinite-dimensional Hilbert spaces, states are described by trace-class operators, and we therefore take $\mathcal{E} : \mathcal{B}_1(\mathcal{H}_{S|i})\to \mathcal{B}_1(\mathcal{H}_{S'|i})$. The adjoint map is understood with respect to the trace duality, i.e. $\Tr[\mathcal E(\rho)X]=\Tr[\rho\,\mathcal E^*(X)]$, and acts on bounded operators as $\mathcal{E}^* : \mathcal{B}(\mathcal{H}_{S'|i})\to \mathcal{B}(\mathcal{H}_{S|i})$.} $\mathcal{E}:\mathcal B_1(\hilbert_{S|i})\to\mathcal{B}_1(\hilbert_{S'|i})$ be the CPTP map describing a quantum operation in the perspective of $i$, with Kraus operators $\{K_l\}_{l \in I}$ where $I$ is some index set and $S'$ is the output system of the channel. After the operation, the relevant QRF transformation may differ from the initial one, because the channel can change the output Hilbert space or the tensor-product structure under consideration. For example, a partial trace removes degrees of freedom and therefore changes the Hilbert space on which the output state is represented. We denote by $T'_{i\rightarrow j}:\hilbert_{S'|i}\rightarrow\hilbert_{S'|j}$ the corresponding post-operation change of perspective.\footnote{In the setting considered in this paper we can always find a unitary implementing the change of perspective after the application of the channel. In general this need not hold: for instance, when $\mathcal{E}$ is the partial trace over particle $j$ performed in the perspective of $i$, the output no longer contains the frame $j$ one wishes to transform to, and thus there is no such post-operation change of perspective $T'$.}

Since $\mathcal{E}$ is defined in the perspective of $i$, its representation in the perspective of $j$ is obtained by first mapping the input state from $j$ to $i$, then applying $\mathcal E$, and finally mapping the output back from $i$ to $j$. The operation described in the perspective of $j$ is therefore the CPTP map $\TEpsilon:\mathcal{B}_1(\hilbert_{S|j})\to\mathcal{B}_1(\hilbert_{S'|j})$ defined by the consistency requirement
\begin{equation}\label{eq:consistencychannel}
    \TEpsilon\!\left[\rho_{S|j}\right] \;=\; T_{i\rightarrow j}' \, \mathcal{E}\!\left[T_{i\rightarrow j}^\dagger \rho_{S|j} T_{i\rightarrow j}\right] \, T_{i\rightarrow j}'^{\dagger}.
\end{equation}
Since $\mathcal{E}$ admits the Kraus decomposition $\{K_l\}_{l \in I}$, then $\TEpsilon$ admits the Kraus decomposition
\begin{equation}
    \tilde{K}_l \;=\; T_{i\rightarrow j}' \, K_l \, T_{i\rightarrow j}^\dagger, \quad l \in I  .
\end{equation}
The proof follows immediately by expanding Eq.\nbs\eqref{eq:consistencychannel} with the Kraus decomposition of $\mathcal{E}$. Using the Kraus Representation Theorem\nbs\cite{renes2022quantum}, it follows that $\tilde{\mathcal E}$ is a CPTP map. Conceptually, the transformed channel consists of three steps: \textit{(1)} map from the perspective of $j$ to that of $i$, \textit{(2)} apply the channel in the perspective of $i$, and \textit{(3)} return to the perspective of $j$. This QRF-dependence of quantum channels will play a crucial role in the analysis of the Paradox of the Third Particle.

\section{The Paradox of the Third Particle}\label{sec:paradox}
We now turn to the main subject of this work: the Paradox of the Third Particle. The Paradox, originally introduced in\nbs\cite{Angelo_2011}, can be summarised conceptually as follows: we start from a description of two particles relative to Eve's external frame, and we encode some information $\theta$ in the relational degrees of freedom between particles 1 and 2. Clearly, the phase $\theta$ is accessible from the perspective of particle 1. Now consider the same situation, but with a third, uncorrelated particle added to the description. Intuitively, if the third particle is uncorrelated, we expect particle 1 to have access to the phase $\theta$ independently of whether particle 3 is included in the description. However, it turns out that the inclusion of this third particle affects the accessibility of the phase $\theta$ from the QRF of particle 1. Therefore, the Paradox can be summarised as follows: \textbf{it appears that whether or not one includes a third, uncorrelated particle in the description affects whether particle 1 has access to relational information between particles 1 and 2.}

Formally, we can introduce the Paradox of the Third Particle as follows: we start from a state describing particles 1 and 2 from the perspective of Eve. Let $\hilbert_{12}$ be the kinematical Hilbert space, and let $a,b \in G$ and $\theta \in \mathbb{R}$. Consider the following kinematical state\footnote{For continuous $G$, the expressions below are understood in the distributional sense; for finite $G$ they hold verbatim.}
\begin{equation}\label{eq:state-eve-2-paradox}
\ket{\psi}_{12} = \frac{1}{\sqrt{2}} \left[\ket{a^{-1}}_1\ket{b}_2+ e^{i\theta}\ket{a}_1\ket{b^{-1}}_2\right],
\end{equation}
Ref.\nbs\cite{Angelo_2011} describes the next step as a transition from absolute to relative positions. In the PN language, this corresponds to first extracting the physical component of the kinematical state through the strong twirling and normalising it\footnote{In the PN framework, the kinematical Hilbert space is an auxiliary gauge-redundant space, whereas the physical Hilbert space is the state space of the constrained theory. Given a kinematical state $\rho$, the operator $\Pi\rho\Pi$ represents only its subnormalised physical component, and the corresponding physical state is obtained by normalising this component. Since the physical Hilbert space carries the gauge-invariant content of the theory, one may equivalently take the normalised physical state as the starting point of the analysis. However, in this paper, we want to discuss the procedure of tracing from Eve's perspective, and we therefore start from kinematical states.} to obtain $\ket{\psi}_{12}^{\rm phys}$. Then we map to the perspective of particle 1 by applying the Schr\"odinger reduction map,
\begin{equation}
\ket{\psi}_{2|1}^{(e)}
:=
\mathcal{R}_{2}^{(1)}(e)\ket{\psi}_{12}^{\rm phys}.
\end{equation}
A straightforward calculation yields
\begin{equation} \label{eq:paradoxonly2}
\ket{\psi}_{2|1}^{(e)} = \frac{1}{\sqrt{2}} \left[\ket{ab}_2+ e^{i\theta}\ket{a^{-1}b^{-1}}_2\right]
\end{equation}
and we observe that the phase $\theta$ is accessible from the perspective of particle 1. This shows that the information on $\theta$ is encoded in the relational degrees of freedom between particles 1 and 2, namely in the component of the kinematical state supported on the trivial-charge sector of subsystem 12. We now add a third particle, uncorrelated with particles 1 and 2, to the description. We denote the kinematical Hilbert space of the three particles by $\hilbert_{123}$, and we assume that particle 3 is prepared in the orientation state labelled by $c\in G$. That is, the kinematical state reads
\begin{equation} \label{eq:stateeveparadox}
    \ket{\Psi}_{123} = \ket{\psi}_{12} \otimes \ket{c}_3 
\end{equation}
Projecting onto the trivial-charge sector, normalising the resulting physical state, and applying the Schr\"odinger reduction map with respect to particle 1 yields
\begin{equation} \label{eq:statein1paradox}
    \ket{\Psi}_{23|1}^{(e)} =   \frac{1}{\sqrt{2}} \left[\ket{ab}_2\ket{ac}_3+ e^{i\theta}\ket{a^{-1}b^{-1}}_2\ket{a^{-1}c}_3\right].  
\end{equation}
We observe that, for generic choices of the group elements, particle 2 becomes entangled with particle 3. Tracing particle 3 then results in the mixed state\footnote{This holds when the two branches associated with particle 3 are orthogonal, i.e. for $ac\neq a^{-1}c$. If $ac=a^{-1}c$, the reduced state need not be diagonal in this form.}
\begin{equation}
    \Tr_3\left[\ket{\Psi}\bra{\Psi}_{23|1}^{(e)}\right]
    =
    \frac{1}{2}\left[\ket{ab}\bra{ab}_2+\ket{a^{-1}b^{-1}}\bra{a^{-1}b^{-1}}_2\right].
\end{equation}
Thus, after tracing out particle 3, the phase $\theta$ is no longer accessible from the reduced description of particle 2 relative to particle 1. At first sight, this is surprising, since particle 3 was added in an uncorrelated state and one would therefore expect it not to affect the accessibility of relational information between particles 1 and 2. In the following, we examine why this Paradox arises, discuss its implications, and assess whether it constitutes a genuine Paradox.

\subsection{Where does the Paradox emerge?}\label{sec:paradox_emergence}

We now discuss precisely how and why the Paradox of the Third Particle emerges. This discussion is structured around two key points:

\noindent\textit{(1) Change of Hilbert space decomposition.} As discussed in \cref{sec:perspectives-quantum-channels}, quantum channels are frame dependent. The partial trace with respect to which the information about $\theta$ is preserved is defined in Eve's perspective. Changing QRF corresponds to a change of the TPS of the Hilbert space. Hence, if we want to obtain in this new QRF the same results as those obtained via the partial trace in Eve's QRF, we must transform the partial trace accordingly.

Indeed, one can note the following natural but important point: the Paradox of the Third Particle arises even if one assumes perfect knowledge of the relations between the frames shared by all parties, that is when the particles have access to Eve. To see this, we adopt the Perspectival approach\nbs\cite{Giacomini2019,deLaHamette2020QRF}. Starting from Eve's perspective and Eq.\nbs\eqref{eq:stateeveparadox}, one can take the perspective of particle 1 using the unitary $U_{E\rightarrow 1}$ of Eq.\nbs\eqref{eq:unitary_perspectival}, applied with $i=E$ and $j=1$. Tracing out particle 3 in the perspective of 1 again results in a loss of coherence, and hence in a loss of access to the phase $\theta$, even when the internal-frame descriptions include relative-frame data relating them to Eve.

Our interpretation of this point is that the trace has not been properly \enquote{re-framed} relative to the new QRF perspective. Therefore, in this setting, the Paradox is only apparent and can be resolved straightforwardly by providing a covariant notion of trace under QRF transformations, which is captured by the following definition:\footnote{This definition is an application of the formalism of \cref{sec:perspectives-quantum-channels}.}
\begin{definition}[Perspective Trace]\label{def:perspective-trace}
Let $i$ and $j$ be two QRFs and let $k\in S|i\cap S|j$. Let
$\Tr_k:\mathcal{B}_1(\hilbert_{S|i})\rightarrow\mathcal{B}_1(\hilbert_{S\setminus k|i})$
be the partial trace over system $k$ defined in the QRF $i$. The corresponding notion of discarding system $k$ seen by $i$ in the perspective of $j$ is the Perspective Trace, defined as the CPTP map
$\operatorname{T}^{\,k\mid i}_{\mathrm{PT},j}:\mathcal{B}_1(\hilbert_{S|j})\rightarrow\mathcal{B}_1(\hilbert_{S\setminus k|j})$
given by
\begin{equation}
\operatorname{T}^{\,k\mid i}_{\mathrm{PT},j}[\rho_{S\mid j}]
:=
T'_{i\rightarrow j}\,
\Tr_k\!\left[T_{i\rightarrow j}^{\dagger}\, \rho_{S\mid j} \, T_{i\rightarrow j}\right]\,
T_{i\rightarrow j}'^{\dagger}.
\end{equation}
Equivalently, if $\{\operatorname{K}_l\}_{l\in I}$ are Kraus operators for $\Tr_k$, then $\operatorname{T}^{\,k\mid i}_{\mathrm{PT},j}$ has Kraus operators
\begin{equation}
        \tilde{\operatorname{K}}_l =  T'_{i\rightarrow j} \operatorname{K}_l T_{i\rightarrow j}^{\dagger},\quad l \in I.
\end{equation}
The operators $T_{i\rightarrow j}:\hilbert_{S|i}\rightarrow\hilbert_{S|j}$ and $T'_{i\rightarrow j}:\hilbert_{S\setminus k|i}\rightarrow\hilbert_{S\setminus k|j}$ are the unitaries implementing the QRF transformation from $i$ to $j$ before and after the partial trace, respectively.\footnote{By construction, the output of the Perspective Trace $\operatorname{T}^{\,k\mid i}_{\mathrm{PT},j}[\rho_{S\mid j}]$ is unitarily related to $\Tr_{k}[\rho_{S\mid i}]$. Note that this does not contradict subsystem relativity: we do not relate the two outputs of native partial traces $\Tr_{k}[\rho_{S\mid j}]$ and $\Tr_{k}[\rho_{S\mid i}]$, which are indeed not unitarily related in general~\cite{hoehn2021quantum,Hoehn_2023_subsystems,delahamette2021perspectiveneutral}.}
\end{definition}
In the notation of the Perspective Trace, the superscript $k|i$ records that the map discards system $k$ seen by $i$; while the subscript carries two pieces of bookkeeping: $\mathrm{PT}$ names the operation, and $j$ records the QRF in which the partial trace is transported covariantly. We can apply the Perspective Trace to the problem above by considering $\operatorname{T}^{\,3\mid E}_{\mathrm{PT},1}:\mathcal{B}_1(\hilbert_{E23|1})\rightarrow\mathcal{B}_1(\hilbert_{E2|1})$ and using the unitary transformations from the Perspectival approach in the definition of the Kraus operators of the Perspective Trace. A straightforward calculation shows that the apparent loss of coherence in this setting---where the same relative-frame data are retained in all frame descriptions---is then resolved. We therefore identify the first pathological step underlying the Third-Particle Paradox as the use of a partial trace that has not been transformed covariantly under the QRF transformation. Accounting for the QRF covariance of the partial trace is equivalent to accounting for subsystem relativity.\footnote{Indeed, the relevant algebra seen by $i$, that is $O_{S\setminus k|i} \otimes \mathbb I_k$, is not, in general, mapped under the QRF transformation to the algebra $O_{S\setminus k|j} \otimes \mathbb I_k$ (cf. subsystem relativity\nbs\cite{hoehn2021quantum,Hoehn_2023_subsystems,delahamette2021perspectiveneutral}). Hence, Definition\nbs\ref{def:perspective-trace} ensures that, from the perspective of $j$, we trace out subsystem $k$ as seen by $i$, which in general does not coincide with subsystem $k$ as seen by $j$.} However, this is not the end of the story, as we discuss below.

\noindent\textit{(2) The partial trace does not preserve the physical Hilbert space.} The resolution of point \textit{(1)} is applicable under the condition that all frames possess perfect knowledge of inter-frame relations. However, in the PN approach, the situation is more subtle. The application of the inverse reduction map from the QRF 1 does not take us back to Eve’s kinematical perspective, but rather to the physical Hilbert space. The fundamental issue is that the partial trace in the physical space is not, in general, equivalent to the kinematical partial trace. Our aim is to compare the traces performed before and after physicalisation. Starting from Eve's notion of trace, it is not a priori clear how the projection onto the physical space affects the discarding operation. In fact, one can check that applying the partial trace after the restriction to the physical sector results in an operator that is not supported on the physical space:\footnote{Although the physical Hilbert space does not inherit the TPS from the kinematical space, we can embed physical states into the kinematical space and apply the usual partial trace there. More precisely, let \(J:\mathcal H^{\rm phys}_{123}\to\mathcal H_{123}\) denote this kinematical embedding, with \(J^\dagger J=\mathbb{I}_{\mathcal H^{\rm phys}_{123}}\) and \(JJ^\dagger=\Pi_{123}\). Given a physical state \(\rho^{\rm phys}_{123}\), we regard \(J\rho^{\rm phys}_{123}J^\dagger\) as a kinematical operator supported on the globally invariant sector. Hence, Eq.\nbs\eqref{eq:out-of-phys} shall be formally understood as $\Tr_3\left[J\ket{\Psi}\bra{\Psi}_{123}^{\rm phys}J^\dagger\right]$.}
\begin{equation}   \label{eq:out-of-phys}\Tr_3\left[\ket{\Psi}\bra{\Psi}_{123}^{\rm phys}\right] \notin \mathcal{B}_1(\hilbert_{12}^{\rm phys}).
\end{equation}
This is not surprising: the projector onto the physical space of 123 is non-local with respect to the kinematical split between subsystem 12 and particle 3, and therefore creates correlations between them. Tracing out particle $3$ then results in a loss of coherence and therefore in a loss of information, see Fig.\nbs\ref{fig:trace_first_after}. In the PN approach we cannot simply transform the partial trace by unitary conjugation, since we must account for the fact that the physical space does not, in general, inherit the TPS of the kinematical Hilbert space.

\begin{figure}[h]
    \centering
    \includegraphics[width=400pt]{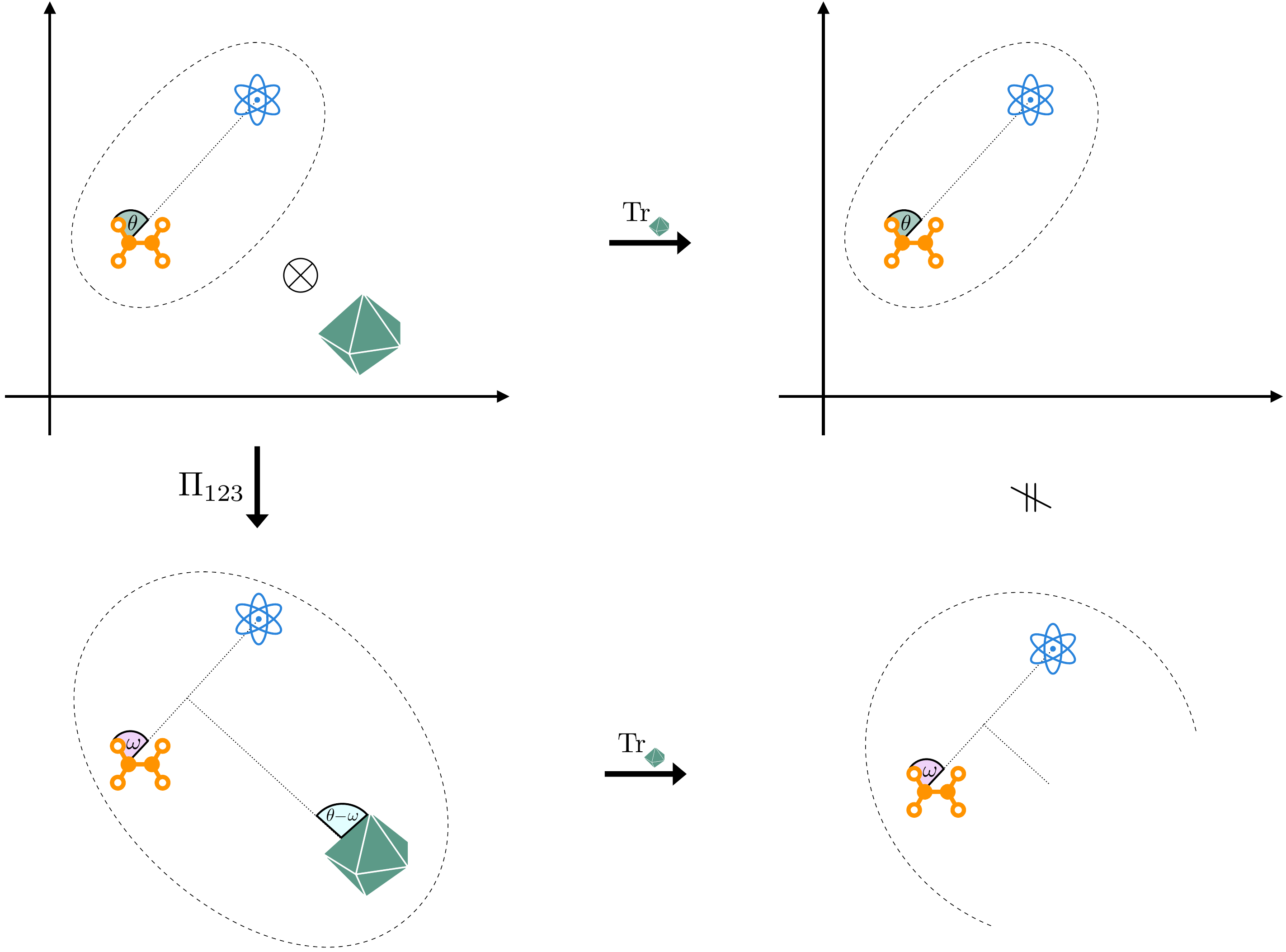}
    \caption{Heuristic explanation of the clash between tracing before and after projection onto the physical Hilbert space. The atom and the molecule represent subsystem 12, with the information $\theta$ encoded in the relative phase between them; the diamond represents the third particle. The upper pictures show Eve's perspective, where the Cartesian axes pictorially indicate an external reference frame; tracing from this external perspective recovers the information on $\theta$. In the bottom pictures the axes are absent, indicating that the global degrees of freedom have been discarded by the strong twirling. After strong twirling (bottom-left), the information on $\theta$ is spread onto correlations with the third particle, parametrised by a gauge-arbitrary phase $\omega$; tracing out particle 3 then results in the loss of $\theta$ (bottom-right).}
    \revtexfloatlabel{fig:trace_first_after}
\end{figure}

At first sight, this second problematic facet can be resolved via the \emph{Relational Trace}, introduced in Ref.~\cite{paradoxmuller}. The key role of this map is to reformulate the discarding operation at the level of physical---i.e.\ relational---degrees of freedom. Concretely, one considers states in $\mathcal{B}_1(\hilbert_{123})$ as described relative to Eve's external frame and observables $\mathcal{B}(\hilbert_{12})$ accessible to Eve. The key ingredient is an extended\footnote{With extended we mean that the embedding may not satisfy injectivity.} embedding map
$\Theta:\mathcal{B}(\hilbert_{12})\rightarrow\mathcal{B}(\hilbert_{123}^{\rm phys})$ defined as $\Theta(\mathcal{A}_{12}) = \Pi_{12} \mathcal{A}_{12} \Pi_{12} \otimes \Pi_3$, 
understood as mapping observables on $12$ into strongly-invariant observables supported on the physical Hilbert space of the enlarged system $123$. We now provide the definition of the \emph{Relational Trace}~\cite[Definition\nbs32]{paradoxmuller} generalised to compact Lie groups. 
\begin{definition}[Relational Trace]\label{def:relational_trace}
Let $\rho_{123}\in\mathcal{B}_1(\hilbert_{123})$ and let $\mathcal{A}_{12} \in \mathcal{B}(\hilbert_{12})$. Furthermore, let $\Theta:\mathcal{B}(\hilbert_{12})\rightarrow\mathcal{B}(\hilbert_{123}^{\rm phys})$ be an extended embedding map. The \emph{Relational Trace} (RT) is then defined as the unique adjoint of this embedding with respect to the trace pairing, $\operatorname{T}_{\rm rel(3)}:=\Theta^\dagger:\mathcal{B}_1(\hilbert_{123})\rightarrow\mathcal{B}_1(\hilbert_{12}^{\rm phys})$, so that the statistics-consistency condition 
\begin{equation}\label{eq:condition_reltrace}
\Tr\!\left[\left(\Theta\!\left[\mathcal{A}_{12}\right]\right)\rho_{123}\right]
\;=\;
\Tr\!\left[\mathcal{A}_{12}\;\operatorname{T}_{\rm rel(3)}\!\left[\rho_{123}\right]\right]
\end{equation}
holds for all $\rho_{123}\in\mathcal{B}_1(\hilbert_{123})$ and $\mathcal{A}_{12} \in \mathcal{B}(\hilbert_{12})$. 
\end{definition}
Then, we provide the Theorem characterising the Relational Trace, which is a generalisation to compact groups of\nbs\cite[Theorem\nbs33]{paradoxmuller}. The proof is analogous to that in Ref.\nbs\cite{paradoxmuller}: manipulate Eq.\nbs\eqref{eq:condition_reltrace} and then use the non-degeneracy of the trace pairing. 

\begin{theorem}[Characterisation of the RT]\label{thm:RT}
    The RT assumes the explicit form 
    \begin{equation}\label{eq:reltrace}
    \operatorname{T}_{\rm rel(3)}\!\left[\rho_{123}\right]
    \;=\;
    \Tr_3\!\left[(\Pi_{12}\otimes\Pi_{3})\,\rho_{123}\,(\Pi_{12}\otimes\Pi_{3})\right]
    \;=\;
    \Pi_{12}\,
    \Tr_3\!\left[\Pi_{3}\,\rho_{123}\,\Pi_{3}\right]\,
    \Pi_{12}.
    \end{equation}
\end{theorem}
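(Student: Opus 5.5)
We compute the adjoint $\Theta^\dagger$ directly from the defining condition \eqref{eq:condition_reltrace}. The plan is to start from the left-hand side, rewrite it in the form $\Tr[\mathcal{A}_{12}\,X]$ for some trace-class $X$ on $\hilbert_{12}$, and then identify $X$ with $\operatorname{T}_{\rm rel(3)}[\rho_{123}]$ using non-degeneracy of the trace pairing between $\mathcal{B}(\hilbert_{12})$ and $\mathcal{B}_1(\hilbert_{12})$.

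First, insert the definition $\Theta(\mathcal{A}_{12})=\Pi_{12}\mathcal{A}_{12}\Pi_{12}\otimes\Pi_3$. To bring this into a symmetric form, we use $\Pi_3=\Pi_3^2$, which holds since $G$ is compact and $\Pi_3$ is an orthogonal projector. This lets us write $\Pi_{12}\mathcal{A}_{12}\Pi_{12}\otimes\Pi_3=(\Pi_{12}\otimes\Pi_3)(\mathcal{A}_{12}\otimes\mathbb{I}_3)(\Pi_{12}\otimes\Pi_3)$. By cyclicity of the trace, legitimate because $\rho_{123}$ is trace-class and all other factors are bounded, we get
\begin{equation}
\Tr\!\left[\Theta[\mathcal{A}_{12}]\rho_{123}\right]=\Tr\!\left[(\mathcal{A}_{12}\otimes\mathbb{I}_3)\,(\Pi_{12}\otimes\Pi_3)\rho_{123}(\Pi_{12}\otimes\Pi_3)\right].
\end{equation}
Next, apply the defining property of the kinematical partial trace, $\Tr[(\mathcal{A}_{12}\otimes\mathbb{I}_3)\sigma]=\Tr[\mathcal{A}_{12}\Tr_3\sigma]$ for trace-class $\sigma$. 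This turns the right-hand side into $\Tr[\mathcal{A}_{12}\,\Tr_3[(\Pi_{12}\otimes\Pi_3)\rho_{123}(\Pi_{12}\otimes\Pi_3)]]$.

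The condition must hold for all $\mathcal{A}_{12}\in\mathcal{B}(\hilbert_{12})$, so non-degeneracy of the pairing ($\mathcal{B}(\hilbert)=\mathcal{B}_1(\hilbert)^*$) forces the first expression in \eqref{eq:reltrace}. The same argument also gives uniqueness of the adjoint. For the second expression, we use the fact that operators acting only on the $12$ factor commute with $\Tr_3$ in the sense $\Tr_3[(B\otimes\mathbb{I})\sigma(C\otimes\mathbb{I})]=B\,\Tr_3[\sigma]\,C$. Taking $B=C=\Pi_{12}$ and noting that $\Pi_{12}\otimes\Pi_3=(\Pi_{12}\otimes\mathbb{I})(\mathbb{I}\otimes\Pi_3)$ gives the right-hand form. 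Finally, the output lies in $\mathcal{B}_1(\hilbert_{12}^{\rm phys})$ because it is compressed by $\Pi_{12}$.

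The main obstacle is technical rather than algebraic. For continuous $G$ one must justify the partial-trace identities and cyclicity in infinite dimensions. My approach would be to note that $\Pi_{12}$ and $\Pi_3$ are bounded, which holds for compact $G$ since the Haar volume is finite. Then $(\Pi_{12}\otimes\Pi_3)\rho_{123}(\Pi_{12}\otimes\Pi_3)$ is trace-class, and the standard trace-class partial-trace duality applies without distributional subtleties, even though the coherent states $\ket{g}$ themselves are distributional.
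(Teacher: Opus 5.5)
Your proposal is correct and follows the paper's own route: the paper proves this by manipulating Eq.~\eqref{eq:condition_reltrace} into the form $\Tr[\mathcal{A}_{12}\,X]$ and then invoking non-degeneracy of the trace pairing, which is exactly what you do. Your explicit steps ($\Pi_3^2=\Pi_3$, cyclicity, partial-trace duality, pulling $\Pi_{12}$ outside $\Tr_3$, and boundedness of the projectors for compact $G$) fill in the details the paper leaves to Ref.~\cite{paradoxmuller}.
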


It is crucial for what follows to note that, by the construction in Definition\nbs\ref{def:relational_trace}, the RT satisfies condition~\eqref{eq:condition_reltrace} for all kinematical states.

By Eq.\nbs\eqref{eq:reltrace} it follows that the image of the RT lies in $\mathcal{B}_1(\hilbert_{12}^{\rm phys})$.\footnote{Note that $\left(\Pi_{12}\otimes\Pi_{3}\right) \Pi_{123} = \Pi_{12}\otimes\Pi_{3}$} Indeed, the RT trace can be regarded as a disentangler between subsystem $12$ and $3$, followed by the standard partial trace. Furthermore, one can check that
\begin{equation}
    \operatorname{T}_{\rm rel(3)}\left[\ket{\Psi}\bra{\Psi}_{123}^{\rm phys}\right] \sim \ket{\psi}\bra{\psi}_{12}^{\rm phys} \in \mathcal{B}_1(\hilbert_{12}^{\rm phys}).
\end{equation}
After normalisation, applying the Relational Trace to the physicalisation of state~\eqref{eq:stateeveparadox} yields the same state as first tracing in Eve's perspective and only afterwards projecting onto the physical sector. From this standpoint, the second pathological step is resolved with this relational notion of subsystem discarding.

Given the presence of the tools to address points (1) and (2),\footnote{One might be tempted to regard the two problematic steps as two different rephrasings of the same issue, since point \textit{(2)} also modifies the Hilbert-space structure through the strong twirling. However, the two are conceptually different. Point \textit{(1)} concerns a change of perspective and is implemented by an isomorphism between Hilbert spaces, which correspondingly maps one TPS to another. Point \textit{(2)}, instead, concerns the projection onto the physical subspace, which is not an isomorphism and the physical subspace in general does not inherit a TPS.} it is reasonable to hypothesise that the Paradox of the Third Particle can be resolved for all states. However, as will be demonstrated in the following subsection, there are still crucial problems to be addressed. The aforementioned issues are rooted in the adoption of the Relational Trace as the concept of system discarding in order to address point \textit{(2)}. A more thorough examination of this topic is provided in the subsequent subsection.

\subsection{Limitation of the Relational Trace}\label{sec:rt_limitation}

The discussion of Sec.\nbs\ref{sec:paradox_emergence} offers a potential resolution to the Paradox. However, as we show in this section, this resolution is not satisfactory for all states. Since the RT satisfies Eq.\nbs\eqref{eq:condition_reltrace} for all kinematical states, one might expect it to resolve the Paradox for all states. We now provide a counterexample. Let $c,d\in G$ such that $c\neq d$ and define 
$\ket{-}_3:=\frac{1}{\sqrt{2}}\bigl(\ket{c}_3-\ket{d}_3\bigr)$.
Consider the state
\begin{equation}\label{eq:counterexample}
\ket{\tilde\Psi}_{123} = \ket{\psi}_{12}\otimes\ket{-}_3
\end{equation}
From Eve's kinematical perspective, subsystems $12$ and $3$ are uncorrelated. Therefore, we would expect this third uncorrelated particle to not affect whether particle\nbs1 has access to the phase $\theta$. Tracing out particle~3 from Eve's perspective leaves the reduced state of $12$ unchanged, and therefore the information about the phase $\theta$ is preserved from the perspective of particle\nbs1. On the other hand, for this choice one has $\Pi_{3}\ket{-}_3=0$, and therefore $\operatorname{T}_{\rm rel(3)}[\ket{\tilde\Psi}\bra{\tilde\Psi}_{123}]=0$.\footnote{The vanishing of
\(\Pi_3\ket{-}_3\) does not imply that the global physical projection
vanishes. For finite \(G\), and for generic choices of the group elements,
\(\ket{-}_3\) and \(\ket{\psi}_{12}\) have support in conjugate non-trivial
charge sectors, so that
\(\Pi_{123}\ket{\tilde\Psi}_{123}\neq0\).
For continuous compact \(G\), take a normalised
\(\ket{\psi_3}\in L^2(G)\) with vanishing Haar average,
\(\int_G dg\,\psi_3(g)=0\), so that
\(\Pi_3\ket{\psi_3}=0\), and let \(\bar q\) be a non-trivial irrep in which
it has a non-zero component. Choose a normalised
\(\ket{\psi_{12}}\in\mathcal H_{12}\) whose trivial-charge projection
carries the relational information \(\theta\), and whose \(q\)-sector
component has non-zero overlap, when coupled to the \(\bar q\)-sector
component of \(\ket{\psi_3}\), with the invariant vector
\(\ket{\omega_q}\); see App.~\ref{app:strong_twirling} and
App.~\ref{app:proof_proposition}. The global physical projection is
therefore non-zero in both cases, even though the RT, which restricts to
the factorised sector selected by \(\Pi_{12}\otimes\Pi_3\), annihilates the
state. In the terminology of Ref.~\cite[Theorem~34]{paradoxmuller}, the
corresponding conditional reduced operator is well defined but null.}
Thus the RT erases the relational information about subsystem $12$ that is preserved by first discarding particle~3 in Eve's kinematical description. Importantly, Eq.\nbs\eqref{eq:condition_reltrace} is nevertheless still satisfied: for the state~\eqref{eq:counterexample}, both sides vanish for all $\mathcal A_{12}\in\mathcal B(\hilbert_{12})$. The condition is therefore satisfied only trivially. We conclude that Eq.\nbs\eqref{eq:condition_reltrace} does not capture the operational content of the Third-Particle Paradox.

One might object that, in the PN approach, Eve's frame is pure gauge and therefore Eve's kinematical partial trace has no direct operational meaning. We agree with this statement. However, this is precisely the tension exposed by the Third-Particle Paradox. In both Refs.~\cite{Angelo_2011,paradoxmuller}, the third particle is first introduced in an external-frame description, which in the PN formulation adopted here corresponds to the kinematical level. The question is then whether the operation used to discard this additional system in the physical Hilbert space preserves the relational information about subsystem $12$ that would have been retained by discarding the third particle at the kinematical level. When we say that the RT does not resolve the Paradox for all states, we precisely intend that the RT does not preserve this relational information for some states, as shown above. We elaborate more on this in Sec.\nbs\ref{sec:PRT}.

In the following, we explain the origin of this failure. Specifically, we introduce a new statistical condition and a corresponding notion of trace. To capture the full operational content of the Paradox, we do not require the statistical condition to hold on the entire state space: we require that the relevant notion of perspectival trace preserves all subsystem relational information accessible to the external frame, without imposing any prior requirements on the states for which this condition can be solved. We then characterise the trace and discuss for which class of states our condition---and hence the Paradox---can be solved in the different approaches.

\section{Addressing the Paradox---Perspective Relational Trace}\label{sec:PRT}

We now present our proposal for a statistical consistency condition capturing the operational content of the Paradox, together with a corresponding notion of subsystem discarding that compares external and internal traces. We have seen that quantum operations, including the trace, transform covariantly under unitary QRF transformations, as made precise by the Perspective Trace. The change from an external to an internal perspective, by contrast, involves non-unitary operations, and it is therefore not a priori clear how to transform the partial trace consistently. In the following, we propose our resolution of the Paradox to address these two conceptually different points at the same time, both within the PN and QI frameworks. For notational clarity we specialise to the three-particle setting; all results hold verbatim for an arbitrary bipartition of $n$ particles into a subsystem and its complement. 
  
The statistical consistency condition is introduced, and the relevant notion of trace is defined as the map that satisfies it: the \emph{Perspective Relational Trace}. Specifically, we require that the expectation values assigned by the external reference frame to relational observables \(12\) coincide with those assigned in the perspective of particle~1 after the trace has been applied. A key assumption in the following is that subsystem \(12\) is described with the same framework as the total system \(123\). This will demonstrate that the PN approach does not, in general, provide a consistent description of a subsystem, whereas the QI approach can consistently reduce to any subsystem description. Notwithstanding the fact that the Paradox was originally formulated for product states, no restrictions are imposed on the state in the following.\footnote{This may appear stronger than the original formulation of the Paradox, which is motivated by product states and by the absence of correlation between \(12\) and \(3\). Here, however, we define the trace through a consistency requirement on a larger class of states, and then determine on which subclass of states this requirement can be satisfied.} 

\subsection{Strong twirling}\label{sec:paradox-strong-twirling}
We begin by discussing our proposed resolution within the PN approach. We introduce the Perspective Relational Trace, a linear map from the space describing $23$ to the one describing $2$, both from the perspective of particle~$1$.\footnote{To avoid notational clutter, from now on we suppress the explicit dependence on the conditioning orientation $g\in G$ in perspectival states, observables, reduction maps, and trace maps. All relations below shall be understood to hold for arbitrary orientation of the QRF.} We want to define a statistical consistency condition that precisely captures the operational content of the Paradox of the Third Particle. Recall that the relevant relational information $12$ is encoded in the trivial-charge sector, identified by observables $\mathcal{A}_{12}^{\text{phys}}$ in the algebra $\mathcal{B}(\hilbert^{\mathrm{phys}}_{12})$. Our goal is to find the subsystem-discarding operation in the perspective of particle~$1$ that preserves this subsystem relational information. Accordingly, we require that Eve’s measurement outcomes for $\mathcal{A}_{12}^{\mathrm{phys}}$ on states $\rho_{123} \in \mathcal{B}_1(\hilbert_{123})$ coincide with the statistics obtained in the perspective of particle~$1$, where observables $\mathcal{A}_{2|1}\in\mathcal{B}(\hilbert_{2|1})$\footnote{For ideal complete frames, $\mathcal{B}(\hilbert_{2|1})\cong \mathcal{B}(\hilbert_{2})$\nbs\cite{delahamette2021perspectiveneutral}.} are evaluated on the output state of the Perspective Relational Trace.  We formalise this intuition via the following definition. 

\begin{definition}[Perspective Relational Trace]\label{def:PRT_S}
The \emph{Perspective Relational Trace} (PRT) is a linear map\footnote{Since the discarded system and the perspective are fixed throughout, we simply write $\operatorname{T}^{3|1}_{\rm PRT}$ for the map discarding particle~3 in the perspective of particle~1, without the frame bookkeeping of Def.~\ref{def:perspective-trace}.}
\begin{equation}
\operatorname{T}_{\rm PRT}^{3|1} : \mathcal B_1(\mathcal H_{23|1}) \longrightarrow \mathcal B_1(\mathcal H_{2|1}).
\end{equation}
We say that $\operatorname{T}_{\rm PRT}^{3|1}$ \emph{satisfies the statistical consistency condition} for a kinematical state $\rho_{123} \in \mathcal B_1(\mathcal H_{123})$, with induced perspectival state $
\rho_{23|1} \;:=\; \mathcal R_{23}^{(1)}\,\Phi^{(123)}[\rho_{123}]\,\mathcal R_{23}^{(1)\dagger}$,
if, for every observable $\mathcal A_{2|1} \in \mathcal{B}(\hilbert_{2|1})$ with corresponding strongly-invariant observable $\mathcal A_{12}^{\mathrm{phys}} \in \mathcal{B}(\mathcal{H}_{12}^{\mathrm{phys}})$, it holds that
\begin{equation} \label{eq:statisticpn}
\Tr\!\left[(\mathcal A_{12}^{\mathrm{phys}}\otimes \mathbb I_3)\,\rho_{123}\right]
=
\Tr\!\left[\mathcal A_{2|1}\,\operatorname{T}_{\rm PRT}^{3|1}[\rho_{23|1}]\right].
\end{equation}
\end{definition}

The following Lemma, proven in App.\nbs\ref{app:proof_lemma_1}, shows that Eq.\nbs\eqref{eq:statisticpn} is equivalent to a state equivalence condition.
\begin{lemma} \label{lem:lemma_1}
The statistical consistency condition in Def.\nbs\ref{def:PRT_S} is equivalent to
\begin{align} \label{eq:lemma_1}
\mathcal R_{2}^{(1)} \Phi^{(12)} \left[\Tr_{3}\left[\rho_{123}\right]\right] \mathcal R_{2}^{(1)\dagger} &= 
\operatorname{T}_{\rm PRT}^{3|1} [\mathcal R_{23}^{(1)} \Phi^{(123)} \left[\rho_{123}\right] \mathcal R_{23}^{(1)\dagger}].
\end{align}
\end{lemma}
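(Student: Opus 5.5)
The plan is to rewrite the left-hand side of Eq.~\eqref{eq:statisticpn} as a trace over $\hilbert_{2|1}$ against a fixed perspectival operator. After that, the non-degeneracy of the trace pairing between $\mathcal B_1(\hilbert_{2|1})$ and $\mathcal B(\hilbert_{2|1})$ turns equality of expectation values for all $\mathcal A_{2|1}$ into the operator identity~\eqref{eq:lemma_1}.

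\emph{Step 1: move $\mathcal A_{12}^{\rm phys}$ to subsystem $12$.} I will fix the correspondence between $\mathcal A_{2|1}$ and $\mathcal A_{12}^{\rm phys}$ via the unitary reduction map of the complete ideal frame~1:
\begin{equation*}
\mathcal A_{12}^{\rm phys}=\mathcal R_{2}^{(1)\dagger}\,\mathcal A_{2|1}\,\mathcal R_{2}^{(1)},
\end{equation*}
viewed as an operator on $\hilbert_{12}$ supported on $\Pi_{12}\hilbert_{12}$. Since $\mathcal A_{12}^{\rm phys}\otimes\mathbb I_3$ acts trivially on $3$, the defining property of the partial trace gives
\begin{equation*}
\Tr\!\left[(\mathcal A_{12}^{\rm phys}\otimes\mathbb I_3)\rho_{123}\right]=\Tr\!\left[\mathcal A_{12}^{\rm phys}\Tr_3[\rho_{123}]\right].
\end{equation*}
Because $\mathcal A_{12}^{\rm phys}=\Pi_{12}\mathcal A_{12}^{\rm phys}\Pi_{12}$, cyclicity lets us insert the projectors onto the state. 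This yields $\Tr[\mathcal A_{12}^{\rm phys}\,\Phi^{(12)}[\Tr_3\rho_{123}]]$, with everything trace class since observables are bounded.

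\emph{Step 2: pass to the perspective of~1.} I will substitute the expression for $\mathcal A_{12}^{\rm phys}$ from Step~1 and use cyclicity once more. The left-hand side then becomes
\begin{equation*}
\Tr\!\left[\mathcal A_{2|1}\,\mathcal R_{2}^{(1)}\Phi^{(12)}[\Tr_3\rho_{123}]\mathcal R_{2}^{(1)\dagger}\right].
\end{equation*}
The right-hand side of Eq.~\eqref{eq:statisticpn} is already of the form $\Tr[\mathcal A_{2|1}X]$, with $X=\operatorname{T}_{\rm PRT}^{3|1}[\rho_{23|1}]$. Here $\rho_{23|1}$ is exactly the argument appearing in Eq.~\eqref{eq:lemma_1}.

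\emph{Step 3: non-degeneracy.} Both perspectival operators lie in $\mathcal B_1(\hilbert_{2|1})$. Since $\mathcal B(\hilbert_{2|1})$ is the dual of $\mathcal B_1(\hilbert_{2|1})$, equality of the two traces for all $\mathcal A_{2|1}$ is equivalent to equality of the operators; testing on rank-one $\ket{\phi}\bra{\chi}$ suffices. The converse direction is immediate: multiply~\eqref{eq:lemma_1} by $\mathcal A_{2|1}$, take the trace, and run Steps~1 and~2 backwards.

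The main obstacle I expect is Step~1, namely making precise the identification $\mathcal A_{12}^{\rm phys}\leftrightarrow\mathcal A_{2|1}$. One must use that $\mathcal R_2^{(1)}$ restricted to $\hilbert_{12}^{\rm phys}$ is unitary onto $\hilbert_{2|1}$. One must also use that its kinematical extension annihilates the orthocomplement of $\Pi_{12}\hilbert_{12}$, or equivalently that $\mathcal R_2^{(1)}\Pi_{12}=\mathcal R_2^{(1)}$ after the paper's normalisation. This is what allows the $\Pi_{12}$ insertions and the embedding $J$ to be handled consistently. For continuous $G$ the argument should be phrased via the unitary isomorphism rather than the distributional kets, so that all trace manipulations are legitimate.
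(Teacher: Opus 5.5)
Your proposal is correct and follows the paper's proof essentially step for step. You pass to $\Tr_3$ and insert the projectors as in Eq.~\eqref{eq:innerproductpn}, move the reduction maps by cyclicity, invoke non-degeneracy of the trace pairing, and get the converse by tracing against $\mathcal A_{2|1}$. The only difference is where non-degeneracy is used: the paper applies it on $\mathcal B_1(\hilbert_{12}^{\rm phys})$ and then conjugates by $\mathcal R_2^{(1)}$, whereas you apply it directly on $\mathcal B_1(\hilbert_{2|1})$, which is equivalent because the reduction map is unitary.

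One small correction to your closing remark. The literal kinematical operator $\sqrt{\mathrm{Vol}(G)}\,\bra{g}_1\otimes\mathbb I_2$ does \emph{not} satisfy $\mathcal R_2^{(1)}\Pi_{12}=\mathcal R_2^{(1)}$. The extension you need is therefore $\mathcal R_2^{(1)}\Pi_{12}$, i.e.\ the reduction map on $\hilbert_{12}^{\rm phys}$ extended by zero, with $\mathcal A_{12}^{\rm phys}=\Pi_{12}\mathcal R_2^{(1)\dagger}\mathcal A_{2|1}\mathcal R_2^{(1)}\Pi_{12}$. This is exactly the combination already built into $\mathcal R_2^{(1)}\Phi^{(12)}[\cdot]\mathcal R_2^{(1)\dagger}$.
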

Furthermore, this lemma allows for a graphical representation of the statistical consistency condition, shown in Fig.\nbs\ref{fig:commuting_neutral}. We now characterise the Perspective Relational Trace and its uniqueness via the following Theorem, whose proof is given in App.\nbs\ref{app:proof_theorem_1}.

\begin{theorem}[Characterisation of the PRT]\label{thm:PRT_PN}
Let
\begin{equation}
\Lambda \;=\;
\Bigl\{\, \rho_{123} \in \mathcal{B}_1(\hilbert_{123}) \;\bigm|\; \Phi^{(12)}\!\left[\Tr_3\!\left[\Pi_{3}\,\rho_{123}\right]\right] = \Phi^{(12)}\!\left[\Tr_3\!\left[\rho_{123}\right]\right]
\Bigr\}
\end{equation}
There exists a unique linear map $\operatorname{T}_{\rm PRT}^{3|1}$ satisfying the statistical consistency condition~\eqref{eq:statisticpn} for every $\rho_{123} \in \Lambda$. This map is CP and admits the Kraus decomposition\footnote{Here $\{\operatorname K_l\}$ is a Kraus family of the partial trace $\Tr_3$ over $\hilbert_3=L^2(G)$, namely $\operatorname K_l=\mathbb I_{12}\otimes\langle l|_3$. For continuous $G$ the label $l\in G$ is to be read in a distributional sense: the kets $\{|g\rangle_3\}_{g\in G}$ are not normalisable elements of $L^2(G)$, so $\{\operatorname K_g\}_{g\in G}$ is a continuous (rigged) family rather than a countable Kraus set. A countable family is obtained in the Peter--Weyl basis $\{|q,m,n\rangle_3\}$ of $\hilbert_3$, indexed by $\alpha=(q,m,n)$, with $\operatorname K_\alpha=\mathbb I_{12}\otimes\langle q,m,n|_3$; the two are related by the Peter--Weyl expansion of $|g\rangle_3$ given in Sec.~\ref{sec:qrf_frameworks}.}
\begin{equation} \label{eq:Kraus_PRT_PN}
\tilde{\operatorname{K}}_l = \mathcal R_{2}^{(1)}\, \Pi_{12}\,\operatorname K_l\, \mathcal R_{23}^{(1)\dagger}, \quad l \in G,
\end{equation}
where $\{\operatorname K_l\}_{l\in G}$ are the Kraus operators of the partial trace $\Tr_3$ defined in Eve's perspective. Moreover, $\Lambda$ is the largest kinematical subspace containing the physical subspace on which Eq.\nbs\eqref{eq:statisticpn} can be satisfied.
\end{theorem}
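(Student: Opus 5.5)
The plan is to reduce everything to Lemma~\ref{lem:lemma_1}. By that lemma, the consistency condition~\eqref{eq:statisticpn} for a given $\rho_{123}$ is equivalent to the state identity~\eqref{eq:lemma_1}. I would therefore first \emph{guess} the map from the Kraus form~\eqref{eq:Kraus_PRT_PN}:
\begin{equation*}
\operatorname{T}_{\rm PRT}^{3|1}[\sigma]=\mathcal R_2^{(1)}\,\Pi_{12}\,\Tr_3\!\left[\mathcal R_{23}^{(1)\dagger}\sigma\,\mathcal R_{23}^{(1)}\right]\Pi_{12}\,\mathcal R_2^{(1)\dagger}.
\end{equation*}
This map is manifestly CP, being a composition of conjugations and a partial trace. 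Its Kraus operators are exactly $\tilde{\operatorname K}_l$, since $\Pi_{12}$ acts on $12$ and commutes with $\operatorname K_l=\mathbb I_{12}\otimes\langle l|_3$.

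Next I evaluate this map on $\rho_{23|1}=\mathcal R_{23}^{(1)}\Pi_{123}\rho_{123}\Pi_{123}\mathcal R_{23}^{(1)\dagger}$. For complete ideal frames the reduction map is unitary from $\hilbert^{\rm phys}_{123}$ onto $\hilbert_{23|1}$, so (via the embedding $J$) $\mathcal R_{23}^{(1)\dagger}\mathcal R_{23}^{(1)}=\Pi_{123}$. The right-hand side of~\eqref{eq:lemma_1} therefore becomes
\begin{equation*}
\mathcal R_2^{(1)}\Pi_{12}\Tr_3[\Pi_{123}\rho_{123}\Pi_{123}]\Pi_{12}\mathcal R_2^{(1)\dagger}.
\end{equation*}
The key algebraic identity is $(\Pi_{12}\otimes\mathbb I_3)\Pi_{123}=\Pi_{12}\otimes\Pi_3$. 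It follows by writing $\Pi_{123}=\divG\int dg\,U_{12}(g)\otimes U_3(g)$ and using $\Pi_{12}U_{12}(g)=\Pi_{12}$. Since $\Pi_{12}$ passes through $\Tr_3$, the expression becomes $\Pi_{12}\Tr_3[\Pi_3\rho\Pi_3]\Pi_{12}$. By cyclicity of the partial trace on factor $3$ and $\Pi_3^2=\Pi_3$, this equals $\Phi^{(12)}[\Tr_3[\Pi_3\rho_{123}]]$. Comparing with the left-hand side $\mathcal R_2^{(1)}\Phi^{(12)}[\Tr_3\rho_{123}]\mathcal R_2^{(1)\dagger}$, and using that $\mathcal R_2^{(1)}$ is unitary on $\hilbert^{\rm phys}_{12}$, the condition holds exactly when $\rho_{123}\in\Lambda$. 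This gives existence on $\Lambda$.

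For uniqueness, I note that every physical state $\rho_{123}=\Pi_{123}\rho_{123}\Pi_{123}$ lies in $\Lambda$. Indeed, for such a state both $\Phi^{(12)}[\Tr_3\rho]$ and $\Phi^{(12)}[\Tr_3(\Pi_3\rho)]$ reduce, via the same identity, to $\Pi_{12}\Tr_3[\Pi_3\rho\Pi_3]\Pi_{12}$. Because the reduction map is onto, the physical states are mapped onto all of $\mathcal B_1(\hilbert_{23|1})$. Hence~\eqref{eq:lemma_1} prescribes the value of any admissible linear map on every input, which forces it to coincide with the map above.

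Maximality then follows from the same observation. The value $\operatorname{T}_{\rm PRT}^{3|1}[\rho_{23|1}]$ is already fixed by the physical states, since $\rho_{23|1}$ only depends on $\Pi_{123}\rho_{123}\Pi_{123}$. So for any kinematical $\rho_{123}$ the condition holds if and only if the computation above yields equality, i.e.\ if and only if $\rho_{123}\in\Lambda$. I expect the main obstacle to be handling the rigged/distributional Kraus family for continuous $G$ together with the identity $\mathcal R^\dagger\mathcal R=\Pi$. I would first do the computation in the Peter--Weyl basis, where the Kraus set is countable, to keep the manipulations rigorous.
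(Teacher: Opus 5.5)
Your proposal is correct and follows essentially the same route as the paper. You reduce to Lemma~\ref{lem:lemma_1} and pin down the map on physical inputs using that the reduction map is unitary and onto, which gives existence, uniqueness and the Kraus form. You then use $(\Pi_{12}\otimes\mathbb I_3)\Pi_{123}=\Pi_{12}\otimes\Pi_3$, which is the paper's $\Phi^{(12)}\circ\Phi^{(123)}=\Phi^{(12)}\circ\Phi^{(3)}$, to show the condition holds exactly on $\Lambda$; the only difference is that you guess the map and verify it, whereas the paper derives it from the physical states first.
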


\begin{figure}[ht]
  \centering
    \begin{tikzpicture}[
        state/.style={
            font=\Large,
            inner sep=2pt
        },
        map/.style={
            font=\large,
            inner sep=1.5pt,
            fill=white
        },
        prtmap/.style={
            font=\large,
            text=red!55!black,
            draw=red!55!black,
            line width=0.45pt,
            fill=red!6,
            inner xsep=5pt,
            inner ysep=4pt,
            fill=white
        },
        arr/.style={
            -{Stealth[length=3.0mm,width=2.0mm]},
            line width=0.8pt,
            draw=black,
            shorten >=4pt,
            shorten <=4pt
        }
    ]

    \node[state] (rho123)     at (-6.65,  0.00) {$\rho_{123}$};

    \node[state] (rho12)      at (-2.35,  1.70) {$\rho_{12}$};
    \node[state] (rho12phys)  at ( 2.25,  1.70) {$\rho_{12}^{\rm phys}$};

    \node[state] (rho123phys) at (-2.35, -1.70) {$\rho_{123}^{\rm phys}$};
    \node[state] (rho23rel)   at ( 2.25, -1.70) {$\rho_{23|1}$};

    \node[state] (rho2rel)    at ( 6.55,  0.00) {$\rho_{2|1}$};

    \draw[arr] (rho123) -- (rho12);
    \draw[arr] (rho12) -- (rho12phys);
    \draw[arr] (rho12phys) -- (rho2rel);

    \draw[arr] (rho123) -- (rho123phys);
    \draw[arr] (rho123phys) -- (rho23rel);
    \draw[arr] (rho23rel) -- (rho2rel);

    \node[map] at (-4.65,  1.7)
        {$\Tr_{3}[\cdot]$};

    \node[map] at (-0.05,  2.35)
        {$\Phi^{(12)}[\cdot]$};

    \node[map] at ( 4.7,  1.7)
        {$\mathcal R_{2}^{(1)}[\cdot]\mathcal R_{2}^{(1)\dagger}$};

    \node[map] at (-4.65, -1.7)
        {$\Phi^{(123)}[\cdot]$};

    \node[map] at (-0.05, -2.45)
        {$\mathcal R_{23}^{(1)}[\cdot]\mathcal R_{23}^{(1)\dagger}$};

    \node[prtmap] at (4.7,-1.7)
        {$\operatorname{T}_{\rm PRT}^{3|1}[\cdot]$};

    \end{tikzpicture}
  \caption{Graphical interpretation of Eq.~(\ref{eq:lemma_1}). The statistical consistency condition is equivalent to requiring that the same state be obtained in the following cases: (Top arrows) trace, project into $\hilbert_{12}^{\rm phys}$ and map to the QRF of 1; (Bottom arrows) project into $\hilbert_{123}^{\rm phys}$, map to the QRF of 1 and apply the \emph{Perspective Relational Trace}, highlighted by the red box.}
\revtexfloatlabel{fig:commuting_neutral}
\end{figure}
The action of the Perspective Relational Trace can be written as 
\begin{equation}
        \operatorname{T}_{\rm PRT}^{3|1} \left[ \rho_{23|1} \right] =  \mathcal R_{2}^{(1)} \Phi^{(12)}\left[\Tr_3  \left[\mathcal R_{23}^{(1)\dagger} \rho_{23|1} \mathcal R_{23}^{(1)}\right]\right] \mathcal R_{2}^{(1)\dagger}.
\end{equation}
This highlights the sequence of operations underlying the map: \textit{(i)} map from the perspective of 1 to the physical space, \textit{(ii)} trace out subsystem $3$, \textit{(iii)} map to the physical space, and \textit{(iv)} map back to the perspective of 1. 

We now note the following technical, but important, point. The statistical 
condition in Def.~\ref{def:PRT_S} should be understood as a strong, 
unnormalised statistical consistency condition. Indeed, Eq.~\eqref{eq:statisticpn} does not merely compare the normalised relational state of subsystem $12$; it also requires agreement of the physical weights carried by the corresponding subnormalised representatives.\footnote{More explicitly, Eq.~\eqref{eq:statisticpn} compares the subnormalised physical representatives $X$ and $Y$ themselves, not only their normalised versions $X/\Tr[X]$ and $Y/\Tr[Y]$. For instance, if $X=p\tau$ and $Y=q\tau$, with $\tau$ a normalised physical state and $p,q>0$, then $X/\Tr[X]=Y/\Tr[Y]=\tau$. Nevertheless, Eq.~\eqref{eq:statisticpn} requires $X=Y$, and hence $p=q$, as is seen by choosing $\mathcal A_{12}^{\rm phys}=\Pi_{12}$.}

If, however, kinematical states are interpreted as redundant representatives 
of physical states, then the probabilistic content is carried by the normalised 
physical projection. Hence two kinematical states should be identified whenever 
their normalised physical projections coincide, or equivalently whenever their non-zero physical projections differ only by an overall positive 
scalar. This motivates the 
following normalised version of the PRT output:
\begin{equation}
\widehat{\operatorname T}_{\rm PRT}^{3|1}[\rho_{23|1}]
:=
\frac{
\operatorname T_{\rm PRT}^{3|1}[\rho_{23|1}]
}{
\Tr\!\left[\operatorname T_{\rm PRT}^{3|1}[\rho_{23|1}]\right]
},
\end{equation}
whenever the denominator is non-zero, and we extend the definition by setting $\widehat{\operatorname T}_{\rm PRT}^{3|1}[\rho_{23|1}]:= 0$ whenever $\Tr\!\left[\operatorname T_{\rm PRT}^{3|1}[\rho_{23|1}]\right] = 0$. Equivalently, using the explicit form of 
$\operatorname T_{\rm PRT}^{3|1}$, this can be written as
\begin{equation}
   \widehat{\operatorname T}_{\rm PRT}^{3|1}[\rho_{23|1}]
   =
   \mathcal R_{2}^{(1)}
   \widehat{\Phi}^{(12)}
   \left[
      \Tr_3
      \left[
        \mathcal R_{23}^{(1)\dagger}
        \rho_{23|1}
        \mathcal R_{23}^{(1)}
      \right]
   \right]
   \mathcal R_{2}^{(1)\dagger}.
\end{equation}
Note that $\mathrm T^{3|1}_{\rm PRT}$ is the linear CP trace-non-increasing map characterised above, whereas $\widehat{\mathrm T}^{3|1}_{\rm PRT}$ is trace-preserving whenever its denominator is non-zero, but is non-linear and not CP.

This technical but crucial observation leads to a weaker, normalised statistical condition for the 
Paradox of the Third Particle:
\begin{equation} \label{eq:normalised_condition}
\Tr\!\left[
\mathcal A_{12}^{\mathrm{phys}}\,
\widehat{\Phi}^{(12)}\!\left[\Tr_3[\rho_{123}]\right]
\right]
=
\Tr\!\left[
\mathcal A_{2|1}\,
\widehat{\operatorname{T}}_{\rm PRT}^{3|1}[\rho_{23|1}]
\right],
\end{equation}
where $\rho_{23|1}$ is the perspectival state obtained from the physical 
projection of $\rho_{123}$.\footnote{Strictly speaking, for this condition to have a natural probability interpretation $\rho_{23|1}$ should be interpreted as the normalised state 
of particles $23$ in the perspective of particle $1$. However, one can equivalently work with the unnormalised version since the normalised PRT is insensitive to the normalisation 
of its input.} Eq.\nbs\eqref{eq:normalised_condition} can be seen as a relaxation of Def.\nbs\ref{def:PRT_S} to allow expectation values to coincide up to an overall positive constant. Eq.\nbs\eqref{eq:normalised_condition} is satisfied if and only if the kinematical state $\rho_{123} \in \mathcal{B}_1(\hilbert_{123})$ is in the kinematical set\footnote{Here, $X\sim Y$ means that there exists $\tau>0$ such that $X=\tau Y$.}
\begin{equation}
\widehat{\Lambda} \;=\;
\Bigl\{ \; \rho_{123} \in \mathcal{B}_1(\hilbert_{123}) \;\bigm|\; \Phi^{(12)}\left[\Tr_3\left[\Pi_{3}\rho_{123}\right]\right] \sim  \Phi^{(12)}\left[\Tr_3\left[\rho_{123}\right]\right]
\Bigr\},
\end{equation}
Note that $\Lambda \subsetneq \widehat{\Lambda}$.\footnote{Indeed, $\Lambda \subseteq \widehat{\Lambda}$ trivially but $\ket{\Psi}\bra{\Psi}_{123}\in \widehat{\Lambda}$ while $\ket{\Psi}\bra{\Psi}_{123} \not \in \Lambda$. } Hence, this observation is not merely technical: it enlarges the set of states for which the Paradox can be resolved. One can check immediately that the Paradox is solved by applying the normalised PRT to the relational state $\ket{\Psi}_{23|1}$ of Eq.\nbs\eqref{eq:statein1paradox}.
This discussion tells us that, for kinematical states outside $\widehat \Lambda$, our statistical consistency condition cannot be satisfied. This gives an important result: the Paradox of the Third Particle cannot be resolved for all states in the PN approach. Although outside $\widehat \Lambda$ there are states with non-trivial correlations between subsystem $12$ and $3$ for which it is unsurprising that the Paradox cannot be resolved, this characterisation shows that there are product states for which the Paradox cannot be solved either. In particular, the product state~\eqref{eq:counterexample} that showed the failure of the RT does not lie in $\widehat \Lambda$. 

Since we were not able to solve the Paradox for Eq.~\eqref{eq:counterexample}, one might be tempted to conclude that the PRT does not provide further insights beyond what is already provided by the RT. Let us then spell out the key differences and clarify how the PRT sheds light on the problem. To make explicit the relation between the two notions of trace, it is useful to introduce the auxiliary map $\xi:\mathcal{B}_1(\hilbert_{123})\to\mathcal{B}_1(\hilbert_{2|1})$, defined as
\begin{align} 
    \xi\left[\cdot \right] & = \operatorname{T}_{\rm PRT}^{3|1} \left[\mathcal R_{23}^{(1)}\Phi^{(123)}\left[\cdot\right]\mathcal R_{23}^{(1)\dagger}\right] = \mathcal R_{2}^{(1)}\operatorname{T}_{\rm rel(3)}\left[\cdot\right] \mathcal R_{2}^{(1)\dagger}\label{eq:xi}
\end{align}
Therefore, at the level of the map itself, the PRT does not introduce a fundamentally new operation; rather, it is the RT dressed by the isometries to account for the QRF covariance of the partial trace. The crucial difference lies instead in the statistical condition used to characterise it. While the RT characterisation can become trivial on a class of states, our consistency condition makes explicit the mismatch between the external trace and the trace after physicalisation. We now discuss this difference.

Recall that the statistical condition---Eq.\nbs\eqref{eq:condition_reltrace}---characterising the RT holds for every state, while the PRT solves Eq.\nbs\eqref{eq:statisticpn} only for kinematical states in $\Lambda$.\footnote{For simplicity, here we omit the normalisation requirement in our consistency condition. This does not affect the argument of the following discussion.} This is not a limitation, but rather the \emph{key feature} of our approach. Ultimately, this reveals a limitation hidden by the RT condition. For clarity, let us compare the two statistical conditions side by side:
\begin{equation}\label{eq:comparison_conditions}
\begin{array}{c@{\qquad\qquad}c}
\text{PRT statistical consistency condition}
&
\text{RT statistical consistency condition}
\\[0.6em]
\Tr\!\left[\left(\mathcal A_{12}^{\mathrm{phys}}\otimes \mathbb I_3\right)\rho_{123}\right]
=
\Tr\!\left[\mathcal A_{2|1}\,\operatorname{T}_{\rm PRT}^{3|1}[\rho_{23|1}]\right]
&
\Tr\!\left[\left(\mathcal{A}_{12}^{\mathrm{phys}}\otimes \Pi_{3}\right)\rho_{123}\right]
\;=\;
\Tr\!\left[\mathcal{A}_{12}\;\operatorname{T}_{\rm rel(3)}\!\left[\rho_{123}\right]\right].
\end{array}
\end{equation}

One can check that, for the state\nbs\eqref{eq:counterexample}, the condition of Ref.~\cite{paradoxmuller} is trivial. The RT condition is evaluated entirely after restriction to the physical space, obscuring the comparison between tracing before and after physicalisation, which lies at the heart of the Paradox of the Third Particle, as discussed in Sec.\nbs\ref{sec:rt_limitation}. By contrast, for this choice of state, only the right-hand side is trivial in the PRT consistency condition. The left-hand side is not trivial, since particle $3$ is traced before projection onto the physical space: the clash between the two sides of the PRT condition shows the clash between the trace before and after physicalisation and ultimately that the Paradox cannot be resolved for this state. This highlights that the reduction implemented by the RT condition is not innocuous: it can annihilate a class of states for which tracing out particle $3$ in Eve's perspective yields a reduced state on subsystem $12$ that still carries relational information detectable from Eve's perspective.

Furthermore, the RT probes only the factorised sector selected by $\Pi_{12}\otimes\Pi_{3}$. Since this sector is, in general, a proper subspace of the full physical Hilbert space, the RT can annihilate states carrying relational information that is relevant in the full relational theory. Note that this statement is at a different level from the observation that the RT may erase relational information visible in Eve's reduced description after tracing out particle $3$.

More precisely, in the PRT consistency condition, the left-hand side probes the relational content of subsystem $12$ while leaving particle $3$ unrestricted, through the factor $\mathbb I_3$. In the RT condition, instead, particle $3$ is also projected onto its trivial-charge sector through $\Pi_3$. This difference is exactly what, in our condition, makes transparent the clash between tracing particle $3$ before physicalisation and after physicalisation. While the right-hand sides of the two conditions coincide everywhere,\footnote{The right-hand sides of the two conditions are both evaluated entirely on the physical sector.} the left-hand sides coincide only on $\Lambda$. To conclude this comparison, we stress that our statistical condition makes explicit two facts: \textit{(1)} the QRF-covariance of the partial trace\footnote{In Ref.~\cite{paradoxmuller}, this requirement is not formulated as a covariance condition for the trace, since the Relational Trace is constructed at the relational, perspective-neutral level. The relevant property is instead that the Relational Trace is \enquote{invariant} and preserves observational equivalence: it assigns the same reduced operator to observationally equivalent inputs~\cite[Theorem~33]{paradoxmuller}, and hence, in particular, to symmetry-equivalent ones~\cite[Definition~13]{paradoxmuller}. In their setting, the canonical representatives of an alignable state relative to two distinct internal frames are related by a symmetry $U \in U_{\mathrm{sym}}$, which implements the QRF transformation between the two frames \cite[Lemma~5, Definition~15, Theorem~18]{paradoxmuller}, and are therefore symmetry-equivalent in the sense of Ref.~\cite{paradoxmuller}. More specifically, their invariance condition can be seen as an instance of our covariance requirement, Eq.~\eqref{eq:consistencychannel}, in the special case where the \emph{post-operation} conjugation $T'_{A' \to B'}$ acts trivially on the output of the channel. This is the case in their setting because the output of the Relational Trace lies in the physical space, on which every $U \in U_{\mathrm{sym}}$ acts as the identity; the unitaries relating symmetry-equivalent inputs are elements of $U_{\mathrm{sym}}$, so no reframing of the output is needed and Eq.~\eqref{eq:consistencychannel} reduces to the invariance of the RT construction.} and \textit{(2)} the clash between tracing before and after physicalisation. The obstruction stems from the fact that strong twirling introduces correlations between subsystem $12$ and the otherwise uncorrelated third particle. As a result, after physicalisation, relational information about subsystem $12$ that was accessible from Eve's perspective may be encoded in correlations with particle $3$, and is therefore lost in the reduced description when particle $3$ is discarded (See Fig.\nbs\ref{fig:trace_first_after}).

We conclude this subsection by remarking that the impossibility of resolving the Paradox outside the set $\widehat \Lambda$ can be traced back to insisting on a PN description for both the composite system $123$ and subsystem $12$. We discuss this in Sec.\nbs\ref{sec:discussion}.

\subsection{Weak twirling}\label{sec:weak-twirling-PRT}

In this section, we extend the analysis of the previous section to the QI approach discussed in Sec.~\ref{sec:QI-approach}. Both the composite system $123$ and, after discarding particle $3$, the subsystem $12$ are now described in the QI framework. The statistical consistency condition is motivated as before: we require that Eve's measurement outcomes on the weakly-invariant algebra of subsystem $12$ coincide with those obtained in the perspective of particle\nbs1 on the perspectival algebra, after applying the relevant trace from the perspective of particle\nbs1. The crucial difference is that, in the QI approach, this information is encoded in the weakly-invariant algebra, rather than in the strongly-invariant algebra. We now define the corresponding QI version of the Perspective Relational Trace.

\begin{definition}[Perspective Relational Trace--QI]\label{def:PRT_QI}
The Perspective Relational Trace in the QI approach (PRT--QI) is a linear map
\begin{equation}
\operatorname{T}_{\rm PRT,QI}^{3|1}:\mathcal B_1(\mathcal H_{C,23|1})^G\longrightarrow\mathcal B_1(\mathcal H_{C,2|1})^G.
\end{equation}
We say that $\operatorname{T}_{\rm PRT,QI}^{3|1}$ satisfies the \emph{statistical consistency condition} for a kinematical state $\rho_{123}\in\mathcal B_1(\mathcal H_{123})$, with induced perspectival state $\rho_{C,23|1} =\mathcal{V}_1^{(23|1)}\!\left[\mathcal G^{(123)}[\rho_{123}]\right]$, if, for every observable $\mathcal A_{C,2|1}\in\mathcal B(\mathcal H_{C,2|1})^G$ with corresponding weakly invariant observable $\mathcal A_{12}^{\rm inv}\in\mathcal B(\mathcal H_{12})^G$, it holds that
\begin{equation} \label{eq:statistic_QI}
\Tr\!\left[\left(\mathcal A_{12}^{\rm inv}\otimes\mathbb I_3\right)\rho_{123}\right]
=
\Tr\!\left[\mathcal A_{C,2|1}\,\operatorname{T}_{\rm PRT,QI}^{3|1}\!\left[\rho_{C,23|1}\right]\right].
\end{equation}
\end{definition}

The following Lemma, in analogy with the PN case, shows that the statistical consistency condition can again be reduced to a state condition. The proof can be found in App.\nbs\ref{app:proof_lemma_2}.
\begin{lemma} \label{lem:lemma_2}
    The statistical consistency condition of Def.\nbs\ref{def:PRT_QI} is equivalent to
    \begin{align} \label{eq:lemma_2}
    \mathcal V_1^{(2|1)}\!\left[\mathcal G^{(12)}\!\left[\Tr_3[\rho_{123}]\right]\right]
    =
    \operatorname{T}_{\rm PRT,QI}^{3|1}\!\left[
    \mathcal V_1^{(23|1)}\!\left[\mathcal G^{(123)}[\rho_{123}]\right]
    \right].
    \end{align}
\end{lemma}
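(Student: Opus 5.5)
We rewrite both sides of Eq.~\eqref{eq:statistic_QI} as trace pairings against the same perspectival observable $\mathcal A_{C,2|1}$, and then conclude from the non-degeneracy of the trace pairing on the weakly invariant algebra. The correspondence between $\mathcal A_{C,2|1}$ and $\mathcal A_{12}^{\rm inv}$ is the $*$-isomorphism $\mathcal V_1^{(2|1)}$ restricted to $\mathcal B(\hilbert_{12})^G$. That is, $\mathcal A_{C,2|1}=V_1^{(2|1)}\mathcal A_{12}^{\rm inv}V_1^{(2|1)\dagger}$, with $V_1^{(2|1)}$ unitary.

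\emph{Left-hand side.} First use the defining property of the partial trace, $\Tr[(\mathcal A_{12}^{\rm inv}\otimes\mathbb I_3)\rho_{123}]=\Tr[\mathcal A_{12}^{\rm inv}\Tr_3[\rho_{123}]]$. Next insert the subsystem twirl. Since $\mathcal A_{12}^{\rm inv}$ commutes with $U_{12}(g)$, cyclicity of the trace and invariance of the Haar measure give $\Tr[\mathcal A_{12}^{\rm inv}\sigma]=\Tr[\mathcal A_{12}^{\rm inv}\mathcal G^{(12)}[\sigma]]$. In other words, $\mathcal G^{(12)}$ is self-dual and acts as the identity on invariant observables. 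Finally, unitarity of $V_1^{(2|1)}$ turns this into
\[
\Tr\bigl[\mathcal A_{C,2|1}\,\mathcal V_1^{(2|1)}[\mathcal G^{(12)}[\Tr_3\rho_{123}]]\bigr].
\]
The left-hand side of the statistical condition therefore equals the pairing of $\mathcal A_{C,2|1}$ with the left-hand side of Eq.~\eqref{eq:lemma_2}. The right-hand side of Eq.~\eqref{eq:statistic_QI} is already the pairing of $\mathcal A_{C,2|1}$ with the right-hand side of Eq.~\eqref{eq:lemma_2}.

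\emph{Equivalence.} The direction ``\eqref{eq:lemma_2} $\Rightarrow$ \eqref{eq:statistic_QI}'' is immediate from these rewritings. For the converse, the two rewritten expressions agree for every $\mathcal A_{C,2|1}\in\mathcal B(\hilbert_{C,2|1})^G$. Both operators being compared lie in $\mathcal B_1(\hilbert_{C,2|1})^G$: the first because $\mathcal V_1^{(2|1)}$ maps $\mathcal G^{(12)}$-invariant operators into the invariant algebra of the refactorised space, and the second by the codomain of $\operatorname{T}^{3|1}_{\rm PRT,QI}$. It then suffices to show that the trace pairing between $\mathcal B_1(\hilbert)^G$ and $\mathcal B(\hilbert)^G$ is non-degenerate. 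Let $X$ be invariant trace-class with $\Tr[AX]=0$ for all invariant bounded $A$. For an arbitrary bounded $B$, self-duality gives $\Tr[BX]=\Tr[\mathcal G[B]X]=0$. Ordinary non-degeneracy of the $\mathcal B$–$\mathcal B_1$ duality then yields $X=0$.

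\emph{Main obstacle.} The only delicate point is bookkeeping rather than analysis. One must make sure that the twirl on the left-hand side is the subsystem twirl $\mathcal G^{(12)}$ and not $\mathcal G^{(123)}$. One must also check that $\mathcal V_1^{(2|1)}$ intertwines $\mathcal G^{(12)}$ with the residual $C_L$ twirl $\mathcal G_C$ of Sec.~\ref{sec:framework_comparison}. This is needed so that the invariant algebras on both sides are genuinely identified and the pairing argument applies within $\mathcal B(\hilbert_{C,2|1})^G$. For continuous $G$, the exchange of the Haar integral with the trace should be justified by normality (ultraweak continuity) of $\mathcal G$ on trace-class operators.
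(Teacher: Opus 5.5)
Your proposal is correct and follows the paper's route: like the paper, you rewrite the left-hand side via the trace-pairing identity for $\mathcal G^{(12)}$, transport it with the unitary $V_1^{(2|1)}$ as in Lemma~\ref{lem:lemma_1}, and conclude by non-degeneracy of the pairing between $\mathcal B_1(\hilbert)^G$ and $\mathcal B(\hilbert)^G$. The one difference is in that last step: the paper tests against $A=Z^\dagger$ and uses $*$-closure together with $\Tr[ZZ^\dagger]=0\Rightarrow Z=0$, whereas you use self-duality of the twirl to extend the vanishing to every bounded $B$ and then invoke the ordinary $\mathcal B$--$\mathcal B_1$ duality, which is equally valid.
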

We characterise the Perspective Relational Trace QI and its uniqueness via the following theorem, proven in App.\nbs\ref{app:proof_theorem_2}.
\begin{theorem}[Characterisation of the PRT--QI]\label{thm:PRT_QI}
    There exists a unique linear map $\operatorname{T}^{3|1}_{\mathrm{PRT,QI}}$ satisfying the statistical consistency condition\nbs\eqref{eq:statistic_QI} for every $\rho_{123} \in \mathcal{B}_1(\hilbert_{123})$. This map is CPTP and admits the following Kraus decomposition\footnote{The Kraus operators shall again be understood in a distributional sense.}
    \begin{equation} \label{eq:Kraus_PRT_QI}    
    \tilde{\operatorname{K}}_l=V_1^{(2|1)} \operatorname{K}_l V_1^{(23|1)\dagger}, \; l \in G.
    \end{equation}
\end{theorem}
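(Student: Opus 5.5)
By Lemma~\ref{lem:lemma_2}, the statistical consistency condition is equivalent to the state identity~\eqref{eq:lemma_2}, so it suffices to construct a map satisfying that identity for all $\rho_{123}$ and to show that it is unique. The natural candidate is the map whose Kraus operators are given in Eq.~\eqref{eq:Kraus_PRT_QI}, namely
\begin{equation*}
\operatorname{T}^{3|1}_{\rm PRT,QI}[\sigma] = \mathcal V_1^{(2|1)}\!\left[\Tr_3\!\left[\bigl(\mathcal V_1^{(23|1)}\bigr)^{-1}[\sigma]\right]\right].
\end{equation*}
It undoes the refactorisation, applies Eve's partial trace, and refactorises again on $12$. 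This is the Perspective Trace of Def.~\ref{def:perspective-trace} with the unitaries $T=V_1^{(23|1)}$ and $T'=V_1^{(2|1)}$. Complete positivity and trace preservation follow as in Sec.~\ref{sec:perspectives-quantum-channels}, since conjugating the Kraus operators of $\Tr_3$ by unitaries gives $\sum_l \tilde{\operatorname K}_l^\dagger\tilde{\operatorname K}_l=\mathbb I$.

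\emph{Existence.} I will verify~\eqref{eq:lemma_2} for this map. Substituting $\sigma=\rho_{C,23|1}$ reduces the right-hand side to $\mathcal V_1^{(2|1)}[\Tr_3[\mathcal G^{(123)}[\rho_{123}]]]$. The key commutation is
\begin{equation*}
\Tr_3\circ\mathcal G^{(123)}=\mathcal G^{(12)}\circ\Tr_3 .
\end{equation*}
To prove it, write $U_{123}(g)=U_{12}(g)\otimes U_3(g)$ and use cyclicity of the partial trace in the factor $3$, so that $\Tr_3[(U_{12}\otimes U_3)\rho(U_{12}\otimes U_3)^\dagger]=U_{12}\Tr_3[\rho]U_{12}^\dagger$. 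Exchanging the partial trace with the Haar integral is justified for compact $G$ because the integral converges in trace norm. This is exactly the step that fails for the strong twirl. There, $\Pi_{123}$ is not of the form $\Pi_{12}\otimes(\cdot)$, so cyclicity gives no analogous identity, and this is why $\Lambda$ appears in Theorem~\ref{thm:PRT_PN}. Here no restriction on the state is needed.

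\emph{Uniqueness.} Two linear maps satisfying~\eqref{eq:lemma_2} for every $\rho_{123}$ must agree on the set $\{\mathcal V_1^{(23|1)}[\mathcal G^{(123)}[\rho_{123}]]\}$. Since $\mathcal G^{(123)}$ is idempotent with image the weakly invariant trace-class operators, and $\mathcal V_1^{(23|1)}$ is a bijection, this set is all of $\mathcal B_1(\mathcal H_{C,23|1})^G$. That is the full domain, so the two maps coincide. One point requires care: the map must send $\mathcal B_1(\mathcal H_{C,23|1})^G$ into $\mathcal B_1(\mathcal H_{C,2|1})^G$. This follows from the same commutation identity, since partial trace of a weakly invariant operator is weakly invariant on $12$ and refactorisation preserves invariance.

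\emph{Main obstacle.} I expect the main difficulty to be bookkeeping rather than conceptual. One must check that $V_1^{(23|1)}$ and $V_1^{(2|1)}$ are compatible, i.e.\ that refactorising $123$ and then tracing $3$ in Eve's factorisation is what the Kraus operators~\eqref{eq:Kraus_PRT_QI} implement. The distributional Kraus family $\operatorname K_l=\mathbb I_{12}\otimes\bra{l}_3$ must also be handled rigorously. I would deal with the latter by working in the Peter--Weyl basis, which gives a countable Kraus family, as in the footnote to Theorem~\ref{thm:PRT_PN}.
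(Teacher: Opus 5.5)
Your proposal is correct and follows essentially the same route as the paper. You reduce to the state identity via Lemma~\ref{lem:lemma_2}, use $\Tr_3\circ\mathcal G^{(123)}=\mathcal G^{(12)}\circ\Tr_3$ (unitary invariance of the partial trace on factor $3$), obtain existence and uniqueness from the fact that $\mathcal V_1^{(23|1)}\circ\mathcal G^{(123)}$ exhausts $\mathcal B_1(\hilbert_{C,23|1})^G$, and read off CPTP and the Kraus form from the composition of unitary conjugations with $\Tr_3$; the ``compatibility'' obstacle you flag does not actually arise, since inserting $\operatorname K_l=\mathbb I_{12}\otimes\langle l|_3$ into that composition yields the Kraus operators directly.
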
 
The action of the Perspective Relational Trace QI can be written as 
\begin{equation} \label{eq:PRT_QI_action}
        \operatorname{T}_{\rm PRT,QI}^{3|1} \left[ \rho_{C,23|1} \right] 
        =
        \mathcal V_1^{(2|1)}
        \left[
        \Tr_3\left[
        \left(\mathcal V_1^{(23|1)}\right)^{-1}\left[\rho_{C,23|1}\right]
        \right]
        \right].
\end{equation}
This makes explicit the sequence of operations underlying the map: \textit{(1)} map from the perspective of particle $1$ to the weakly invariant algebra, \textit{(2)} trace out particle $3$, and \textit{(3)} map back to the perspective of particle $1$. Note that here there is no need to introduce a normalised statistical condition, since the weak twirling is trace-preserving. Eq.\nbs\eqref{eq:PRT_QI_action} shows that no twirling is required after the partial trace in the QI approach. The reason is that the partial trace of a weakly invariant trace-class operator is again weakly invariant. This is a general fact: for any weakly invariant $\sigma_{123}\in\mathcal{B}_1(\hilbert_{123})^G$ one has $\sigma_{123}=\mathcal{G}^{(123)}[\sigma_{123}]$, so that Eq.\nbs\eqref{eq:proof intertwine} gives $\Tr_3[\sigma_{123}]=\mathcal{G}^{(12)}[\Tr_3[\sigma_{123}]]\in\mathcal{B}_1(\hilbert_{12})^G$. 

Hence, the partial trace is well defined on weakly invariant trace-class operators. In this sense, $\operatorname{T}_{\rm PRT,QI}^{3|1}$ is simply the perspectival transformation of the ordinary partial trace. Crucially, in the QI approach the Paradox of the Third Particle can be resolved for \emph{all} kinematical states. The reason can be understood by looking at Eq.\nbs\eqref{eq:statistic_QI}: the observable $\mathcal A_{12}^{\rm inv}\otimes\mathbb{I}_3$ is weakly invariant on the full system $123$, so by Eq.\nbs\eqref{eq:HSproductQI} the left-hand side of Eq.\nbs\eqref{eq:statistic_QI} depends on $\rho_{123}$ only through $\sigma_{123}=\mathcal{G}^{(123)}[\rho_{123}]$, exactly as the right-hand side does. By contrast, in the PN condition\nbs\eqref{eq:statisticpn} the observable $\mathcal A_{12}^{\rm phys}\otimes\mathbb{I}_3$ is \emph{not} strongly invariant on $123$ --- the factor $\mathbb{I}_3$ is not the trivial-charge projector $\Pi_3$ --- so its left-hand side depends on the kinematical representative $\rho_{123}$, not merely on its physical projection. This mismatch between the two sides is precisely the origin of the restriction to $\Lambda$. We can argue therefore that in the QI approach there is no genuine Paradox of the Third Particle: the PRT-QI is just a perspective trace. 

Recall from App.\nbs\ref{app:qi_details} that, in the refactorised representation, the weakly invariant algebra decomposes as $\mathcal{B}(\hilbert_{C,2|1})^G\cong  \mathcal{O}_{2|1} \otimes \overline{O_{2|1}}$, where operators in $\mathcal{O}_{2|1}$ in the refactorised representation are of the form $T_{C,2|1} = \mathbb{I}_C \otimes T_{2|1}$ acting trivially on the extra-particle. Applying the map $\operatorname{T}_{\rm PRT,QI}^{3|1}$ to the weakly invariant state associated with $\ket{\Psi}\bra{\Psi}_{123}$ in the perspective of particle $1$, one obtains two contributions. The first contribution is $A:=\frac{1}{2\volg} \ \mathbb{I}_C \otimes \left[\ket{a  b}\bra{a  b}_2+\ket{a^{-1}  b^{-1}}\bra{a^{-1}  b^{-1}}_2\right]$, and clearly $A\in \mathcal{O}_{2|1}$. The second contribution is $B
:=
\frac{1}{2\volg}\int_G dg\,
e^{i\theta}
\ket{g a}\bra{g a^{-1}}_C
\otimes
\ket{a^{-1}b^{-1}}\bra{ab}_2
+\mathrm{h.c.}$: this term is non-trivial on the extra-particle factor, and therefore $B\notin \mathcal{O}_{2|1}$. Thus, the information on $\theta$ is not contained in the subalgebra of \enquote{relational observables} in the strict sense, acting only on the relational subsystem $2|1$; rather, it is contained in the full weakly invariant algebra through correlations with the extra-particle degree of freedom. Tracing out this additional degree of freedom removes the term $B$ for generic choices of the group elements, and hence removes the coherence carrying the phase $\theta$ from the reduced description. In this sense, the Paradox would appear in the QI approach if the extra-particle degree of freedom were discarded.

\subsection{Discussion and comparison between strong and weak twirling}\label{sec:discussion}

In the previous sections, we introduced the PRT as an operational notion of subsystem discarding across QRF perspectives. In the PN approach, we showed that the Third-Particle Paradox cannot be resolved, in its original operational formulation, for arbitrary states. We then argued that our statistical consistency condition captures the operational content of the Paradox---namely, the mismatch between external and internal notions of subsystem discarding---and compared it with the condition underlying the RT of Ref.~\cite{paradoxmuller}. By contrast, in the QI approach, the Paradox can be resolved for all states: the PRT--QI satisfies the corresponding statistical consistency condition for every input state.

Our analysis shows that insisting on describing subsystem $12$ within the strong-twirling framework, after part of a larger physical system has been discarded, leads in general to an inconsistency with subsystem reduction. More precisely, the PN approach provides the appropriate framework for describing a closed system as a whole: it imposes a global constraint on the total system and selects the trivial-charge sector as the physical one. Such a description is natural when the system of interest represents the \enquote{whole universe}, so that no feature of the \enquote{whole}---for instance, the charge---has operational significance. However, if one is interested in describing a proper subsystem of a larger physical system, the strong-twirling description is not, in general, stable under discarding. In this case, the natural reduced description is instead weakly invariant.

This obstruction can be illustrated through the following toy model. For illustration, we now specialise to the translation group, thereby dropping the assumption that \(G\) is compact. Consider a three-particle state: assuming that the three particles form an isolated system, we describe them within the PN approach and impose the global constraint
\begin{equation}
    P_{\rm tot}=p_1+p_2+p_3=0,
\end{equation}
so that physical states lie in the zero-total-momentum sector.  Suppose now that we are only interested in describing subsystem \(12\). What is the constraint on the momentum of this subsystem? We should clearly not require the subsystem momentum to vanish, since it is correlated with the momentum of particle \(3\): the global constraint only implies 
$p_1+p_2=-p_3$. In other words, the global strong-twirling projector clearly does not impose a vanishing charge on subsystem \(12\) alone. Rather, it correlates the charge sector of subsystem \(12\) with the compensating charge sector of particle \(3\). This is expressed by the following decomposition of the global projector onto the trivial-charge sector. For simplicity, we write it for a compact Abelian symmetry group; the general compact case is discussed in App.~\ref{app:strong_twirling}:
\begin{equation}\label{eq:strong_projector_charge_decomposition}
\Pi_{123}
=
\sum_{q\in \hat{G}}
\Pi_{12}^{(q)}\otimes \Pi_{3}^{(-q)} .
\end{equation}
Here, $\Pi_i^{(q)}$ projects onto the charge-$q$ sector of subsystem $i$. It follows immediately that tracing out \(3\) leaves subsystem \(12\) in a mixture of charge sectors and therefore, in general, in a state that is not supported on the physical subspace of subsystem \(12\). In particular, restricting to the factorised sector selected by \(\Pi_{12}\otimes \Pi_3\), as in the RT characterisation, would amount, in the translation example, to imposing $
p_1+p_2=0$ and $p_3=0$,
thereby erasing the contributions from the nontrivial sectors, i.e. those with \(p_1+p_2\neq 0\). How, then, should subsystem \(12\) be described? Let \(J:\mathcal H^{\rm phys}_{123}\to\mathcal H_{123}\) denote the canonical kinematical embedding, with \(J^\dagger J=\mathbb I_{\mathcal H^{\rm phys}_{123}}\) and \(JJ^\dagger=\Pi_{123}\). Given a physical state \(\rho^{\rm phys}_{123}\), we regard \(\tilde \rho_{123}^{\rm phys} := J\rho^{\rm phys}_{123}J^\dagger\) as a kinematical operator supported on the globally invariant sector. For the translation group, using the momentum representation, such a state can be written as
\begin{equation}
    \tilde \rho_{123}^{\rm phys}
    =
    \int dp_1dp_2dp'_1dp'_2 \,
    \varphi(p_1,p_2,p'_1,p'_2)
    \ket{p_1}\bra{p'_1}_1
    \otimes
    \ket{p_2}\bra{p'_2}_2
    \otimes
    \ket{-p_1-p_2}\bra{-p'_1-p'_2}_3 .
\end{equation}
Tracing out particle \(3\) and performing the change of variables \(P=p_1+p_2\), \(p_1=r\), and \(p'_1=r'\), we obtain
\begin{equation}
    \Tr_3\left[\tilde \rho_{123}^{\rm phys}\right]
    =
    \int dPdrdr' \,
    \vartheta(P,r,r')
    \ket{r}\bra{r'}_1
    \otimes
    \ket{P-r}\bra{P-r'}_2 ,
\end{equation}
with \(\vartheta(P,r,r')=\varphi(r,P-r,r',P-r')\). Thus, the reduced state is block-diagonal in the total momentum of subsystem \(12\). It is therefore weakly invariant under translations of \(12\). Therefore, tracing out part of a global physical state leads to a weakly invariant subsystem state.

This discussion suggests that, once one starts from the PN approach but wishes to describe a proper subsystem, a weakly invariant description emerges naturally at the level of the reduced state by application of the kinematical partial trace over subsystem 3. The appropriate framework therefore depends on the physical situation under consideration. For a closed system describing the totality of the relevant degrees of freedom, or the \enquote{whole universe}, the PN approach is appropriate. By contrast, when one discards part of a larger closed system, the kinematical partial trace does not generally produce a PN-physical state of the remaining subsystem; rather, it produces a weakly invariant state. 

At the level of trace-class operators, it turns out that, for general compact groups, the kinematical partial trace over subsystem $3$ has as its image the full ideal of weakly invariant trace-class operators of subsystem $12$.\footnote{The surjectivity claim underlying this statement can be regarded as the third-particle analogue of Ref.~\cite[Appendix: Compatibility with an external zero-charge state]{CastroRuiz2025RelativeSubsystems}, here formulated in terms of trace-class operators and proven by a different argument.} We capture this in the following proposition, proven in App.~\ref{app:proof_proposition}.
\begin{proposition} \label{prop:trace_surjectivity}
Let $G$ be compact and let $\mathcal B_1(\mathcal H_{123}^{\rm phys})$ denote the trace-class ideal on the physical Hilbert space of the composite system $123$. Then, after embedding physical trace-class operators into the kinematical Hilbert space, the kinematical partial trace over subsystem $3$ satisfies
\begin{equation} \label{eq:traccia-sul-fisico-invariant}
\operatorname{Tr}_3\left[\mathcal B_1(\mathcal H_{123}^{\rm phys})\right]
=
\mathcal B_1(\mathcal H_{12})^{G}.
\end{equation}
\end{proposition}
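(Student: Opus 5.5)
We prove the two inclusions in Eq.~\eqref{eq:traccia-sul-fisico-invariant} separately. Throughout we write $U_{12}(g)=U_1(g)\otimes U_2(g)$ and identify $\mathcal B_1(\hilbert_{123}^{\rm phys})$ with $\Pi_{123}\mathcal B_1(\hilbert_{123})\Pi_{123}$ via the embedding $J$.

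\emph{Inclusion ``$\subseteq$''.} Let $\sigma=\Pi_{123}\rho\Pi_{123}$. Since $U_{123}(g)\Pi_{123}=\Pi_{123}$, we have $U_{123}(g)\sigma U_{123}(g)^\dagger=\sigma$. Partial trace intertwines $U_{12}(g)\otimes U_3(g)$ conjugation with $U_{12}(g)$ conjugation, because the $U_3$ factor cancels under $\Tr_3$ by cyclicity on factor~3. Hence
\[
U_{12}(g)\,\Tr_3[\sigma]\,U_{12}(g)^\dagger=\Tr_3[\sigma],
\]
so $\Tr_3[\sigma]$ is weakly invariant. It is trace class because $\Tr_3$ maps $\mathcal B_1$ to $\mathcal B_1$. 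This direction is routine.

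\emph{Inclusion ``$\supseteq$''.} This is the substantive direction, and we prove it by an explicit preimage construction. Decompose $\hilbert_{12}=\bigoplus_q \hilbert_{12}^{(q)}$ into isotypic components. Each component has the form $V_q\otimes M_q$, with $G$ acting irreducibly on $V_q$ and trivially on the multiplicity space $M_q$. By Schur's lemma, every weakly invariant trace-class $\tau$ has the form
\[
\tau=\bigoplus_q \mathbb I_{V_q}\otimes \tau_q ,
\]
with each $\tau_q$ trace class on $M_q$ and $\sum_q d_q\Tr\,|\tau_q|<\infty$.

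By Peter--Weyl (Eq.~\eqref{eq:HS_decomposition}), $\hilbert_3=L^2(G)$ contains every irrep, in particular the conjugate $\bar q$. Choose a fixed unit vector $\ket{\chi_{\bar q}}$ in the multiplicity index, and let $W_{\bar q}\subset\hilbert_3$ be the $d_q$-dimensional copy of $V_{\bar q}$, spanned by $\ket{\bar q,m,n_0}$. The space $V_q\otimes V_{\bar q}$ contains a unique invariant unit vector
\[
\ket{\omega_q}=\frac{1}{\sqrt{d_q}}\sum_m \ket{m}\ket{\bar m},
\]
which is maximally entangled. We therefore define
\[
\sigma:=\bigoplus_q d_q\,\ket{\omega_q}\bra{\omega_q}_{V_qW_{\bar q}}\otimes \tau_q ,
\]
arranging the $M_q$ factor appropriately. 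Each summand is invariant under $U_{12}\otimes U_3$, so the summands are supported on $\Pi_{123}\hilbert$, i.e.\ on the physical subspace. Because $\ket{\omega_q}$ is maximally entangled, tracing out $W_{\bar q}$ gives $\mathbb I_{V_q}/d_q$, so $\Tr_3[\sigma]=\tau$.

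The main technical point is that $\sigma$ must be a single trace-class operator rather than a formal sum. Cross-sector terms are absent because $\sigma$ is block diagonal in $q$. Convergence follows from
\[
\|\sigma\|_1=\sum_q d_q\|\tau_q\|_1=\|\tau\|_1<\infty .
\]
We must also check that distinct $q$ use orthogonal subspaces $W_{\bar q}$ of $\hilbert_3$. They do, since different irreps occupy orthogonal isotypic components; this also needs $q\mapsto\bar q$ to be injective, which holds. For continuous $G$ we work entirely in the Peter--Weyl basis, which avoids the distributional kets $\ket g$. Linearity of the construction in $\tau$ (so that it applies to non-positive $\tau$) and the block bound above complete the argument.
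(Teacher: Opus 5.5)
Your proof is correct and is essentially the paper's argument. Both proofs rest on writing $\Pi_{123}$ as a sum of invariant maximally entangled lines $\ket{\omega_q}\in\hilbert^{(q)}_{12_L}\otimes\hilbert^{(\bar q)}_{3_L}$ tensored with the multiplicity spaces, and on $\Tr_{3_L}[\ket{\omega_q}\bra{\omega_q}]=\mathbb{I}^{(q)}_{12_L}/d_q$. The differences are only in presentation: you obtain the inclusion $\subseteq$ from the covariance identity $\Tr_3\circ\mathcal G^{(123)}=\mathcal G^{(12)}\circ\Tr_3$ (which the paper uses in the proof of Theorem~\ref{thm:PRT_QI}) rather than from the block form, and you write out the preimage $T^{(q,q)}=d_q\,\tau_q\otimes\ket{\chi_{\bar q}}\bra{\chi_{\bar q}}$, $T^{(q,q')}=0$ for $q\neq q'$, with the bound $\|\sigma\|_1=\|\tau\|_1$, which the paper leaves implicit in ``since $T^{(q,q')}$ is arbitrary, so is $\tau^{(q)}$''.
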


In this sense, starting from a PN description of a closed system, tracing out a subsystem naturally yields a reduced state in the invariant algebra, where the extra-particle degree of freedom becomes relevant. This suggests that the QI approach arises naturally as the subsystem description induced by the PN approach through subsystem discarding.

Importantly, the previous observation does not imply that the Third-Particle Paradox can be resolved for all states merely by using the ordinary partial trace as the notion of subsystem discarding after PN physicalisation. In fact, relational information about subsystem $12$ accessible from Eve's perspective may be washed away by the projector $\Pi_{123}$ followed by the standard partial trace. We capture this intuition via the following statistical condition.
\begin{definition}[PN--QI statistical consistency condition]\label{def:PN_vs_QI}
We say that $\operatorname{T}_{\rm PRT,QI}^{3|1}$ \emph{satisfies the PN--QI statistical consistency condition} for a kinematical state $\rho_{123} \in \mathcal B_1(\mathcal H_{123})$, with induced PN perspectival state $
\rho_{C,23|1}^{\rm PN} \;:=\; P_0^C\otimes \rho_{23|1}$,
if, for every observable $\mathcal A_{C,2|1} \in \mathcal{B}(\hilbert_{C,2|1})^G$ with corresponding weakly-invariant observable $\mathcal A_{12}^{\mathrm{inv}} \in \mathcal{B}(\mathcal{H}_{12})^G$, it holds that
\begin{equation} \label{eq:statistic_PN_vs_QI}
     \Tr\left[\left(\mathcal{A}_{12}^{\mathrm{inv}} \otimes \mathbb{I}_3\right) \rho_{123} \right] = \Tr\left[\mathcal{A}_{C,2|1} \operatorname{T}_{\rm PRT,QI}^{3|1}\!\left[\rho_{C,23|1}^{\rm PN}\right]\right]
\end{equation}
\end{definition}
We then characterise the domain of validity of the PN--QI consistency conditions via the following Theorem, proven in App.\nbs\ref{app:proof_theorem_3}.

\begin{theorem}[Characterisation of the PN--QI consistency condition]\label{thm:PN_QI}
    The PRT-QI satisfies the PN-QI statistical consistency condition if and only if the kinematical state $\rho_{123} \in \mathcal{B}_1(\hilbert_{123})$ is in the subspace
    \begin{equation}
    \Gamma \;=\;
    \Bigl\{ \; \rho_{123} \in \mathcal{B}_1(\hilbert_{123}) \;\bigm|\; \mathcal{G}^{(12)}\left[\Tr_3\left[\rho_{123}\right]\right] = \Tr_3\left[\Phi^{(123)}\left[\rho_{123}\right]\right]
    \Bigr\}.
    \end{equation}
\end{theorem}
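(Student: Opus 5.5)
The plan is to reduce the PN--QI condition to a state identity, as in Lemma~\ref{lem:lemma_2}, and then compute the PRT--QI explicitly on the PN input.

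First, rewrite the left-hand side of Eq.~\eqref{eq:statistic_PN_vs_QI}. Since $\mathcal A_{12}^{\rm inv}$ is weakly invariant, $\Tr[(\mathcal A_{12}^{\rm inv}\otimes\mathbb I_3)\rho_{123}]=\Tr[\mathcal A_{12}^{\rm inv}\Tr_3\rho_{123}]=\Tr[\mathcal A_{12}^{\rm inv}\,\mathcal G^{(12)}[\Tr_3\rho_{123}]]$, using self-duality of the twirl. Transporting by the unitary $V_1^{(2|1)}$ turns this into $\Tr[\mathcal A_{C,2|1}\,\mathcal V_1^{(2|1)}[\mathcal G^{(12)}[\Tr_3\rho_{123}]]]$.

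Second, compute the right-hand side input. By Eq.~\eqref{eq:pn_extra_particle}, applied to $T=\rho_{123}$ with $i=1$, we have $P_0^C\otimes\rho_{23|1}=\mathcal V_1^{(23|1)}[\Phi^{(123)}[\rho_{123}]]$. I will check that the normalisation conventions for $\rho_{23|1}$ match, or else take $\rho_{23|1}$ unnormalised as in Def.~\ref{def:PRT_S}. Inserting this into the action~\eqref{eq:PRT_QI_action} gives
\[
\operatorname{T}^{3|1}_{\rm PRT,QI}[\rho^{\rm PN}_{C,23|1}]=\mathcal V_1^{(2|1)}\bigl[\Tr_3[\Phi^{(123)}[\rho_{123}]]\bigr].
\]
This uses that $\Phi^{(123)}[\rho_{123}]$ is strongly, hence weakly, invariant, so the map is applied within its domain. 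By Proposition~\ref{prop:trace_surjectivity}, $\Tr_3[\Phi^{(123)}[\rho_{123}]]$ is weakly invariant. The right-hand side therefore equals $\Tr[\mathcal A_{C,2|1}\mathcal V_1^{(2|1)}[\Tr_3\Phi^{(123)}\rho_{123}]]$.

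Third, compare the two sides. Both are pairings of an arbitrary element of $\mathcal B(\hilbert_{C,2|1})^G$ with elements of $\mathcal B_1(\hilbert_{C,2|1})^G$. Non-degeneracy of the trace pairing between the weakly invariant bounded operators and the weakly invariant trace-class operators yields the equivalence. The reason is that for any weakly invariant trace-class $X$, $\Tr[AX]=\Tr[\mathcal G[A]X]$ for all bounded $A$, so vanishing against all invariant $A$ forces $X=0$. The condition is thus equivalent to $\mathcal V_1^{(2|1)}[\mathcal G^{(12)}\Tr_3\rho_{123}]=\mathcal V_1^{(2|1)}[\Tr_3\Phi^{(123)}\rho_{123}]$. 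Removing the unitary conjugation gives exactly the defining equation of $\Gamma$, which proves both directions.

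I expect the main obstacle to be bookkeeping rather than conceptual. Two points need care. The first is ensuring that Eq.~\eqref{eq:pn_extra_particle} holds with the same normalisation of $P_0^C$ and of the reduction map, including the $\sqrt{\volg}$ factors, so that no stray scalar appears. The second is justifying, for continuous $G$, the distributional manipulations and the non-degeneracy argument via the duality $\mathcal B_1^*=\mathcal B$ restricted to invariant parts. Linearity of $\Gamma$ then shows it is a subspace.
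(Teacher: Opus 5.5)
Your proposal is correct and follows the paper's proof. Both reduce the PN--QI condition to an equality of trace pairings of $\mathcal G^{(12)}[\Tr_3[\rho_{123}]]$ and $\Tr_3[\Phi^{(123)}[\rho_{123}]]$ against all $\mathcal A_{12}^{\rm inv}\in\mathcal B(\hilbert_{12})^G$, invoke Proposition~\ref{prop:trace_surjectivity} for the weak invariance of the second operator, and conclude by non-degeneracy of the invariant pairing. You additionally spell out a step the paper leaves implicit: by Eq.~\eqref{eq:pn_extra_particle}, $P_0^C\otimes\rho_{23|1}=\mathcal V_1^{(23|1)}[\Phi^{(123)}[\rho_{123}]]$, with the normalisations checking out, so the PRT--QI returns $\mathcal V_1^{(2|1)}[\Tr_3[\Phi^{(123)}[\rho_{123}]]]$.
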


As in the PN case, one may formulate a weaker, normalised version of the statistical condition\nbs\eqref{eq:statistic_PN_vs_QI}. This replaces the linear subspace \(\Gamma\) by the kinematical set
\begin{equation}
\widehat{\Gamma}
=
\Bigl\{
\rho_{123}\in\mathcal B_1(\mathcal H_{123})
\;\bigm|\;
\mathcal G^{(12)}
\!\left[
\Tr_3[\rho_{123}]
\right]
\sim
\Tr_3
\!\left[
\Phi^{(123)}[\rho_{123}]
\right]
\Bigr\}.
\end{equation}

The set $\widehat\Gamma$ consists precisely of those kinematical operators for which the weakly invariant content of Eve's reduced description of subsystem $12$ coincides with the reduced description obtained after strong physicalisation of the global system $123$. For states outside $\widehat\Gamma$, the strong-twirling followed by discarding subsystem $3$ may fail to preserve weakly invariant relational information about subsystem $12$ that was externally accessible.\footnote{This statement, and more precisely Theorem\nbs\ref{thm:PN_QI}, may appear to be in tension with Proposition\nbs\ref{prop:trace_surjectivity}. However, the latter is a surjectivity statement: every weakly invariant operator $12$ can be realised as the partial trace of some physical operator $123$. This does not imply that the kinematical $\rho_{123}$ is \emph{faithful} in the sense of Proposition\nbs\ref{prop:trace_surjectivity}. That is, Proposition\nbs\ref{prop:trace_surjectivity} does not imply that, for a given kinematical state, its weakly invariant reduction coincides with the reduction obtained after applying strong invariance to that state. The Paradox, and hence Eq.\nbs\eqref{eq:statistic_PN_vs_QI}, concerns this state-dependent comparison, which is why the resolution is restricted to the domain $\widehat\Gamma$.} We provide explicit examples of this mechanism in App.\nbs\ref{app:example_paradox_PN_QI}. The Third-Particle Paradox is thus not a genuine contradiction, but rather the consequence of comparing two different layers of description without keeping track of which information is externally accessible and which is internally accessible. By contrast, the analysis of \cref{sec:weak-twirling-PRT} shows that, if one adopts a weakly-invariant description already at the level of the total system $123$, the internally and externally accessible relational information about subsystem $12$ coincides. 

Therefore, we have different levels at which relational information about subsystem \(12\) can be represented, depending on the status of the total system \(123\). We summarise the discussion in the following table.
\begin{table}[h]
\centering
\renewcommand{\arraystretch}{1.2}
\begin{tabular}{|c|c|c|}
\hline
\textbf{Role of 12} & \textbf{Status of 123} & \textbf{Space encoding relational information about 12} \\
\hline
Closed total system & --- & \(\mathcal{B}_1(\mathcal H_{12}^{\rm phys})\) \\
\hline
Subsystem & Globally trivial charge & \(\Tr_3\,\mathcal{B}_1(\mathcal H_{123}^{\rm phys})= \mathcal{B}_1(\mathcal H_{12})^G\) \\
\hline
Subsystem & Charge-superselected & \(\mathcal{B}_1(\mathcal H_{12})^G\) \\
\hline
\end{tabular}
\caption{Operator spaces encoding relational information about subsystem \(12\), depending on the status of the total system. When \(12\) is itself regarded as a closed total system, relational information is encoded in the trivial-charge sector. If \(12\) is instead a subsystem of a physical system \(123\) with globally trivial charge, the charge of subsystem \(12\) compensates the charge of subsystem \(3\); after tracing out \(3\), the reduced description of \(12\) is weakly invariant. Finally, if the total system is described from the outset in the QI framework, relational information about \(12\) is encoded in the subsystem weakly invariant operator space.}
\revtexfloatlabel{tab:relational_info_levels}
\end{table}

These results clarify the physical meaning of the two QRF frameworks. The contrast between them is not a matter of one approach being more fundamental than the other, but of their stability under subsystem discarding: the PN approach consistently describes a closed system subject to a global constraint, whereas the QI approach remains stable when one restricts to a proper subsystem of a larger whole. Which framework is appropriate is therefore dictated by the physical question at hand, namely whether the relevant degrees of freedom are being treated as a closed system or as an open subsystem coupled to an environment. This is demonstrated by our independent analysis of the Third-Particle Paradox based on statistical consistency conditions across the different approaches. In particular, the impossibility of solving the Paradox in the PN approach outside \(\widehat\Lambda\) provides an operational and mathematical confirmation that the PN description is not, in general, stable under subsystem discarding.

Furthermore, our framework shows that the QI approach accommodates arbitrary subsystems, since the corresponding statistical consistency condition can be solved for all states. While Ref.~\cite{CastroRuiz2025RelativeSubsystems} argues that the QI framework is the appropriate one for arbitrary subsystems, our statistical conditions make this statement precise. Two cases must be distinguished, and our work separates them sharply. When the total system~123 is itself described within the QI approach, the statistical consistency condition holds for \emph{all} states (Theorem~\ref{thm:PRT_QI}): the weakly invariant information accessible from the external frame coincides with that accessible from the internal QRF after internal subsystem discarding. A description that adopts the weakly invariant algebra already at the level of the total system is therefore stable under discarding any of its parts. When, instead, the total system is described within the PN approach and the QI description of subsystem~12 is reached through the kinematical partial trace, this is no longer the case. The QI description does emerge in this way: tracing out particle~3 from a globally physical state lands subsystem~12 in the weakly invariant algebra, and this map is onto (Proposition~\ref{prop:trace_surjectivity}). Surjectivity, however, concerns which states are \emph{reachable}, not whether a \emph{given} state is faithfully reduced: for a fixed PN state $123$, the weakly invariant reduction of subsystem~12 need not retain the weakly invariant relational information that the external frame accesses by tracing out particle~3 directly. This is made precise in Theorem~\ref{thm:PN_QI}: the PN--QI consistency condition is solved only on the proper subset $\widehat\Gamma$, with an explicit product state outside it given in App.~\ref{app:example_paradox_PN_QI}. Hence the stability of the QI description under subsystem discarding holds in full when the weakly invariant algebra is adopted from the outset; recovering it by physicalising 123 and then tracing out particle 3 does not, in general, restore, state by state, all subsystem relational information that is accessible externally.

The Paradox of the Third Particle in the PN approach clearly recalls the argument introduced in~\cite{Rovelli2014WhyGauge}. In this work, the author argues that gauge-dependent variables are not just mathematical redundancies, but rather provide the handles through which subsystems couple. Indeed, the structural obstruction underlying point~\textit{(2)} of the Paradox provides a concrete QRF realisation of this mechanism. When subsystem $12$ is treated as isolated and one imposes strong invariance, only the 
trivial-charge sector $p_1 + p_2 = 0$ is retained. The non-trivial charge sectors of $12$ 
are projected out and discarded as pure gauge redundancy. However, once particle $3$ is 
added and one imposes strong invariance on the full system $123$, the physical Hilbert space 
contains, in general, sectors in which the charge of $12$ is non-trivial and correlated to 
that of $3$, via the constraint $p_1 + p_2 = -p_3$. These are precisely the sectors that 
were discarded when treating $12$ in isolation. In other words, the non-trivial charge 
sectors of $12$---which appear redundant from the perspective of $12$ alone---encode exactly 
how $12$ couples to its environment when the system is open. The theory obtained by imposing strong 
invariance separately on $12$ and $3$---that is, the factorised subspace selected by the RT---is, in general, strictly smaller than the theory 
obtained by first coupling $12$ and $3$ at the kinematical level and only afterwards imposing 
gauge invariance on the total system. The non-trivial charge sectors of 
$12$ that appear when $3$ is present are therefore not gauge artefacts, but carry genuine 
relational information about how $12$ interacts with its environment. In this sense, the impossibility of resolving the Third-Particle Paradox within the PN 
approach for states outside $\hat \Lambda$ is not merely a technical obstruction. It reflects that strong invariance, which implements the gauge constraint at the quantum level, by construction eliminates the degrees of freedom that become physically relevant when a subsystem interacts with an environment. Since this mechanism is not genuinely quantum, but due to the gauge structure of the theory, it is also manifested at the classical level. In App.~\ref{app:classical} we exhibit a classical instance of this mechanism, where imposing the classical constraint and then discarding particle~3 erases relational information about subsystem~12 that remains externally accessible. This complements the classical perspectival formulation of the Paradox recently established in Ref.~\cite[Sec.\nbs3]{rennerhausmann},\footnote{Ref.~\cite{rennerhausmann} appeared on the arXiv while this work was being completed. They establish a no-go theorem holding whenever physical systems, classical or quantum, are used as reference frames. Our findings are consistent with their discussion of the previous resolutions of the Paradox. In particular, our analysis makes transparent why the RT --- and, more generally, the description of a subsystem within the PN approach (Sec.\nbs\ref{sec:paradox-strong-twirling}) --- does not result in a consistent atlas of perspectives: discarding particles $3$ and $4$ simultaneously restricts to the sector selected by $\Pi_{12}\otimes\Pi_{34}$, thereby discarding the coupling between $12$ and $34$, whereas discarding them one at a time lands on $\Pi_{12}\otimes\Pi_{3}\otimes\Pi_{4}$ --- a strictly smaller sector, as in the latter the coupling between $3$ and $4$ is discarded as well. Furthermore, note that our PRT--QI\nbs\eqref{eq:PRT_QI_action} acts non-trivially on the extra-particle.} whose construction provides the classical counterpart of point \textit{(1)}. On the other hand, App.~\ref{app:classical} provides a classical instance of point~\textit{(2)} underlying the Third-Particle Paradox.  

This structure has a clear counterpart in the literature on entanglement entropy and edge modes in gauge theories\nbs\cite{Donnelly:2011hn,Donnelly2014nonabelian,Donnelly_2016,CasiniHuertaRosabal2014,Soni:2015yga,VanAcoleyen:2015ccp,Delcamp:2016eya,AraujoRegado:2025relational}, as mentioned in\nbs\cite{paradoxmuller}. With our findings, and more specifically Proposition\nbs\ref{prop:trace_surjectivity}, we can make this analogy precise. In the edge-modes literature, the physical Hilbert space does not factorise across a subregion $\Sigma$ and its complement $\bar\Sigma$, because Gauss' law ties the boundary charge of $\Sigma$ to that of $\bar\Sigma$---the direct analogue of the way strong twirling correlates the charge of subsystem $12$ with that of particle $3$ through $p_1+p_2=-p_3$ in our setting. In the extended-Hilbert-space approach, one embeds the physical state into an enlarged tensor-product Hilbert space and then traces out the complement~\cite{Donnelly:2011hn,Donnelly2014nonabelian,Donnelly_2016,Delcamp:2016eya}. In algebraic terms, the corresponding reduced description is block-diagonal in boundary-charge superselection sectors and is captured by the electric-center algebra~\cite{CasiniHuertaRosabal2014,Soni:2015yga}. Our Proposition~\ref{prop:trace_surjectivity} realises the same mechanism in the Third-Particle Paradox: embedding the physical states in the kinematical space and then tracing out particle $3$ from a globally physical state lands subsystem $12$ in the weakly invariant algebra $\mathcal B_1(\hilbert_{12})^G$, decomposed into charge sectors as in Eq.~\eqref{eq:states-traced-physical}. Equivalently, strong invariance followed by the partial trace acts as an incoherent twirl of subsystem \(12\). This is the minimal third-particle counterpart of the edge-mode mechanism. In Ref.~\cite{AraujoRegado:2025relational}, the electric center is recovered by incoherently twirling an extrinsic relational algebra over the electric corner symmetry group. In App.~\ref{app:edge_modes}, we show how to recover a distillable algebra for subsystem \(12\) in the Third-Particle Paradox, and then obtain the algebra of Eq.~\eqref{eq:states-traced-physical} via an incoherent twirl, in a scenario reminiscent of the extrinsic-frame construction of Ref.~\cite{AraujoRegado:2025relational}. As in those constructions, the resulting state lives in a charge-superselected algebra, and the corresponding entropy is not distillable. We also recover the third-particle analogue of the non-distillable, electric center entropy: the von Neumann entropy of Eq.~\eqref{eq:states-traced-physical} is given by $H(\{p_q\})+\sum_q p_q\big(\log d_q+S(\hat\tau^{(q)})\big)$ (see Corollary~\ref{cor:entropy_decomposition} in App.~\ref{app:edge_modes}) and is the third-particle counterpart of the three-term decomposition of Refs.~\cite{Donnelly:2011hn,Donnelly2014nonabelian} and the electric center entropy of Ref.~\cite{AraujoRegado:2025relational}. In this sense, the Third-Particle Paradox is a minimal toy model of the edge-mode mechanism: the non-trivial charge sectors of $12$, redundant when $12$ is treated in isolation, play the role of the boundary sectors through which $12$ couples to its environment when the system is open, in analogy to how the boundary charges at the boundary of $\Sigma$ describe the coupling between the subregion and its complement.

\section{Conclusions}\label{sec:conclusion}

In this work, we proposed a new resolution and contributed to unveiling the mechanism underlying the Paradox of the Third Particle~\cite{Angelo_2011} within the QRF setting. Our starting point was a careful diagnosis of why the Paradox emerges, which we traced back to two conceptually distinct mechanisms. The first is the QRF-covariance of the partial trace. We addressed this through the \emph{Perspective Trace}, which implements subsystem discarding consistently across QRF perspectives related by unitary transformations. The second mechanism is more subtle and specific to the PN framework: the projection onto the physical Hilbert space does not inherit the kinematical tensor-product structure, so that the partial traces taken before and after physicalisation are, in general, inequivalent operations. A natural candidate for discarding a subsystem at the physical Hilbert space level is the Relational Trace of~\cite{paradoxmuller}, which reformulates the discarding operation directly at the level of physical, i.e. relational, degrees of freedom. We argued, however, that satisfying its statistical consistency condition does not certify an operational resolution of the Paradox, showing an explicit product state on which the Relational Trace vanishes. Furthermore, we argued that the Relational Trace condition overlooks the mechanism at the heart of the Paradox: the clash between the partial trace performed before and after physicalisation.

Building on this diagnosis, we reformulated the Paradox via a statistical consistency condition and introduced the \emph{Perspective Relational Trace} (PRT), the notion of subsystem discarding that this condition singles out. A crucial feature of our approach is that we do not require the condition to hold for all states; in fact, we showed that our consistency condition cannot be satisfied for a certain class of states in the PN approach. This failure is not a defect but the main feature of our construction. The condition compares the reduced description obtained by tracing out particle~3 \emph{before} physicalisation with the one obtained \emph{after} physicalisation; the two disagree when the subsystem relational information externally accessible to Eve is not accessible internally once particle~3 is discarded after physicalisation. In this sense, we argued that our condition captures the operational content of the Paradox, whereas the Relational Trace condition, being evaluated entirely after physicalisation, overlooks this mismatch. Within the PN approach, this confines the resolution of the Paradox to a subset of states---including product states for which particle~3 is uncorrelated with subsystem~12. In the QI approach, by contrast, the analogous condition is satisfied for all states, since the subsystem relational information---encoded here in the weakly invariant algebra---accessible externally, to Eve, and internally, in the perspective of particle~1 after applying the PRT--QI, coincides. This rests on treating the full $G$-invariant algebra as relational, comprising relational observables together with the extra-particle degree of freedom; discarding the extra-particle would reinstate the Paradox within the QI framework as well.

Our technical findings have a transparent physical motivation. The PN approach imposes a global constraint on the whole system, and is therefore the appropriate description for a closed, isolated system. The non-trivial charge sectors that strong invariance discards as pure gauge are, however, precisely those through which a subsystem couples to its environment once the system is open\nbs\cite{Rovelli2014WhyGauge}. Imposing gauge invariance and then discarding a subsystem is not the same as discarding the subsystem and then imposing invariance: the latter misses the charges through which the subsystem couples to its environment. Thus, the charge-coupling obstruction underlying point \textit{(2)} of the Paradox is due to the gauge structure of the theory, rather than to genuinely quantum effects, complementing the classical perspectival version of the Paradox recently discussed in\nbs\cite{rennerhausmann}.  Furthermore, we showed that the QI approach is stable under subsystem discarding and can accommodate arbitrary subsystems, in line with the discussion in\nbs\cite{CastroRuiz2025RelativeSubsystems}. We make this statement precise through operational consistency conditions, and thereby separate three distinct levels at which subsystem~12 can be described once particle~3 is discarded.

If subsystem~12 is described within the PN approach (PN total system, PN subsystem), the Paradox cannot be resolved for all states: the consistency condition fails outside $\widehat\Lambda$, and the failure already occurs on product states for which particle~3 is uncorrelated with subsystem~12 (Theorem~\ref{thm:PRT_PN}). If, instead, the QI description is adopted from the outset (QI total system, QI subsystem), the condition holds for \emph{all} states: the relational information about~12 accessible externally, to Eve, and internally, after discarding particle~3, coincides (Theorem~\ref{thm:PRT_QI}). The intermediate case is a hybrid scenario: the total system is described in the PN approach and subsystem~12 is reached through the kinematical partial trace (PN total system, QI subsystem). At the level of operator algebras the QI description emerges in full---tracing out particle~3 maps the physical ideal onto the entire weakly invariant algebra (Proposition~\ref{prop:trace_surjectivity}). State by state, however, this is not the case: the externally accessible relational information is preserved only on the proper subset $\widehat\Gamma$ (Theorem~\ref{thm:PN_QI}), so that an instance of the Paradox persists. The stability of the QI description under subsystem discarding therefore holds in full only when it is adopted already for the total system; recovering it a posteriori from a globally physical PN state does not, by itself, restore all externally accessible information state by state.

Furthermore, by Proposition~\ref{prop:trace_surjectivity}, tracing out particle~3 from a globally physical state lands subsystem~12 in the weakly invariant algebra, and we argued that the QI description of a subsystem emerges from a PN description of the closed total system through the kinematical partial trace. This mechanism connects to the edge-mode program\nbs\cite{Donnelly:2011hn,Donnelly2014nonabelian,Donnelly_2016,CasiniHuertaRosabal2014,Soni:2015yga,VanAcoleyen:2015ccp,Delcamp:2016eya,Carrozza2024EdgeModesDynamicalFrames,AraujoRegado2025SoftEdges,AraujoRegado:2025relational}: the boundary charges that appear once particle~3 is traced out are the analogue, in our minimal setting, of those carried by the boundary of a spacetime subregion in a gauge theory, and they are genuine physical content rather than gauge redundancy. In this sense, the Third-Particle Paradox provides a toy model of the edge-mode mechanism.

Ultimately, our work shows that the Paradox of the Third Particle is not a genuine contradiction, but the consequence of comparing different levels of description without keeping track of which subsystem relational information is externally accessible and which internally.

\section*{Acknowledgments}

We thank Augustin Vanrietvelde and Guilhem Doat for discussions. We also thank Bruna Sahdo and Esteban Castro-Ruiz for feedback on a draft of this paper. A.P. further thanks Anne-Catherine de la Hamette for discussions on edge modes. L.A. is supported by the STeP2 grant (ANR-22-EXES-0013) of Agence Nationale de la Recherche (ANR) and the ANR grant TaQC (ANR-22-CE47-0012). 

\par\medskip
\noindent\textbf{Note.}\quad
Before the completion of this work we became aware of an upcoming work by Bruna Sahdo and
Esteban Castro-Ruiz\nbs\cite{sahdocastroruiz}, which analyses the compositional structure of
the extra-particle framework for quantum reference frame transformations, including the
paradox of the third particle.

\bibliographystyle{IEEEtran}
\bibliography{bibliography/bibliography}

\appendix
\renewcommand{\thesubsection}{\thesection.\arabic{subsection}}
\section{Perspective Neutral Approach: mathematical details}\label{app:pn_details}

We use the trace pairing between trace-class and bounded operators,
\begin{equation}
    \langle \rho,\mathcal{A}\rangle_{\rm tr}:=\Tr[\rho\mathcal{A}],
    \qquad \rho\in\mathcal{B}_1(\hilbert),\quad \mathcal{A}\in\mathcal{B}(\hilbert).
\end{equation}
For any $\mathcal{A} \in  \mathcal{B}(\hilbert)$ and $\rho \in  \mathcal{B}_1(\hilbert)$, let
$\mathcal{A}_{\rm phys}:=\Pi\mathcal{A}\Pi \in \mathcal{B}(\hilbert^{\rm phys})$ and
$\rho_{\rm phys}:=\Pi\rho\Pi \in  \mathcal{B}_1(\hilbert^{\rm phys})$, where both compressed operators can be equivalently represented on the kinematical or physical space. The trace pairing then satisfies
\begin{align} 
    \Tr_{\hilbert^{\rm phys}}\left[\rho_{\rm phys}\mathcal{A}_{\rm phys}\right]
    = & \  \Tr_{\hilbert}\left[\rho_{\rm phys}\mathcal{A}\right]
    = \Tr_{\hilbert}\left[\rho\mathcal{A}_{\rm phys}\right]. \label{eq:innerproductpn}
\end{align}
This coincides with the definition of the inner product in the physical Hilbert space as given in \cite{Vanrietvelde2020ChangePerspective}.\footnote{For non-compact groups, the trace pairing with both the state and the observable projected, may be ill-defined, since $\Pi$ does not, in general, define a bounded projector on the kinematical Hilbert space.} Note also that the compression map $\Phi[T]=\Pi T\Pi$ is self-adjoint with respect to this trace pairing.
Since, for compact unimodular $G$, the strong twirling map $\Pi$ is a bounded orthogonal projector with $\|\Pi\|\leq 1$, the projection of any bounded observable, $\mathcal{A}_{\mathrm{phys}}=\Phi[\mathcal{A}]$, remains bounded:
\begin{equation}
    \|\mathcal{A}_{\rm phys}\| = \|\Phi[\mathcal{A}]\| \leq \|\Pi\|^2\,\|\mathcal{A}\| \leq \|\mathcal{A}\|.
\end{equation}
Here we used the submultiplicativity of the operator norm. Since $\mathcal{R}_{S|i}^{(i)}(g)$ is unitary from $\hilbert^{\rm phys}$ to $\hilbert_{S|i}$ for complete ideal frames, the corresponding observable in the QRF of particle $i$ is again bounded. Thus, boundedness at the kinematical level automatically implies boundedness of the corresponding physical and QRF-relative observables. An analogous statement holds for states: if $\rho\in\mathcal{B}_1(\hilbert)$ is trace-class, then both its compression $\Phi[\rho]$ and its QRF-relative description remain trace-class. Indeed, products of bounded and trace-class operators are trace-class, and unitary/isometric conjugation preserves trace-classness. Accordingly, throughout the paper we assume all observables to be bounded and all states to be trace-class operators, independently of the Hilbert space under consideration. This ensures that the trace-pairing is always well defined.

For complete ideal internal frames, the QRF transformations in the PN approach coincide with those of the perspectival approach\nbs\cite{Giacomini2019,deLaHamette2020QRF}. In the latter, the relative orientations between all frames are assumed to be perfectly accessible from all perspectives and therefore QRF transformations are implemented by unitary operators. Under this assumption, the unitary change of perspective from particle $i$ to particle $j$, namely $U_{i\rightarrow j}:\hilbert_{S|i}\rightarrow\hilbert_{S|j}$, is given by \cite{deLaHamette2020QRF}
\begin{equation} \label{eq:unitary_perspectival}
    U_{i\rightarrow j}
    =
    \mathrm{SWAP}_{i,j} \circ
    \left(
    \mathbb{I}_i \otimes\int_G dg \ket{g^{-1}}\bra{g}_j \otimes U_{S\setminus j|i}^\dagger(g)
    \right).
\end{equation}
Here $S\setminus j|i$ denotes the systems other than $j$, described relative to particle $i$, and $\mathrm{SWAP}_{i,j}$ denotes the ordinary swap operator between particle $i$ and $j$, i.e.
\begin{equation}
	    \mathrm{SWAP}_{i,j} \ket{g}_i \otimes \ket{g'}_j = \ket{g'}_i\otimes \ket{g}_j.
\end{equation}

\section{Quantum Information Approach: mathematical details}\label{app:qi_details}

We now record the analogue, for the QI framework, of the trace-pairing identity discussed in App.\nbs\ref{app:pn_details}. Recall that the weak twirling map projects bounded operators onto the invariant algebra $\mathcal{B}(\hilbert)^G$. For any $\mathcal{A}\in\mathcal{B}(\hilbert)$ and $\rho\in\mathcal{B}_1(\hilbert)$, let $
    \mathcal{A}_{\rm inv}:=\mathcal{G}[\mathcal{A}]\in\mathcal{B}(\hilbert)^G$ and 
    $\rho_{\rm inv}:=\mathcal{G}[\rho]\in\mathcal{B}_1(\hilbert)^G$. 
Then the trace pairing satisfies
\begin{equation} 
    \Tr\left[\rho_{\rm inv}\mathcal{A}\right]
    =
    \Tr\left[\rho_{\rm inv}\mathcal{A}_{\rm inv}\right]
    =
    \Tr\left[\rho\mathcal{A}_{\rm inv}\right].
    \label{eq:HSproductQI}
\end{equation}
This follows from the fact that, for compact $G$, the weak twirling map $\mathcal{G}$ is an idempotent map onto $\mathcal{B}(\hilbert)^G$ and is self-adjoint with respect to the trace pairing. This is the QI analogue of Eq.\nbs\eqref{eq:innerproductpn}.

We now discuss the additional degree of freedom corresponding to the extra-particle\nbs\cite{CastroRuiz2025RelativeSubsystems}. At the algebraic level, the algebra of relational observables of $S|i$ reads
\begin{equation}\label{eq:real_rel_obs}
\mathcal O_{S|i}
:=
\left\{
\int_G dg \ |g\rangle\langle g|_i \otimes U_{S|i}(g)\,\mathcal A_{S|i}\,U_{S|i}^\dagger(g)
\;\middle|\;
\mathcal A_{S|i}\in\mathcal B(\mathcal H_{S|i})
\right\}.
\end{equation}
This algebra is contained in, but in general does not exhaust, the full weakly invariant algebra $\mathcal{B}(\hilbert)^G=\operatorname{Im}\mathcal{G}$. Indeed, since the left-regular representation acts only on the left irrep index of the reference frame, any operator of the form
\begin{equation}
    \mathcal{A}
    =
    \bigoplus_{q \in \hat G}\mathbb{I}_{i_L}^{(q)} \otimes \mathcal{A}_{i_R}^{(q)} \otimes \mathbb{I}_{S|i}
\end{equation}
is also weakly invariant. After the refactorisation associated with particle $i$, the full invariant algebra decomposes as
\begin{equation}
    \mathcal{B}(\hilbert_{C,S|i})^G
    =
    \bigoplus_{q\in\hat G}
    \mathbb{I}_{C_L}^{(q)}
    \otimes
    \mathcal{B}\!\left(\hilbert_{C_R}^{(q)}\otimes\hilbert_{S|i}\right).
\end{equation}
In the refactorised representation associated with particle $i$, the full invariant algebra factorises as
\[
    \mathcal{B}(\hilbert_{C,S|i})^G
    \cong
    \mathcal O_{S|i} \otimes \overline{\mathcal O}_{S|i},
\]
where $\mathcal O_{S|i}$ is the algebra of relational observables of $S|i$, while $\overline{\mathcal O}_{S|i}$ is the complementary algebra carried by the right-regular factor $C_R$, which is identified with the extra-particle subsystem.

Finally, we note that the $G$-twirl preserves both boundedness and trace-classness. Indeed, if $\mathcal{A}\in\mathcal{B}(\mathcal{H})$, then
\begin{align}
    \left\|\mathcal{G}\left[\mathcal{A}\right]\right\| \leq  \divG \int_G dg \ \left\|U(g)\mathcal{A} U(g)^\dagger \right\| = \|\mathcal{A}\|,
\end{align}
where we used unitary invariance of the operator norm. Similarly, for $\rho\in\mathcal{B}_1(\mathcal{H})$,
\begin{align}
    \left\|\mathcal{G}\left[\rho\right]\right\|_1 \leq  \divG \int_G dg \ \left\|U(g)\rho U(g)^\dagger \right\|_1 = \|\rho\|_1,
\end{align}
so $\mathcal{G}$ maps trace-class operators to trace-class operators. On states, the same map is CPTP. Moreover, unitary conjugation preserves both boundedness and trace-classness. Consequently, if observables are bounded and states are trace-class at the kinematical level, the corresponding invariant and QRF-relative operators retain these properties.

\section{Strong twirling decomposition for general compact groups}\label{app:strong_twirling}

Let us consider the composite system $\hilbert_{123}=\hilbert_{12}\otimes \hilbert_3$. Each subsystem Hilbert space $\hilbert_i$ decomposes as in Eq.\nbs\eqref{eq:HS_decomposition}. After recoupling subsystems $1$ and $2$, one obtains $ 
\hilbert_{12}\cong \bigoplus_{q_{12}\in\hat G}\hilbert_{12_L}^{(q_{12})}\otimes \hilbert_{12_R}^{(q_{12})}$. Hence
\begin{equation}
\hilbert_{123}
\cong
\bigoplus_{q_{12},q_3\in\hat G}
\Bigl(\hilbert_{12_L}^{(q_{12})}\otimes \hilbert_{3_L}^{(q_3)}\Bigr)
\otimes
\Bigl(\hilbert_{12_R}^{(q_{12})}\otimes \hilbert_{3_R}^{(q_3)}\Bigr).
\end{equation}

In this decomposition, the global action takes the form
\begin{equation}
U_{12}(g)\otimes U_3(g)
\cong
\bigoplus_{q_{12},q_3\in\hat G}
\Bigl(U_{12}^{(q_{12})}(g)\otimes U_3^{(q_3)}(g)\Bigr)
\otimes
\mathbb I_{12_R}^{(q_{12})}\otimes \mathbb I_{3_R}^{(q_3)}.
\end{equation}
Therefore, the strong twirling projector is
\begin{equation}
\Pi_{123}
:=
\frac{1}{\text{Vol}(G)}\int_G dg\,\bigl(U_{12}(g)\otimes U_3(g)\bigr)
\cong 
\bigoplus_{q_{12},q_3\in\hat G}
P_{(q_{12},q_3)}
\otimes
\mathbb I_{12_R}^{(q_{12})}\otimes \mathbb I_{3_R}^{(q_3)},
\end{equation}
where
\begin{equation}
P_{(q_{12},q_3)}
:=
\frac{1}{\text{Vol}(G)}\int_G dg\,
\Bigl(U_{12}^{(q_{12})}(g)\otimes U_3^{(q_3)}(g)\Bigr)
\end{equation}
is the orthogonal projector onto the invariant subspace of
$\hilbert_{12_L}^{(q_{12})}\otimes \hilbert_{3_L}^{(q_3)}$.

From Ref.\nbs\cite{delahamette2021perspectiveneutral} it follows that this invariant subspace is at most one dimensional and non-trivial if and only if $
q_3\simeq \bar q_{12}$ namely if subsystem $3$ carries the dual irrep of the total irrep of subsystem $12$. Denoting by $\Omega_{q_{12}}$ the rank-one projector onto the strongly-invariant line in the subspace
$\hilbert_{12_L}^{(q_{12})}\otimes \hilbert_{3_L}^{(\bar q_{12})}$, one has $
P_{(q_{12},q_3)}
=
\delta_{q_3,\bar q_{12}}\,
\Omega_{q_{12}}$. Hence the strong twirling assumes the compact form
\begin{equation}\label{eq:strong-twirling-general-groups}
\Pi_{123}
\cong 
\bigoplus_{q_{12}\in\hat G}
\Omega_{q_{12}}
\otimes
\mathbb I_{12_R}^{(q_{12})}\otimes
\mathbb I_{3_R}^{(\bar q_{12})}.
\end{equation}
Thus the strong twirling projects onto those sectors in which subsystem $3$ carries the dual irrep of the total irrep of subsystem $12$, while acting trivially on the multiplicity spaces. Since for abelian groups the irreps are one dimensional, it follows that $P_{(q_{12},q_3)}
=
\delta_{q_3,\bar q_{12}}$ and then the abelian expression in Eq.\nbs\eqref{eq:strong_projector_charge_decomposition} follows.

\section{Examples of the Paradox in the PN--QI setting}\label{app:example_paradox_PN_QI}

We illustrate Theorem~\ref{thm:PN_QI} with two explicit examples. The first shows that the inclusion $\widehat\Gamma \subsetneq \mathcal B_1(\mathcal H_{123})$ is
strict already on product states.  Let $G = \mathbb Z_2 = \{e,r\}$ for simplicity, with $r^2 = e$, and define $U := U(r)$. We have that
\begin{equation}
  U\ket{+} = \ket{+}, \qquad U\ket{-} = -\ket{-},
\end{equation}
so that $\ket{+}$ and $\ket{-}$ span the trivial and alternating charge
sectors, respectively. The global strong-twirling projector and the weak twirl
on subsystem~12 read
\begin{equation}
  \Pi_{123} = \tfrac{1}{2}\!\left(\mathbb I + U_1 U_2 U_3\right),
  \qquad
  \mathcal G^{(12)}[T] = \tfrac{1}{2}\!\left(T + (U_1 U_2)\,T\,(U_1 U_2)\right),
\end{equation}
the latter being the dephasing channel across the two eigenspaces of $U_1 U_2$,
which has eigenvalue $+1$ on $\{\ket{++},\ket{--}\}$ and $-1$ on
$\{\ket{+-},\ket{-+}\}$. Define
\begin{equation}
  \ket{\chi_\theta}_{12} = \tfrac{1}{\sqrt2}\!\left(\ket{+-} + e^{i\theta}\ket{-+}\right),
  \qquad
  \ket{\eta_\theta}_{12} = \sqrt p\,\ket{++} + \sqrt{1-p}\,\ket{\chi_\theta},
  \qquad 0 < p < 1,
\end{equation}
and consider the product state
$\ket{\psi_\theta}_{123} = \ket{\eta_\theta}_{12}\otimes\ket{+}_3$. The component
$\ket{++}$ lies in the $+1$ eigenspace of $U_1 U_2$, whereas $\ket{\chi_\theta}$
lies entirely in the $-1$ eigenspace; thus $\ket{\eta_\theta}$ is a coherent
superposition across the two charge sectors of subsystem~12, and the relational
phase $\theta$ is carried by its nontrivial-charge component $\ket{\chi_\theta}$. Since particle~3 factorises,
$\Tr_3[\ket{\psi_\theta}\bra{\psi_\theta}] = \ket{\eta_\theta}\bra{\eta_\theta}_{12}$.
The weak twirl removes the coherences between the two eigenspaces of $U_1 U_2$,
leaving
\begin{equation}\label{eq:app_ext}
  \mathcal G^{(12)}\!\left[\Tr_3(\ket{\psi_\theta}\bra{\psi_\theta})\right]
  = p\,\ket{++}\bra{++} + (1-p)\,\ket{\chi_\theta}\bra{\chi_\theta}.
\end{equation}

On the other hand, since $\ket{\chi_\theta}_{12} \otimes \ket{+}_3$ lies in the alternating sector of $123$, we get
\begin{equation}
  \Pi_{123}\ket{\psi_\theta}_{123}
  = \sqrt p\,\ket{++}_{12}\ket{+}_3 .
\end{equation}
Thus, the state does not lie in $\widehat\Gamma$ and the phase $\theta$ is no longer accessible after physicalisation. The relational phase
$\theta$ is carried by the nontrivial-charge component $\ket{\chi_\theta}$ of
subsystem~12. From Eve's perspective this component survives the kinematical
partial trace, and $\theta$ remains externally accessible. Strong physicalisation of the total system, by contrast, ties the charge
of subsystem~12 to that of particle~3: with particle~3 prepared in the trivial
sector, the projection $\Pi_{123}$ retains only the trivial-charge component
$\ket{++}$ of subsystem~12, discarding the $\theta$-carrying sector as gauge. Note, however, that in the previous example the phase $\theta$ is washed out by the strong twirling. In general, however, it is the strong twirling followed by the partial trace that removes weakly invariant relational information accessible from Eve's  perspective, as shown in the subsequent example. 

Consider
\begin{equation}
\ket{\psi_\theta}_{123}
= \tfrac12\big(\ket{{+}{+}{+}} + e^{i\theta}\ket{{+}{-}{-}}
+ \ket{{+}{+}{-}} + e^{i\theta}\ket{{-}{-}{-}}\big),
\end{equation}
which is entangled. Since
$\Pi_{123}$ retains only the components $\ket{{+}{+}{+}}$ and
$\ket{{+}{-}{-}}$,
\begin{equation}
\Pi_{123}\ket{\psi_\theta}_{123}
= \tfrac12\big(\ket{{+}{+}{+}} + e^{i\theta}\ket{{+}{-}{-}}\big),
\end{equation}
so the physical state $\Phi^{(123)}[\ket{\psi_\theta}\bra{\psi_\theta}]$ retains the
relational phase $\theta$ in the coherence between $\ket{{+}{+}{+}}$ and
$\ket{{+}{-}{-}}$: the strong twirling alone does not erase it. The two
components above carry orthogonal states of particle $3$, so tracing it out
removes their coherence,
\begin{equation}
\mathrm{Tr}_3\big[\Phi^{(123)}[\ket{\psi_\theta}\bra{\psi_\theta}]\big]
= \tfrac14\big(\ket{{+}{+}}\bra{{+}{+}} + \ket{{+}{-}}\bra{{+}{-}}\big)_{12},
\end{equation}
which is $\theta$-independent. Here it is the strong twirling followed by the partial trace that destroys the relational phase. Discarding particle $3$ before
physicalisation and then applying the weak twirl on subsystem $12$ instead gives
\begin{equation}
\mathcal{G}^{(12)}\big[\mathrm{Tr}_3[\ket{\psi_\theta}\bra{\psi_\theta}]\big]
= \tfrac12\ket{{+}{+}}\bra{{+}{+}} + \tfrac14\ket{{+}{-}}\bra{{+}{-}}
+ \tfrac14\ket{{-}{-}}\bra{{-}{-}}
+ \tfrac14\big(e^{-i\theta}\ket{{+}{+}}\bra{{-}{-}} + \mathrm{h.c.}\big),
\end{equation}
which retains $\theta$ through the coherence between $\ket{{+}{+}}$ and
$\ket{{-}{-}}$, internal to the $+1$ eigenspace of $U_1U_2$ and hence preserved
by $\mathcal{G}^{(12)}$. The right-hand side of the consistency condition is
$\theta$-independent while the left-hand side carries $\theta$ and has support on
$\ket{{-}{-}}$ absent from the former; the two cannot be proportional, so
$\ket{\psi_\theta}\bra{\psi_\theta}_{123}\notin\widehat\Gamma$. The loss of the externally accessible phase $\theta$ is here produced by
the partial trace acting after physicalisation. We note that no kinematical product state can exhibit this mechanism. Indeed, for a kinematical product state $\rho_{12}\otimes\rho_3$, any $\theta$-carrying term that survives strong twirling but is removed by the subsequent partial trace must connect distinct charge sectors of subsystem $12$, and is therefore also annihilated by $\mathcal{G}^{(12)}$. Hence, within the pure-state setting considered here, initial $12|3$ entanglement is necessary for this mechanism to occur. This product-state obstruction is specific to the Abelian structure of the present $\mathbb Z_2$ example, where all irreducible carrier spaces are one-dimensional and the projection cannot convert local intra-sector carrier--multiplicity coherences of the two product factors into coherence between the multiplicity degrees of freedom of subsystems $12$ and $3$. For non-Abelian symmetry groups admitting higher-dimensional irreducible carrier spaces and suitable multiplicity spaces, one can construct product states in which strong twirling transfers such carrier--multiplicity coherence into correlations between the two multiplicity spaces, so that the relational information is not erased by strong twirling alone, but by the subsequent partial trace.

\section{Classicality of point \textit{(2)} underlying the Paradox}
\label{app:classical}

We exhibit a classical instance of the mechanism of point~\textit{(2)} of
Sec.~\ref{sec:paradox_emergence}, complementing the classicality of point\nbs\textit{(1)} recently shown in\nbs\cite{rennerhausmann}. We work with the translation group
acting on three particles on a line, with external phase-space
coordinates $(x_i,p_i)$, $i=1,2,3$, and constraint
$C = p_1+p_2+p_3$. Let $\theta \in \mathbb{R}$, $k \neq 0$,
$0<p<1$, and consider the product probability measure
\begin{equation}
P_\theta \;=\; \,\Big[\, p\,\delta(x_1)\delta(x_2-x_1)\,\delta(p_1)\,\delta(p_2)
\;+\;(1-p)\,\delta(x_1)\delta(x_2-x_1-\theta)\,\delta(p_1)\,\delta(p_2-k)\,\Big]
\;\;\delta(x_3-c)\,\delta(p_3).
\label{eq:classicalstate}
\end{equation}
Particle 3 is uncorrelated with subsystem 12 and sharp in the
trivial-charge sector $p_3=0$. The relational information on $\theta$ is
carried by the branch with non-trivial subsystem charge $p_1+p_2=k$. Marginalising first over $(x_3,p_3)$ leaves the first
factor of Eq.~\eqref{eq:classicalstate} unchanged. In the
translation-invariant variables $u := x_2-x_1$ and $(p_1,p_2)$, the
reduced distribution reads
$p\,\delta(u)\delta(p_1)\delta(p_2)
+(1-p)\,\delta(u-\theta)\delta(p_1)\delta(p_2-k)$,
whose $u$-marginal $p\,\delta(u)+(1-p)\,\delta(u-\theta)$ reveals
$\theta$. The relational information about subsystem 12 is thus
externally accessible, through relational observables, after discarding particle 3. Imposing the constraint amounts to conditioning
Eq.~\eqref{eq:classicalstate} on $\{C=0\}$ and pushing the result forward to
the quotient by the gauge flow generated by $C$. Under $P_\theta$ the total
charge takes only the values $C=0$ and $C=k$, with probabilities $p$ and
$1-p$; the conditioning event therefore carries positive weight $p>0$ and the
conditional measure is elementary. The second branch, supported at $C=k\neq0$,
is annihilated, and
\begin{equation}
P_\theta\big|_{C=0}
= \delta(x_1)\,\delta(x_2-x_1)\,\delta(p_1)\,\delta(p_2)\,
\delta(x_3-c)\,\delta(p_3),
\end{equation}
which is $\theta$-independent. Furthermore, the pushforward of this measure to the quotient by the gauge flow is clearly also $\theta$-independent. The family
$\{P_\theta\}_\theta$, distinguishable by invariant observables at the
external level after discarding particle 3, is mapped to a single
physical state with no information on $\theta$ by the constraint.

The loss of $\theta$ is produced entirely by the constraint, which ties the charge
of subsystem 12 to that of particle 3: with particle 3 sharp in the
trivial sector, only the trivial-charge branch of 12 survives, with no information on $\theta$. This is the
classical counterpart of the first example of
App.~\ref{app:example_paradox_PN_QI}. We thus provided an example showing that the
charge-coupling obstruction underlying point~\textit{(2)} of the Paradox is
classical. Note that this example is a classical manifestation of the reduction to the subset $\hat \Gamma$ in our PN--QI condition (Theorem\nbs\ref{thm:PN_QI}). The classical manifestation of the reduction to the subset $\hat \Lambda$ in the PN--PN condition (Theorem\nbs\ref{thm:PRT_PN}) is even clearer: let $q \neq 0$ and consider the probability measure
\begin{equation} \label{eq:RT-counterpart-classical}
    \Tilde{P}_\theta = Q_{12}(\theta)\, \delta (p_3-q) \delta(x_3)
\end{equation}
where $Q_{12}(\theta)$ is a probability measure with support both on the
constraint surface $C_{12}=p_1+p_2=0$, where we assume the information about
$\theta$ to be encoded (as in Eq.~\eqref{eq:state-eve-2-paradox}), and on the sector
$p_1+p_2=-q$, where it has an atom of positive weight. Discarding $3$ and then imposing the constraint, information on $\theta$ is clearly preserved. However, imposing the factorised constraint selected by the RT would impose $C_{12}=p_1+p_2 = 0$ and $C_{3}=p_3 = 0$, so that the corresponding restriction of the measure normalised on the global constraint surface $C=0$ vanishes,
\begin{equation}
     \Tilde{P}_\theta\big|_{C_{12},C_3=0} = 0.
\end{equation}
Eq.\nbs\eqref{eq:RT-counterpart-classical} can be seen as the classical counterpart of Eq.\nbs\eqref{eq:counterexample}: the reduced state of particle 3 in\nbs\eqref{eq:counterexample} does not have support in the trivial charge sector, so that it does not have support on the factorised sector $\Pi_{12} \otimes \Pi_3$, exactly as its classical version\nbs\eqref{eq:RT-counterpart-classical} does not have support on the constraint surface $C_{12},C_3=0$ (although having support on $C=0$). 

\section{Distillable algebra for subsystem 12}\label{app:edge_modes}
In this appendix, we show how to obtain a distillable algebra for subsystem $12$, mimicking, in minimal form, the extrinsic-frame construction of Ref.\nbs\cite{AraujoRegado:2025relational}. There, QRFs are built from degrees of freedom in the complement of the subregion (Wilson lines). In our setting, we shall thus assume that the complement of subsystem $12$, that is, the third-particle, itself contains an ideal frame. We thus replace at the kinematical level $\hilbert_3 \to \hilbert_{\tilde{3}} \otimes \hilbert_R$, where $\hilbert_R = L^2(G)$ transforms under the left-regular representation. Since the image of the reduction map is the full system space for complete ideal frames, from the perspective of the frame $R$ the Hilbert space factorises between subsystem $12$ and the complement modulo frame,
\begin{equation}
    \hilbert_{123|R} = \hilbert_{12} \otimes \hilbert_{\tilde{3}}.
\end{equation}
Therefore, the frame $R$ assigns to subsystem $12$ the algebra
\begin{equation}
    \mathcal{A}_{12|R} := \mathcal{B}(\hilbert_{12}) \otimes \mathbb{I}_{\tilde 3},
\end{equation}
which is a Type I von Neumann factor and thus admits distillable entropies. This is the analogue of the distillable entropy assignment for the subregion $\Sigma$ from the perspective of the extrinsic frame in Ref.\nbs\cite{AraujoRegado:2025relational}. Note that, as in Ref.\nbs\cite{AraujoRegado:2025relational}, to obtain distillable entropies for $12$ we need to find a frame inside the third-particle and then we are able to obtain a gauge-invariant TPS between $12$ and the complement modulo frame, not the full complement of $12$. The frame $R$ admits a symmetry group $G_{\rm reor}$, given by the reorientations of the frame, implemented by the right-regular action $V_R(g)\ket{h}_R = \ket{hg^{-1}}_R$. Reorientations preserve the physical subspace since they commute with the global left action. The key observation is that the incoherent twirl over $G_{\rm reor}$ coincides, in the perspective of $R$, with the weak twirl $\mathcal G^{(12\tilde 3)}$. To recover the electric-center algebra and Eq.\nbs\eqref{eq:states-traced-physical}, we thus twirl over $G_{\rm reor}$, obtaining
\begin{equation}\label{eq:electric-center}
    \mathcal{G}_{\rm reor}(\mathcal{A}_{12|R})=\mathcal{G}^{(12\tilde 3)}(\mathcal{A}_{12|R}) = \mathcal{B}(\hilbert_{12})^G \otimes \mathbb{I}_{\tilde 3},
\end{equation}
which recovers the bounded counterpart of the charge-superselected space of Proposition\nbs\ref{prop:trace_surjectivity}. This is the third-particle analogue of the electric-center algebra\nbs\cite{Donnelly:2011hn,CasiniHuertaRosabal2014,Donnelly_2016,Donnelly2014nonabelian,Soni:2015yga,VanAcoleyen:2015ccp,Delcamp:2016eya,AraujoRegado:2025relational}.\footnote{More precisely, in\nbs\cite{AraujoRegado:2025relational} the electric-center algebra is defined as the pull-back to the PN structure of Eq.\nbs\eqref{eq:electric-center}. The two are clearly isomorphic under conjugation by the reduction map $\mathcal R^{(R)}_{12\tilde 3}$.}

We conclude by quantifying the entropic content of the charge-superselected reduced states.
\begin{corollary}[Entropy decomposition]\label{cor:entropy_decomposition}
Let $\rho_{12}=\Tr_{\tilde 3 R}\!\left[\Theta^{\rm k}\right]$ be as in Eq.\nbs\eqref{eq:states-traced-physical}, with $\Theta^{\rm k}$ a normalised state, and define
\begin{equation}
    p_q := \Tr\!\big[\tau^{(q)}_{12_R}\big],
    \qquad
    \hat\tau^{(q)} := \frac{\tau^{(q)}_{12_R}}{p_q} \quad \text{whenever } p_q\neq 0.
\end{equation}
Then $\{p_q\}_{q\in\hat G}$ is a probability distribution over the charge sectors of subsystem $12$, $\hat\tau^{(q)}$ is the normalised conditional state on the corresponding multiplicity factor, and, whenever the relevant entropies are finite,
\begin{equation}\label{eq:entropy_decomposition}
    S_{\rm vN}(\rho_{12})
    \;=\;
    H(\{p_q\}) \;+\; \sum_{q\in\hat G} p_q\Big(\log d_q + S_{\rm vN}\big(\hat\tau^{(q)}\big)\Big),
\end{equation}
where $H(\{p_q\})=-\sum_q p_q\log p_q$ denotes the Shannon entropy.
\end{corollary}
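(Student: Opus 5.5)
The plan is to reduce the statement to two standard entropy identities, once the block structure of $\rho_{12}$ is made explicit. First I will recall the form of Eq.~\eqref{eq:states-traced-physical}. By Proposition~\ref{prop:trace_surjectivity}, $\rho_{12}$ lies in $\mathcal B_1(\hilbert_{12})^G$. By the Peter--Weyl decomposition of $\hilbert_{12}$ and Schur's lemma, the weak twirl acts as a full depolarisation on each carrier space $\hilbert_{12_L}^{(q)}$. I therefore take $\rho_{12}$ in the form
\begin{equation*}
\rho_{12}\;=\;\bigoplus_{q\in\hat G}\frac{\mathbb I^{(q)}_{12_L}}{d_q}\otimes\tau^{(q)}_{12_R},\qquad \tau^{(q)}_{12_R}\geq 0 .
\end{equation*}
If Eq.~\eqref{eq:states-traced-physical} is written without the factor $1/d_q$, I will absorb it into $\tau^{(q)}$ so that $p_q=\Tr[\tau^{(q)}_{12_R}]$ is the weight of sector $q$. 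Positivity of $\tau^{(q)}$ follows from positivity of $\rho_{12}$ restricted to each block. Since $\Tr[\rho_{12}]=1$ (the state $\Theta^{\rm k}$ is normalised and the partial trace preserves trace), we get $\sum_q p_q=1$, so $\{p_q\}$ is a probability distribution. The conditional states $\hat\tau^{(q)}$ are then well defined on the sectors with $p_q\neq 0$; sectors with $p_q=0$ have $\tau^{(q)}=0$ and contribute nothing.

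Next I will write $\rho_{12}=\bigoplus_q p_q\,\sigma_q$ with $\sigma_q:=\frac{\mathbb I^{(q)}_{12_L}}{d_q}\otimes\hat\tau^{(q)}$, where each $\sigma_q$ is a normalised state supported on the mutually orthogonal subspaces $\hilbert_{12_L}^{(q)}\otimes\hilbert_{12_R}^{(q)}$. Two standard identities then give the result.

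(a) For a direct sum of states with orthogonal supports, the spectrum of $\rho_{12}$ is the union of the spectra of $p_q\sigma_q$. Expanding $-\lambda\log\lambda$ on the eigenvalues $p_q\mu$ of each block, and using $\sum_\mu \mu=1$ within each block, gives
\begin{equation*}
S_{\rm vN}(\rho_{12})=H(\{p_q\})+\sum_q p_q S_{\rm vN}(\sigma_q).
\end{equation*}

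(b) Entropy is additive on tensor products, and the maximally mixed state on $d_q$ dimensions has entropy $\log d_q$, so
\begin{equation*}
S_{\rm vN}(\sigma_q)=\log d_q+S_{\rm vN}(\hat\tau^{(q)}).
\end{equation*}

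Substituting (b) into (a) yields Eq.~\eqref{eq:entropy_decomposition}.

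The main obstacle is justifying these rearrangements when $\hat G$ is infinite and the multiplicity spaces are infinite-dimensional, because the entropies may then be given by infinite series. The approach is to work with the eigenvalue list of $\rho_{12}$, which is countable since $\rho_{12}$ is trace class. Every term $-\lambda\log\lambda$ is nonnegative, so the sums may be reordered freely by Tonelli's theorem for nonnegative series. The block-wise regrouping in (a) and the tensor splitting in (b) are then equalities in $[0,\infty]$. Under the hypothesis that the relevant entropies are finite, no cancellations of infinities occur, and the identity holds as stated.
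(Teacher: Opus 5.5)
Your proposal is correct and follows essentially the same route as the paper: both start from the block form of Eq.~\eqref{eq:states-traced-physical}, use positivity and normalisation to obtain $\{p_q\}$, and then split the entropy using the orthogonality of the charge blocks together with the product structure $\frac{1}{d_q}\mathbb I^{(q)}_{12_L}\otimes p_q\hat\tau^{(q)}$ within each block. Your Tonelli argument for regrouping the nonnegative eigenvalue series adds a bit of rigour for infinite $\hat G$ and infinite-dimensional multiplicity spaces, which the paper leaves implicit under its finiteness hypothesis.
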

\begin{proof}
Positivity of $\Theta^{\rm k}$ implies $\tau^{(q)}_{12_R}\geq 0$, and $\Tr\big[\tfrac{1}{d_q} I^{(q)}_{12_L}\otimes\tau^{(q)}_{12_R}\big]=p_q$, so that $\sum_q p_q=\Tr[\rho_{12}]=1$. Since the charge blocks in Eq.\nbs\eqref{eq:states-traced-physical} are mutually orthogonal,
\begin{equation}
    S_{\rm vN}(\rho_{12})
    = -\sum_q \Tr\Big[\tfrac{1}{d_q} I^{(q)}_{12_L}\otimes\tau^{(q)}_{12_R}
      \log\Big(\tfrac{1}{d_q} I^{(q)}_{12_L}\otimes\tau^{(q)}_{12_R}\Big)\Big]
    = \sum_q \Big(p_q\log d_q - \Tr\big[\tau^{(q)}_{12_R}\log\tau^{(q)}_{12_R}\big]\Big),
\end{equation}
and substituting $\tau^{(q)}_{12_R}=p_q\,\hat\tau^{(q)}$ yields the claim.
\end{proof}

The three terms in Eq.\nbs\eqref{eq:entropy_decomposition} mirror the decomposition of Refs.\nbs\cite{Donnelly:2011hn,Donnelly2014nonabelian}. Moreover, for a physical state with description $\rho_{12\tilde 3|R}$ in the perspective of $R$, let $\rho_{12|R}:=\Tr_{\tilde 3}\big[\rho_{12\tilde 3|R}\big]$ denote the reduced state of subsystem $12$ prior to the twirl. Since the incoherent twirl is a mixed-unitary channel, concavity of the von Neumann entropy gives
\begin{equation}
    S_{\rm vN}\big(\rho_{12|R}\big) \;\leq\; S_{\rm vN}\big(\mathcal G^{(12)}[\rho_{12|R}]\big).
\end{equation}
Thus, the distillable relational entropy assigned by the frame is upper bounded by the non-distillable electric-center entropy, the third-particle counterpart of the corresponding bound of Ref.\nbs\cite{AraujoRegado:2025relational}.

\section{Proofs}
\subsection{Proof of Lemma\nbs\ref{lem:lemma_1}}\label{app:proof_lemma_1}
We start from
\begin{equation} 
\Tr\left[\mathcal{A}_{12}^{\mathrm{phys}}\Tr_3\left[\rho_{123} \right] \right]  = \Tr\left[\mathcal{A}_{2|1} \operatorname{T}_{\rm PRT}^{3|1} \left[ \rho_{23|1}\right]\right]
\end{equation}
and using Eq.\nbs\eqref{eq:innerproductpn} we can write 
\begin{equation} 
\Tr\left[\mathcal{A}_{12}^{\mathrm{phys}}  \Phi^{(12)} \left[\Tr_3\left[\rho_{123} \right] \right]  \right] = \Tr\left[\mathcal{A}_{2|1} \operatorname{T}_{\rm PRT}^{3|1} \left[ \rho_{23|1}\right]\right] 
\end{equation}
writing $\rho_{12} = \Tr_3\left[\rho_{123}\right]$, $\rho_{12}^{\rm phys} = \Phi^{(12)} \left[\rho_{12}\right]$ and expanding $\mathcal{A}_{2|1}$ we get
\begin{equation}  \Tr\left[\mathcal{A}_{12}^{\mathrm{phys}} \rho_{12}^{\rm phys} \right] = \Tr\left[ \mathcal R_{2}^{(1)} \mathcal{A}_{12}^{\mathrm{phys}} \mathcal R_{2}^{(1)\dagger} \operatorname{T}_{\rm PRT}^{3|1} \left[ \rho_{23|1}\right]\right]
\end{equation}
and using the cyclicity of the trace 
\begin{equation}  \Tr\left[\mathcal{A}_{12}^{\mathrm{phys}} \rho_{12}^{\rm phys} \right] = \Tr\left[  \mathcal{A}_{12}^{\mathrm{phys}} \mathcal R_{2}^{(1)\dagger} \operatorname{T}_{\rm PRT}^{3|1} \left[ \rho_{23|1}\right]\mathcal R_{2}^{(1)}\right] .
\end{equation}
Since the equality holds for all bounded physical observables 
$\mathcal A_{12}^{\mathrm{phys}}\in \mathcal{B}(\mathcal{H}_{12}^{\mathrm{phys}})$, and both operators on the left- and right-hand side belong to $\mathcal{B}_1(\mathcal{H}_{12}^{\mathrm{phys}})$, the non-degeneracy of the trace\footnote{We use the standard non-degeneracy of the trace pairing between $\mathcal B(\mathcal H)$ and $\mathcal B_1(\mathcal H)$: if $X,Y\in\mathcal B_1(\mathcal H)$ satisfy $\Tr(AX)=\Tr(AY)$ for all $A\in\mathcal B(\mathcal H)$, then $X=Y$.} implies
\begin{equation}
    \rho_{12}^{\rm phys}
    \overset{\rm phys}{=}
    \mathcal R_{2}^{(1)\dagger} \operatorname{T}_{\rm PRT}^{3|1}\left[\rho_{23|1}\right]\mathcal R_{2}^{(1)}.
\end{equation}
where $\overset{\rm phys}{=}$ means equality on the physical space. Conjugating both sides by the reduction maps and expanding $\rho_{23|1}$ we get
\begin{align} \label{eq:condition1pn}
\rho_{2|1} = \operatorname{T}_{\rm PRT}^{3|1} [\mathcal R_{23}^{(1)} \Phi^{(123)} [\rho_{123}] \mathcal R_{23}^{(1)\dagger}]
\end{align}
and finally expanding the left-hand side
\begin{align} \label{eq:final eq lemma 1}
\mathcal R_{2}^{(1)} \Phi^{(12)} \left[\Tr_{3}\left[\rho_{123}\right]\right] \mathcal R_{2}^{(1)\dagger} &= 
\operatorname{T}_{\rm PRT}^{3|1} [\mathcal R_{23}^{(1)} \Phi^{(123)} \left[\rho_{123}\right] \mathcal R_{23}^{(1)\dagger}].
\end{align}
Conversely, assume Eq.~\eqref{eq:final eq lemma 1}. For any bounded physical observable $\mathcal A_{12}^{\rm phys}$, let
$\mathcal A_{2|1}=\mathcal R_{2}^{(1)} \mathcal A_{12}^{\rm phys}\mathcal R_{2}^{(1)\dagger}$.
Multiplying both sides of Eq.~\eqref{eq:final eq lemma 1} by \(\mathcal A_{2|1}\), taking the trace, and using cyclicity together with Eq.\nbs\eqref{eq:innerproductpn}, one immediately recovers Eq.~\eqref{eq:statisticpn}. This proves the equivalence.
\qed

\subsection{Proof of Theorem~\ref{thm:PRT_PN}}\label{app:proof_theorem_1}

By Lemma\nbs\ref{lem:lemma_1}, the map $\operatorname{T}_{\rm PRT}^{3|1}$ satisfies the statistical consistency condition~\eqref{eq:statisticpn} for a kinematical state $\rho_{123}\in\mathcal B_1(\hilbert_{123})$ if and only if
\begin{equation}\label{eq:proof lemma1 condition}
\mathcal R_{2}^{(1)}\,\Phi^{(12)}\!\left[\Tr_3[\rho_{123}]\right]\mathcal R_{2}^{(1)\dagger}
=
\operatorname{T}_{\rm PRT}^{3|1}\!\left[\mathcal R_{23}^{(1)}\,\Phi^{(123)}[\rho_{123}]\,\mathcal R_{23}^{(1)\dagger}\right].
\end{equation}
We first restrict Eq.\nbs\eqref{eq:proof lemma1 condition} to physical states $\rho^{\rm phys}_{123}\in\mathcal B_1(\hilbert^{\rm phys}_{123})$. On physical states, Eq.\nbs\eqref{eq:proof lemma1 condition} fixes the form of the map. The crucial observation is that, since the reduction map is an isomorphism from the physical to the perspectival space, and the input of the PRT is perspectival, this restriction fixes the map at every point of its domain 
\begin{equation}\label{eq:proof explicit action}
\operatorname{T}_{\rm PRT}^{3|1}[\rho_{23|1}]
=
\mathcal R_{2}^{(1)}\,\Phi^{(12)}\!\left[\Tr_3\!\left[\mathcal R_{23}^{(1)\dagger}\rho_{23|1}\,\mathcal R_{23}^{(1)}\right]\right]\mathcal R_{2}^{(1)\dagger},
\qquad \rho_{23|1}\in\mathcal B_1(\hilbert_{23|1}).
\end{equation}
Therefore, this proves the existence and uniqueness of the map. Eq.\nbs\eqref{eq:proof explicit action} defines a CP and trace non-increasing map, being a composition of CPTP and CP trace non-increasing maps.\footnote{The partial trace $\Tr_3$ in Eq.\nbs\eqref{eq:proof explicit action} is understood via the kinematical embedding $J:\hilbert^{\rm phys}_{123}\to\hilbert_{123}$, with $J^\dagger J=\mathbb{I}_{\hilbert^{\rm phys}_{123}}$ and $JJ^\dagger=\Pi_{123}$, introduced in Sec.\nbs\ref{sec:paradox_emergence}, since $\hilbert^{\rm phys}_{123}$ does not inherit the kinematical tensor-product structure $\hilbert_{12}\otimes\hilbert_3$.} Then, the Kraus decomposition\nbs\eqref{eq:Kraus_PRT_PN} follows by expanding the partial trace and the projector map in their respective Kraus representations.

It remains to determine the kinematical domain at which the map\nbs\eqref{eq:proof explicit action} satisfies the consistency condition. We substitute $\rho_{23|1}=\mathcal R_{23}^{(1)}\Phi^{(123)}[\rho_{123}]\mathcal R_{23}^{(1)\dagger}$ into Eq.\nbs\eqref{eq:proof explicit action} to obtain
\begin{equation}
\operatorname{T}_{\rm PRT}^{3|1}\!\left[\mathcal R_{23}^{(1)}\Phi^{(123)}[\rho_{123}]\mathcal R_{23}^{(1)\dagger}\right]
=
\mathcal R_{2}^{(1)}\,\Phi^{(12)}\!\left[\Tr_3\!\left[\Phi^{(123)}[\rho_{123}]\right]\right]\mathcal R_{2}^{(1)\dagger}.
\end{equation}
Hence Eq.\nbs\eqref{eq:proof lemma1 condition} reads 
\begin{equation}
    \mathcal R_{2}^{(1)}\,\Phi^{(12)}[\Tr_3[\rho_{123}]]\,\mathcal R_{2}^{(1)\dagger}=\mathcal R_{2}^{(1)}\,\Phi^{(12)}[\Tr_3[\Phi^{(123)}[\rho_{123}]]]\,\mathcal R_{2}^{(1)\dagger}
\end{equation}
conjugating by the reduction map and using $\Phi^{(12)}\circ \Phi^{(123)} = \Phi^{(12)} \circ \Phi^{(3)}$ which follows from the invariance of the Haar measure, together with the cyclicity of the partial trace on the tensor factor $3$, we observe that the consistency condition Eq.\nbs\eqref{eq:statisticpn} holds if and only if 
\begin{equation}
\Phi^{(12)}\!\left[\Tr_3[\rho_{123}]\right]
\overset{\rm phys}{=}
\Phi^{(12)}\!\left[\Tr_3[\Pi_3\rho_{123}]\right].
\end{equation}
Then, we note that the previous equality on the physical space is satisfied if and only if it is satisfied on the kinematical space, that is, if and only if $\rho_{123}\in\Lambda$. This establishes that $\Lambda$ is the largest subspace containing the physical subspace for which the condition can be solved and concludes the proof.
\qed
\begin{remark}
The maximality of $\Lambda$ is to be understood among subspaces of $\mathcal B_1(\hilbert_{123})$ containing the physical subspace $\mathcal B_1(\hilbert^{\rm phys}_{123})$. The operational content of the Paradox lies in comparing the trace taken before physicalisation with the one taken after it. On physical states the strong twirling acts trivially, so the two traces coincide identically and therefore we require our condition to be satisfied there. Then, this requirement fixes the map to be the PRT. Imposing consistency on the physical sector therefore singles out the PRT and our characterisation captures the operational content of the Paradox.
\end{remark}

\subsection{Proof of Lemma\nbs\ref{lem:lemma_2}}\label{app:proof_lemma_2}

The proof is analogous to that of Lemma\nbs\ref{lem:lemma_1} in App.\nbs\ref{app:proof_lemma_1} with the strong twirling replaced by the weak twirling and the Schr\"{o}dinger reduction map replaced by the QI refactorisation map. The only non-trivial step is in the non-degeneracy of trace-pairing: the weakly invariant algebra is not of the form $\mathcal{B}(\hilbert')$ for some Hilbert space $\hilbert'$ and therefore we cannot apply the standard non-degeneracy argument. In this case, we can use the non-degeneracy of the pairing between a von Neumann algebra $M$ and its predual $M_{*}$, which, in our case, is the standard kinematical trace. Indeed, let $X,Y \in \mathcal{B}_1(\hilbert)^{G}$ such that $\Tr\left[X A\right] = \Tr\left[Y A\right]$ for all $A \in \mathcal{B}(\hilbert)^{G}$. Set $Z = X - Y$,  since the weakly invariant algebra is $*$-closed, we can choose $A = Z^\dagger$ hence we obtain $\Tr\left[Z Z^\dagger\right] = 0$ which forces $Z=0$. \qed 

\subsection{Proof of Theorem~\ref{thm:PRT_QI}}\label{app:proof_theorem_2}

By Lemma~\ref{lem:lemma_2}, the map $\operatorname{T}^{3|1}_{\rm PRT,QI}$ satisfies the statistical consistency condition~\eqref{eq:statistic_QI} for a kinematical state $\rho_{123}\in\mathcal B_1(\hilbert_{123})$ if and only if
\begin{equation}\label{eq:proof qi condition}
\mathcal V^{(2|1)}_1\!\left[\mathcal G^{(12)}[\Tr_3[\rho_{123}]]\right]
=
\operatorname{T}^{3|1}_{\rm PRT,QI}\!\left[\mathcal V^{(23|1)}_1[\mathcal G^{(123)}[\rho_{123}]]\right].
\end{equation}

The key identity is
\begin{equation}\label{eq:proof intertwine}
\Tr_3\!\left[\mathcal G^{(123)}[\rho_{123}]\right]=\mathcal G^{(12)}[\Tr_3[\rho_{123}]].
\end{equation}
Indeed, the partial trace is invariant under unitary conjugation on subsystem $3$, and therefore $\Tr_3\!\left[U_3(g) X U_3(g)^\dagger\right]=\Tr_3[X]$
for every operator $X$ and every $g\in G$.

Let $\sigma_{123} := \mathcal{G}^{(123)}[\rho_{123}] \in \mathcal{B}_1(\mathcal{H}_{123})^G$, which ranges over the whole domain since $\rho_{123}$ ranges over all trace-class operators and $\mathcal{B}_1(\mathcal{H}_{123})^G = \mathrm{Im}\big(\mathcal{G}^{(123)}\big)$. The induced state $\rho_{C,23|1}=\mathcal V^{(23|1)}_1[\sigma_{123}]$ ranges over the whole domain $\mathcal B_1(\hilbert_{C,23|1})^G$ as the QI factorisation map realises a $*$-isomorphism between the algebras. By Eq.~\eqref{eq:proof intertwine}, the left-hand side of Eq.~\eqref{eq:proof qi condition} depends on $\rho_{123}$ only through $\sigma_{123}$, hence Eq.~\eqref{eq:proof qi condition} is equivalent to the  identity
\begin{equation}\label{eq:proof qi explicit}
\operatorname  T^{3|1}_{\rm PRT,QI}[\rho_{C,23|1}]
=
\mathcal V^{(2|1)}_1\!\left[\Tr_3\!\left[(\mathcal V^{(23|1)}_1)^{-1}[\rho_{C,23|1}]\right]\right],
\qquad \rho_{C,23|1}\in\mathcal B_1(\hilbert_{C,23|1})^G.
\end{equation}
As the states $\rho_{C,23|1}$ exhaust the domain of the map, Eq.~\eqref{eq:proof qi explicit} determines the map at every point, proving existence and uniqueness. The PRT-QI is a composition of unitary conjugations with the partial trace, all of which are CPTP; therefore $\operatorname{T}^{3|1}_{\rm PRT,QI}$ is CPTP. Inserting the Kraus representation of $\Tr_3$ gives the Kraus decomposition~\eqref{eq:Kraus_PRT_QI}. Note that both sides of Eq.~\eqref{eq:proof qi condition} depend on $\rho_{123}$ only through its weak-invariant representation $\sigma_{123}$, and therefore the equivalence holds for \emph{every} $\rho_{123}\in\mathcal B_1(\hilbert_{123})$.
\qed

\subsection{Proof of Proposition\nbs\ref{prop:trace_surjectivity}}\label{app:proof_proposition}

We represent a generic physical operator $\Theta \in \mathcal{B}_1(\hilbert_{123}^{\rm phys})$ on the kinematical space as $
\Theta^{\rm k} := J \Theta J^\dagger \in \Pi_{123}\mathcal{B}_1(\hilbert_{123})\Pi_{123}$,
where $J$ denotes the canonical embedding of the physical Hilbert space into the kinematical Hilbert space. The physical projector takes the form Eq.\nbs\eqref{eq:strong-twirling-general-groups}  where $\Omega_q=\ket{\omega_q}\bra{\omega_q}$ is the rank-one projector onto the invariant line in
$\hilbert^{(q)}_{12,L}\otimes \hilbert^{(\bar q)}_{3,L}$ and $\ket{\omega_q}$ is the normalised maximally entangled invariant vector between the irrep $q$ carried by subsystem $12$ and the dual irrep $\bar q$ carried by subsystem $3$\nbs\cite[Lemmas~11 and~12]{delahamette2021perspectiveneutral}.

It follows that every $\Theta^{\rm k}\in \Pi_{123}\mathcal{B}_1(\hilbert_{123})\Pi_{123}$ admits the block form
\begin{equation}
\Theta^{\rm k}
=
\sum_{q,q'\in\hat G}
\ket{\omega_q}\bra{\omega_{q'}}_{12_L,3_L}
\otimes
T^{(q,q')}_{12_R,3_R},
\end{equation}
where $
T^{(q,q')}_{12_R,3_R}:
\hilbert^{(q')}_{12_R}\otimes \hilbert^{(\bar q')}_{3_R}
\longrightarrow
\hilbert^{(q)}_{12_R}\otimes \hilbert^{(\bar q)}_{3_R} $
are trace-class blocks. Taking the kinematical partial trace over subsystem $3$, the off-diagonal terms with $q\neq q'$ vanish, since different irrep spaces are orthogonal. For each $q$, the trace over $3_L$ gives $
\Tr_{3_L}\left[\Omega_q\right]
=
\frac{1}{d_q} I^{(q)}_{12_L}$. Defining $
\tau^{(q)}_{12_R}
:=
\Tr_{3_R}\left[T^{(q,q)}_{12_R,3_R}\right]
\in \mathcal B_1\left(\hilbert^{(q)}_{12_R}\right)$,
we obtain
\begin{equation}\label{eq:states-traced-physical}
\Tr_3[\Theta^{\rm k}]
=
\bigoplus_{q\in\hat G}
\frac{1}{d_q} I^{(q)}_{12_L}\otimes \tau^{(q)}_{12_R}.
\end{equation}
This is clearly an operator that is weakly-invariant under the left (gauge) action of the group. Since $T^{(q,q')}_{12_R,3_R}$ is arbitrary, so is $\tau^{(q)}_{12_R}$ and the claim is proven. \qed

\subsection{Proof of Theorem\nbs\ref{thm:PN_QI}}\label{app:proof_theorem_3}

Eq.\nbs\eqref{eq:statistic_PN_vs_QI} is equivalent to
\begin{equation}
\Tr \left[
\mathcal A^{\rm inv}_{12} \ \mathcal G^{(12)}[\Tr_3[\rho_{123}]]
\right]
=
\Tr\left[
\mathcal A^{\rm inv}_{12} \ 
\Tr_3\left[\Phi^{(123)}[\rho_{123}]\right]
\right].
\end{equation}
By Proposition\nbs\ref{prop:trace_surjectivity}, $\Tr_3\left[\Phi^{(123)}[\rho_{123}]\right ] \in \mathcal{B}_1(\hilbert_{12})^G$. Furthermore, using the non-degeneracy of the pairing between a von Neumann algebra and its predual (see App.\nbs\ref{app:proof_lemma_2}) it follows that the above condition holds for all $\mathcal{A
}^{\rm inv}_{12} \in \mathcal{B}(\hilbert_{12})^G$ if and only if
\begin{equation}
\mathcal G^{(12)}[\Tr_3[\rho_{123}]]
=
\Tr_3 \left[\Phi^{(123)}[\rho_{123}]\right],
\end{equation}
which proves the claim. \qed

\end{document}